\documentclass[aps,prx,twocolumn,showpacs,superscriptaddress,preprintnumbers,amssymb,floatfix,10pt]{revtex4-2} 
\usepackage{graphicx}
\usepackage{latexsym}
\usepackage{amsmath}
\usepackage{mathtools}
\usepackage{amsfonts}
\usepackage{rotating}
\usepackage{upgreek}
\usepackage{bm}
\usepackage{multirow}
\usepackage[shortlabels]{enumitem}
\usepackage{xcolor}
\usepackage{physics}
\usepackage{tcolorbox}
\usepackage{manfnt}
\usepackage{lipsum}
\usepackage[ruled,linesnumbered]{algorithm2e}
\usepackage{tabularx}
\usepackage{nicematrix}
\usepackage{braket}
\usepackage[normalem]{ulem}
\usepackage{tikz}
\usepackage{wrapfig}
\usepackage{setspace}
\usepackage{stmaryrd}
\usepackage{amsthm}
\usepackage[colorlinks, citecolor=blue]{hyperref}
\usepackage{orcidlink}

\newcommand{\secref}[1]{Sec.\,\ref{#1}}

\newcommand{\eqnref}[1]{Eq.\,\eqref{#1}}

\newcommand{\figref}[1]{Fig.\,\ref{#1}}

\newtheorem{definition}{Definition}
\newtheorem{lemma}{Lemma}

\newtheorem{theorem}{Theorem}
\newtheorem{corollary}{Corollary}

\newtheorem{result}{Result}

\makeatletter
\newcommand{\cyclic}[1]{\ensuremath{\cyclic@i#1,\@nil}}
\def\cyclic@i#1,#2\@nil{%
  #1%
  \ifx\relax#2\relax
  \else
    \!\shortrightarrow\!\cyclic@i#2\@nil
  \fi
}
\makeatother

\begin{document}

\title{Liouvillian Gap in Dissipative Haar-Doped Clifford Circuits}  

\author{Ha Eum Kim~\orcidlink{0009-0004-2614-0834}}
\email{Contact author: haeumkim@illinois.edu}
\affiliation{Department of Physics, University of Illinois at Urbana-Champaign, Urbana, Illinois 61801, USA}
\affiliation{Department of Electrical and Computer Engineering, University of Illinois at Urbana-Champaign, Urbana, IL 61801, USA}
\affiliation{Department of Mathematics and Research Institute for Basic Sciences, Kyung Hee University, Seoul, 02447, Korea}

\author{Andrew D. Kim~\orcidlink{0009-0003-9966-7365}} \email{Contact author: adk7@illinois.edu}
\affiliation{Department of Physics, University of Illinois at Urbana-Champaign, Urbana, Illinois 61801, USA}

\author{Jong Yeon Lee~\orcidlink{0000-0002-7387-3326}}
\email{Contact author: jongyeon@illinois.edu}
\affiliation{Department of Physics, University of Illinois at Urbana-Champaign, Urbana, Illinois 61801, USA}
\affiliation{Korea Institute for Advanced Study, Seoul 02455, South Korea}

\date{\today}% It is always \today, today,
             %  but any date may be explicitly specified

\begin{abstract}

Quantum chaos is commonly assessed through probe-dependent signatures that need not coincide.
Recently, a dissipative signature was proposed for chaotic Floquet systems, where infinitesimal bulk dissipation induces a non-zero constant intrinsic relaxation rate quantified by the Liouvillian gap.
This raises a question: what minimal departure from Clifford dynamics is required to generate such intrinsic relaxation? To address this, we study a Floquet two-qubit Clifford circuit doped with Haar-random single-qubit gates and subject to local dissipation of strength $\gamma$.
We find a structure-dependent crossover.
The undoped \textit{i}SWAP-class circuit exhibits a weak-dissipation singularity, with a gap that grows with $N$ for any $\gamma>0$. 
Haar doping preserves this undoped-like growth for any subextensive doping pattern.
At finite doping density, there exist patterns that yield an $\mathcal{O}(1)$ gap for any fixed $\gamma$ as $N\to\infty$, yet remain singular as $\gamma\to0^+$.
Because our bounds depend only on the spatial doping pattern, they remain valid even when the Haar rotations are independently redrawn each Floquet period.
Overall, our findings provide a circuit-level perspective on intrinsic relaxation—and thus irreversibility—in open many-body systems.
\end{abstract}

\maketitle

% \tableofcontents

\begin{figure*}[!t]
    \centering
    \includegraphics[width=0.98\linewidth]{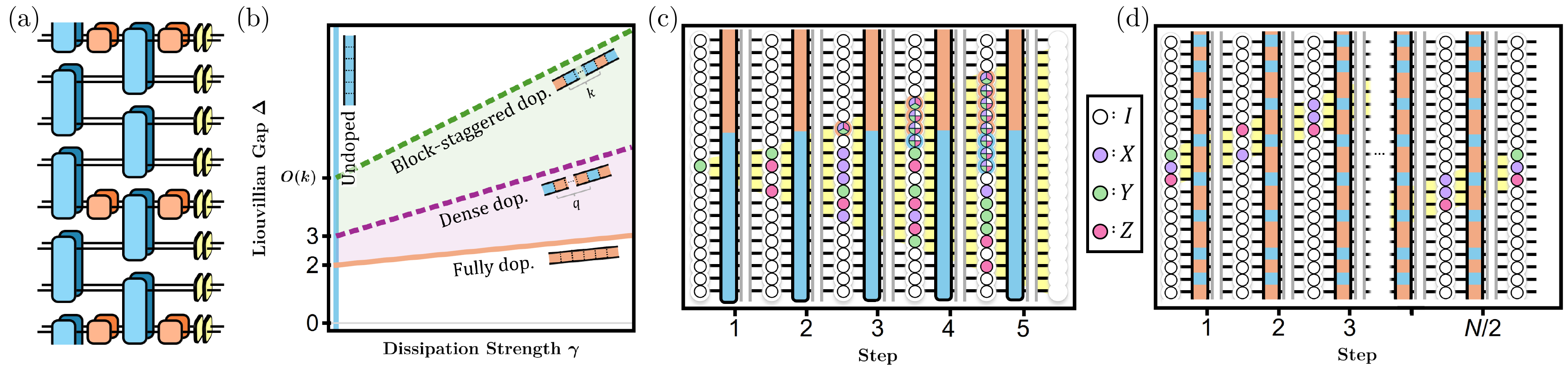}
    \caption{ {\bf Summary of analytic results.}
    Dissipative Haar-doped Floquet circuit and the weight-truncation framework for bounding the Liouvillian gap.
    (a) One Floquet period consists of a two-layer Clifford brickwork built from a fixed \textit{i}SWAP-class gate on odd and even bonds (blue), with single-qubit Haar rotations applied immediately after each layer on a chosen set of $n_h$ qubit lines (orange), followed by an onsite depolarizing layer on all qubits (yellow).
    In (b)-(d), each long rectangle represents the qubit lines, with blue and orange segments indicating undoped and doped sites, respectively.
    (b) Diagram of Liouvillian gap $\Delta$ versus dissipation strength $\gamma$ for representative doping patterns at the thermodynamic limit.
    Solid lines show exact results for fully doped (orange); dashed lines show analytic upper bounds for dense doping (light purple) and block-staggered doping with spacing $k$ between doped (green). Unlike the doped patterns, whose singularities remain independent of system size, the undoped circuit (blue) exhibits a singularity that diverges as the system size $N$ increases.
    (c,d) Lower- and upper-bound constructions. Gray capacitors denote Pauli-weight truncation projectors inserted between Floquet steps.
    (c) Representative truncation of an initially local $Y$ operator (green) once its weight exceeds $w_{\mathrm{t}}=6$ and vanishes after five steps.
    (d) Under truncation, return cycles can only be supported by Pauli strings confined to a small consecutive block, illustrated by a three-site string $YXZ$ that translates by two sites per step and returns after $N/2$ steps.
    }
    \label{fig:circuit}
\end{figure*}

\section{\label{sec:intro}Introduction}

In classical mechanics, chaos is defined at the level of trajectories based on the sensitivity to initial conditions.
In quantum many-body systems, no single definition plays the same role; quantum chaos is typically inferred from a collection of signatures, including spectral statistics~\cite{haake1991quantum,bohigas1984characterization,guhr1998random,d2016quantum,deutsch2018eigenstate,eisert2015quantum,leblond2021universality}, out-of-time-order correlation (OTOC) based measures of operator growth/scrambling~\cite{roberts2017chaos,harrow2021separation,hosur2016chaos,nahum2018operator}, or entanglement growth diagnostics~\cite{prosen2007efficiency,nahum2017quantum}.
However, these signatures can disagree~\cite{dowling2023scrambling,von2018operator,turner2018weak,pizzi2025genuine}, so no single probe is universally definitive.

The emergence of irreversibility from pure Hamiltonian dynamics is often discussed using concepts from quantum chaos, in particular the eigenstate thermalization hypothesis (ETH)~\cite{deutsch1991quantum,srednicki1994chaos,rigol2008thermalization,mori2018thermalization}.
Beyond this unitary perspective, recent work proposes an \emph{intrinsic} relaxation signature for chaotic Floquet systems~\cite{mori2024liouvillian}. 
Specifically, for a Lindbladian dynamics with dissipation strength $\gamma$, one takes system size $N\to\infty$ first and then $\gamma\to0^+$, and asks whether the \emph{Liouvillian gap} remains finite.
This weak-dissipation perspective has since been developed and extended in a variety of settings~\cite{yoshimura2024robustness,yoshimura2025theory,zhang2024thermalization,jacoby2025spectral}.
Separately, spectral form factor diagnostics have been used to probe many-body chaos through controlled nonunitary deformations~\cite{liao2022emergence}.

This signature is tied to operator spreading: without conservation laws or other hydrodynamic slow modes, rapid spreading makes local dissipation effectively act on an extensive support, yielding a finite intrinsic gap.
By contrast, integrability or (quasi-)conservation can suppress this weak-dissipation response or alter it qualitatively~\cite{garcia2023keldysh,vznidarivc2024momentum,duh2025ruelle}.

Clifford circuits occupy a singular corner of this landscape. By familiar closed-system diagnostics, most notably spectral statistics, they do not exhibit random-matrix universality, which is often taken as a hallmark of generic quantum chaos. At the same time, Clifford dynamics can display maximal spreading of local Pauli operators. 
More strikingly for our purposes, we show that it can yield an extensive Liouvillian (channel) gap under infinitesimal local dissipation. 
At first glance, this is counterintuitive: a finite intrinsic relaxation rate is often interpreted as evidence of strong many-body mixing, whereas Clifford circuits lack random-matrix universality and remain efficiently classically simulable in the stabilizer formalism, and are therefore usually regarded as far from chaotic.

Motivated by this tension, we start from a Clifford backbone and introduce non-Clifford doping as a controlled departure. Doping injects non-stabilizer resources, allowing one to interpolate between a stabilizer circuit ensemble and a Haar-random one. In many works, this interpolation is tracked through spectral and eigenstate diagnostics, which cross over toward random-matrix universality; importantly, the doping scale required is probe dependent. For instance, entanglement spectral statistics can cross over at ${\cal O}(1)$ doping~\cite{zhou2020single}, whereas higher-point OTOC-based diagnostics may require resources scaling extensively with system size~\cite{leone2021quantum}. 
Instead, in this work, we apply the intrinsic gap diagnostic to understand dissipative Haar-doped Floquet Clifford circuits, illustrated in Fig.~\ref{fig:circuit}(a) and detailed in Sec.~\ref{sec:Model}. 
We summarize our main results below.

\subsection{Summary of Results}

We begin with the undoped Clifford limit as a baseline.
In \secref{sec:non-doped}, we find that the undoped \textit{i}SWAP-class Floquet circuit has a Liouvillian gap that scales as $\Delta\propto N$ for any dissipation strength $\gamma$, whereas other locally Clifford-equivalent two-qubit gate classes do not.
For concreteness, we fix a representative \textit{i}SWAP-class Clifford gate throughout.
Once we introduce Haar doping, the dynamics is no longer purely Clifford, so the Liouvillian gap is not exactly tractable at finite $\gamma$ by stabilizer methods alone.
In \secref{sec:weak_dissipation}, we use numerics to probe the weak-dissipation regime and find the Haar-averaged gap scales as $\langle \Delta\rangle \sim N$ (see Fig.~\ref{figs:gap_scaling}(a)--(c)).
This scaling indicates that the weak-dissipation response persists under Haar doping.

To understand the crossover in system-size scaling away from the $\gamma\to0^+$ limit, we turn to the analytically tractable strong-dissipation regime.
In this regime, the effective Pauli dynamics admits a natural description in terms of a weight truncation with a cutoff set by the dissipation strength.
In \secref{sec:low_bound}, we first obtain the following lower bound on the Liouvillian gap:
\begin{result}[Strong-dissipation lower bound (informal)]
In the strong-dissipation regime $\gamma \gg 1$, for $n_h\ge 1$ Haar-doped sites placed at arbitrary positions, the Liouvillian gap satisfies
\begin{align}
\Delta \gtrsim \frac{N}{n_h}\,\gamma,
\end{align}
at the level of scaling.
\label{res:strong_dissipation_lower_bound}
\end{result}
\noindent
This bound can be viewed as a minimal-cutoff condition: if all trajectories initiated below the cutoff are eventually driven above it by Pauli-weight growth through undoped regions (see \figref{fig:circuit}(c)), then the Liouvillian gap cannot be captured within the truncated sector alone.
A precise statement and proof are given in Theorem~\ref{thm:gen_gap-ncl}.
This lower bound captures the qualitative role of doping density.
If $n_h/N\to 0$ as $N\to\infty$, it forces the gap to grow with system size, whereas at finite doping density it does not.
In \secref{sec:struc_dop}, we show that at nonvanishing doping density the Liouvillian gap can remain finite for suitable doping patterns as depicted in \figref{fig:circuit}(b).

\begin{figure*}[!t]
    \centering
    \includegraphics[width=0.98\linewidth]{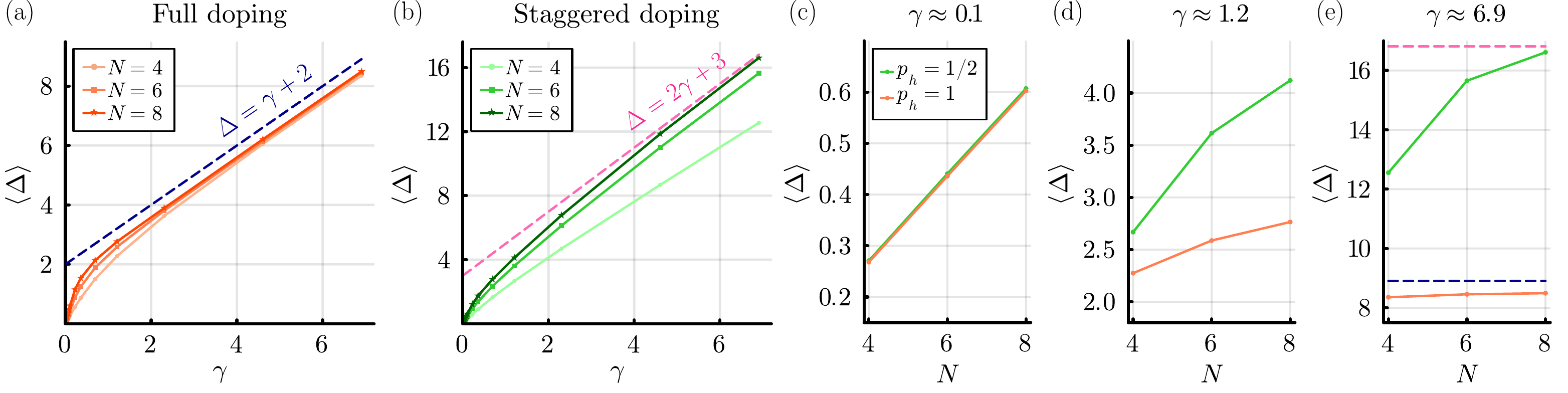}
    \caption{{\bf Summary of numerical results.} Ensemble-averaged Liouvillian gap $\langle\Delta\rangle$ of the DHFC as a function of the dissipation strength $\gamma$ at (a) full and (b) staggered doping for $N\in\{4,6,8\}$.
    Dashed lines (with accompanying formulae) give the analytic results for the gap in the thermodynamic and strong dissipation limits. 
    As $N$ increases, the discrepancy between $\langle \Delta \rangle$ and the analytic result decreases; the finite-$N$ scaling of our numerically obtained $\langle \Delta \rangle$ is shown at (c) weak, (d) intermediate, and (e) strong dissipation strengths $\gamma$. 
    }
    \label{figs:gap_scaling}
\end{figure*}

As an opposite extreme to the undoped circuit, the fully doped pattern provides a natural reference point with maximal local Haar mixing, for which the gap is obtained as $\Delta=\gamma+2$ (Theorem~\ref{thm:ful_lgap}).
For dense doping (no adjacent undoped sites), we derive an explicit upper bound $\Delta\le 3\gamma+3$ (Theorem~\ref{thm:dense}), while for block-staggered doping with $k$ consecutive undoped sites between doped sites we obtain $\Delta\le A\gamma+B$ with $A,B=O(k)$, implying an $N$-independent gap for any fixed $k$ (see \secref{subsec:block_dop}).
These upper bounds follow from how trajectories behave under the truncated effective Pauli dynamics: ballistically spreading strings inevitably traverse an undoped region, where deterministic Clifford growth pushes the Pauli weight above the cutoff and the path is removed by truncation.
By contrast, the surviving contributions are return cycles confined to a small contiguous block that evolve essentially by directed translation and close after $N/2$ steps (see \figref{fig:circuit}(d)).

Taken together, our results identify two thermodynamic regimes for the Liouvillian gap: undoped-like system-size growth versus a fully-doped-like $O(1)$ gap.
If $n_h/N\to 0$, the lower bound forces growth for any pattern, whereas at nonvanishing doping density a crossover becomes possible and suitable structures can sustain a finite gap as $N\to\infty$.
Because our bounds depend only on the spatial doping pattern, the same picture applies both to quenched Haar disorder and to Haar rotations resampled independently each Floquet period.

\section{Dissipative Haar-Doped Circuit}
\label{sec:Model}

We begin by specifying the dissipative Haar-doped circuit that will serve as our main model and by introducing the Liouvillian gap as our diagnostic of relaxation.
In \secref{sec:notation}, we introduce the basic notation used through the manuscript. 
\secref{subsec:floq_gap} defines the one-period Floquet channel and Liouvillian gap, and \secref{subsec:gamma} clarifies our convention that the dissipation strength $\gamma$ is defined independently of the system size $N$.
\secref{subsec:pdynamic_trunc} then introduces a weight-truncated Pauli description and uses it to obtain analytic upper and lower bounds.
This framework is not tied to quenched doping and can be adapted to the case where the doped single-qubit rotations are independently redrawn each Floquet period.

\subsection{Notation}
\label{sec:notation}

We work primarily in the Pauli basis and write $\{\sigma_\mu\}_{\mu=0}^3$ with $\sigma_0=I$ and $\sigma_{1,2,3}=(X,Y,Z)$.
The operator acting on site $j$ is denoted by $\sigma_\mu^j$, and we freely interchange $\sigma_{1,2,3}$ with $X,Y,Z$ when convenient.
When writing Pauli strings, we omit explicit tensor products and, when site labels are unambiguous, also drop position indices; for example, on sites $(j,j{+}1,j{+}2)$ we write $XYZ$ (or $X_jY_{j+1}Z_{j+2}$) to mean $X_j\otimes Y_{j+1}\otimes Z_{j+2}$, with identities on all other sites implied.

We also adopt the following conventions for Pauli strings, their support, and their weight.
A Pauli string on $N$ qubits is an operator of the form
\begin{align}
\label{eq:pstring}
S := \bigotimes_{j=1}^{N} \sigma_{\mu_j}
, \qquad \sigma_{\mu_j}\in\{I,X,Y,Z\},
\end{align}
where we identify strings that differ only by a global phase in $\{\pm1,\pm i\}$.
The set of such strings forms the Pauli basis, denoted \(\mathcal{P}_N\).
The support of $S$ is the set of qubits on which it acts nontrivially:
\begin{align}
\mathrm{supp}(S) := \{\, j\in\{1,\dots,N\} : \sigma_{\mu_j}\neq I \,\},
\end{align}
and the Pauli weight is the size of the support:
\begin{align}
\label{def:weight}
w(S) := |\mathrm{supp}(S)|.
\end{align}

When describing doping structures in text, we use the symbols \(\circ\) and \(\bullet\) to denote undoped and doped qubit lines, respectively.

\subsection{Floquet Channel and Liouvillian Gap}
\label{subsec:floq_gap}

We consider a one-dimensional periodic chain of \(N\) qubits~\footnote{
Boundary conditions play no substantive role in the main discussion; for clarity in the numerics and auxiliary lemmas, we use periodic boundary conditions.}.
We consider a Floquet dynamics consisting of a unitary stage followed by dissipation, see  \figref{fig:circuit} (a).
The unitary stage is a two-layer brickwork built from a fixed two-qubit Clifford gate,
\begin{align}
\label{eq:iSWAP}
    u:=i\mathrm{SWAP}\,(H \otimes H),
\end{align}
with $i\mathrm{SWAP}:=\exp\!\left[i\pi/4(X\otimes X+Y\otimes Y)\right]$.
The two layers act on odd and even bonds, respectively.
Single-qubit Haar rotations are inserted immediately after the two-qubit Clifford gate. 
We refer to theses single-qubit Haar rotations as \emph{Haar doping}.
Concretely, within each Floquet period we apply a Haar random gate on a chosen site immediately after each of the two Clifford layers, so that the same qubit line is doped once after each layer.
We denote by $n_{\mathrm{\mathrm{h}}}$ the number of doped qubit lines and by $n_{\mathrm{cl}}:=N-n_h$ the number of undoped (Clifford-only) qubit lines.
The doping density is defined as
\begin{equation}
    p_{\mathrm{h}} := \frac{n_{\mathrm{h}}}{N}.
\end{equation}
As the final step of each period we apply a site-local Pauli channel across the chain to implement dissipation. We refer to this circuit as the dissipative Haar-doped Floquet circuit (DHFC).

One feature of our circuit is that we fix the two-qubit Clifford gate to the unitary $u$ defined in Eq.~\eqref{eq:iSWAP}, which belongs to the \textit{i}SWAP class~\footnote{Ref.~\cite{grier2022classification} denotes this same local-Clifford class by FSWAP.}, one of four local-Clifford equivalence classes~\cite{grier2022classification}.
We make this choice to avoid low-depth localization artifacts that can arise in shallow two-layer brickworks, since an irreducible localization structure requires CZ class gates from the two layers to intersect so as to form the wall boundaries~\cite{kovacs2024operator}.
The \textit{i}SWAP class is instead known to exhibit ballistic spreading~\cite{bertini2019exact} and to provide fast convergence to a unitary $2$-design in gadget constructions~\cite{kong2024convergence}.
Accordingly, in Sec.~\ref{sec:non-doped} we analyze the undoped dynamics in detail as a function of the underlying Clifford gate class and provide evidence that \textit{i}SWAP-class circuits exhibit a cyclic-averaged Pauli weight that grows proportionally with the system size.

Our circuit models an experimental setting in which the system is weakly coupled to a Markovian environment, so that dissipation is well described by a local Lindblad generator.
During a coherent stage of duration $\tau_U$, the brickwork drive is applied on a timescale fast compared to the bath coupling, so dissipation is negligible and the evolution is effectively unitary.
During a subsequent dissipative stage of duration $\tau_D$, the drive is strongly detuned and the dynamics are governed by a local Lindblad generator.
We formalize this piecewise dynamics using a time-dependent Lindblad master equation~\cite{nielsen2010quantum},
\begin{equation}
\frac{d\rho}{dt}=\mathcal{L}(t)[\rho]
= -\,i[H(t),\rho]
+ \sum_{j=1}^{N}\sum_{\mu=1}^{3}\mathcal{D}[L_{j,\mu}(t)](\rho),
\label{eq:lindblad}
\end{equation}
with Floquet period $\tau=\tau_U+\tau_D$ and periodic continuation $\mathcal{L}(t+\tau)=\mathcal{L}(t)$.
Here $H(t)$ is piecewise constant,
\begin{align}
H(t)=
\begin{cases}
H_{\mathrm{drive}}, & 0\le t<\tau_U,\\
0, & \tau_U\le t<\tau,
\end{cases}
\end{align}
so that the propagator over $[0,\tau_U)$,
$U=\mathcal{T}\exp\!\big(-i\!\int_0^{\tau_U}\!dt\,H_{\mathrm{drive}}\big)$,
implements the coherent stage of the circuit.
The dissipator is $\mathcal{D}[L](\rho):=L\rho L^\dagger-\tfrac12\{L^\dagger L,\rho\}$, and we take the jump operators to be
\begin{align}
L_{j,\mu}(t)=
\begin{cases}
0, & 0\le t<\tau_U,\\[2pt]
\sqrt{\Gamma_{j,\mu}}\,\sigma^j_{\mu}, & \tau_U\le t<\tau,
\end{cases}
\qquad \Gamma_{j,\mu}\ge 0.
\end{align}
To avoid site- or axis-dependent anisotropies, we specialize to the homogeneous isotropic case $\Gamma_{j,\mu}\equiv\gamma_0$.
Over the dissipative subinterval of length $\tau_D$, the evolution reduces to
$\exp\!\big[\tau_D\,\mathcal{L}_{\mathrm{dis}}\big]$ with
$\mathcal{L}_{\mathrm{dis}}:=\gamma_0\sum_{j=1}^{N}\sum_{\mu=1}^{3}\mathcal{D}[\sigma^j_{\mu}]$,
which corresponds to an onsite depolarizing channel.
Since the resulting map depends only on $\tau_D\gamma_0$, we define the dimensionless dissipation strength $\gamma:=\tau_D\gamma_0$.

% Liouvillian gap
We introduce the one-cycle Floquet channel
\begin{align}
\label{eq:one-floq}
\Phi_F\;:=\;\mathcal{T}\exp\!\Big[\int_{0}^{\tau}\!\mathcal{L}(t)\,dt\Big].
\end{align}
Let $e^{\lambda_a\tau}$ denote the eigenvalues of $\Phi_F$, with $\lambda_0=0$ corresponding to the unique periodic steady state.
Uniqueness follows from the depolarizing stage, which mixes any initial state toward the maximally mixed state and hence renders the one-period channel primitive.
We refer to $\{\lambda_a\}$ as the Liouvillian eigenvalues in the Floquet setting.
The Liouvillian gap is defined as
\begin{align}
\Delta\;:=\;-\max_{a\neq 0}\mathrm{Re}\,\lambda_a,
\end{align}
i.e., the smallest per-period decay rate among all non-stationary modes.
Our central observable is the Liouvillian gap $\Delta$, which captures the relaxation scale of DHFC.
Throughout, the thermodynamic-limit behavior of $\Delta$ does not require an explicit Haar average over the single-qubit rotations on the doped sites.
As noted in Sec.~\ref{sec:struc_dop}, the relevant quantities are self-averaging for a typical circuit realization.
Intuitively, this is because the dependence on the doped single-qubit rotations enters through products of many independent Haar-random factors, which concentrate in the large-$N$ limit.
Nevertheless, for finite system sizes, as in Sec.~\ref{sec:weak_dissipation}, we present numerical data based on the Haar-averaged Liouvillian gap,
\begin{align}
\langle \Delta\rangle
\;:=\;\mathbb{E}_{\mathrm{Haar}}[\Delta],
\end{align}
which suppresses fluctuations across different Haar realizations, i.e., across independent choices of the single-qubit rotations on the doped sites, and improves the stability of the scaling analysis.

\subsection{Size-Independent Dissipation Strength}
\label{subsec:gamma}

As emphasized by Mori~\cite{mori2024liouvillian}, the weak-dissipation limit in chaotic open Floquet systems can be singular due to the interplay between operator spreading under the unitary dynamics and bulk dissipation.
Because dissipation accumulates over operator support, sufficiently weak local noise can generate a finite intrinsic relaxation rate once operators become system-spanning.
Consequently, while the Liouvillian gap $\Delta$ trivially closes as $\gamma\to 0$ at fixed $N$, taking the thermodynamic limit first can yield a qualitatively different outcome, so that the two limits need not commute:
\begin{align}
\lim_{N\to\infty}\lim_{\gamma\to 0}\Delta
\neq
\lim_{\gamma\to 0}\lim_{N\to\infty}\Delta .
\label{eq:noncomm_limits}
\end{align}

To interpret Eq.~\eqref{eq:noncomm_limits}, it is important to specify how the dissipation strength $\gamma$ is defined in our model.
Here we interpret the single-site depolarizing channel not as an effective coarse-grained description whose strength could in principle depend on system size, but as arising from local environments whose fluctuations are uncorrelated across sites.
Accordingly, $\gamma$ is fixed by the local system--bath coupling during a Floquet period and is specified independently of the total system size $N$.

\subsection{Effective Pauli Dynamics via Weight Truncation}
\label{subsec:pdynamic_trunc}

The one-period Floquet channel defined in Eq.~\eqref{eq:one-floq} can be written as
\begin{align}
\Phi_F := \mathcal{N}\circ \mathcal{U},
\label{eq:Floq_channel}
\end{align}
where $\mathcal{U}(\rho):=U\rho U^\dagger$ denotes the unitary circuit channel and
$\mathcal{N}$ is an on-site depolarizing layer acting independently on each qubit,
with dissipation strength $\gamma$.
We work in the Liouville-space representation of these CPTP maps.
In the Pauli-string basis $\mathcal{P}_N$, the noise layer is diagonal:
\begin{align}
\mathcal{N}(S)=e^{-\gamma\, w(S)}S,\qquad S\in\mathcal{P}_N,
\label{eq:depol_diag}
\end{align}
where $w(S)$ is the Pauli weight (see Appendix~\ref{app:dep}).

Equation~\eqref{eq:depol_diag} shows that the depolarizing layer suppresses Pauli strings exponentially in their weight, so that large-weight components are strongly attenuated and the dynamics is effectively governed by a small-weight sector of Liouville space.
This motivates a truncated description in which we retain only Pauli strings up to a cutoff weight $w_{\mathrm{t}}$.
To implement this truncation in a symmetric form around the unitary step, it is convenient to work with the symmetrized Floquet channel
\begin{equation}
\tilde{\Phi}_F := \mathcal{N}^{1/2}\circ \mathcal{U}\circ \mathcal{N}^{1/2},
\label{eq:sym_Floq_channel}
\end{equation}
which is related to $\Phi_F$ by a similarity transform and therefore shares the same nonzero spectrum.
Letting $\Pi_{w_{\mathrm{t}}}$ denote the projector onto $\mathrm{span}\{P\in\mathcal{P}_N:\, w(P)\le w_{\mathrm{t}}\}$, we define the truncated map
\begin{equation}
\tilde{\Phi}^{\mathrm{trunc}}_{w_{\mathrm{t}}}
:= \Pi_{w_{\mathrm{t}}}\,\tilde{\Phi}_F\,\Pi_{w_{\mathrm{t}}}.
\label{eq:trunc_Floq_channel}
\end{equation}

With $w_{\mathrm{t}}$ determined by $\gamma$, weight truncation provides a tractable framework for deriving upper and lower bounds on the Liouvillian gap in the strong dissipation regime.
For the lower bound, we use that propagation through the undoped region tends to drive operators out of the low-weight sector.
Under the fixed brickwork generated by $u$, Pauli strings propagate without branching, and their weight typically grows past the cutoff $w_{\mathrm{t}}$.
By contrast, at doped sites the single-qubit rotations mix the local Pauli basis and split an operator into multiple Pauli components, among which some can remain within the low-weight sector.
This approach is illustrated in \figref{fig:circuit} (c) where, in a contiguous circuit with a doped block (orange) and an undoped block (blue), an initially local $Y$ operator (green) spreads under successive Floquet steps and is truncated once its weight exceeds $w_{\mathrm{t}}=6$.
In particular, suppose there exists a finite $k$ such that every Pauli string in $\mathcal{P}_{\le w_{\mathrm{t}}}$ is driven above the cutoff under finite $k$ iterations.
Here $\mathcal{P}_{\le w_{\mathrm{t}}}$ denotes the set of $N$-qubit Pauli strings with weight at most $w_{\mathrm{t}}$.
Then the truncated propagator is nilpotent,
\begin{equation}
\big(\tilde{\Phi}^{\mathrm{trunc}}_{w_{\mathrm{t}}}\big)^k = 0,
\label{eq:nilpotent_trunc}
\end{equation}
so the low-weight sector supports no nontrivial eigenvalues.
Consequently, any slow decay of the full channel must involve excursions into higher-weight sectors, yielding a lower bound on the Liouvillian gap in the strong dissipation regime.
A precise formulation and proof are given in Sec.~\ref{sec:low_bound}.

We now outline the strategy behind our upper bound, with the detailed derivation for the present circuit deferred to Sec.~\ref{sec:struc_dop}.
The key step is to determine the minimal cutoff $w_{\mathrm{trunc}}$ for which the truncated Floquet dynamics $\tilde{\Phi}^{\mathrm{trunc}}_{w_{\mathrm{t}}}$ becomes non-nilpotent, meaning that the truncated Pauli dynamics supports a return cycle.
At this cutoff, the unitary step together with the projection induces a directed support graph on Pauli strings in $\mathcal{P}_{\le w_{\mathrm{trunc}}}$, and the overlap
$\Tr\!\left[P\,(\tilde{\Phi}^{\mathrm{trunc}}_{w_{\mathrm{t}}})^L(P)\right]$
is obtained by summing the contributions of all $L$-step return paths from the Pauli string $P$ back to itself.
To identify which paths can contribute, we use the constraint imposed by truncation.
Once a Pauli string spreads ballistically, its support inevitably enters the undoped region, where the deterministic Clifford propagation drives rapid weight growth and the component is removed by truncation.
Consequently, the only surviving paths are confined to a small block of consecutive sites and evolve essentially by translating in a fixed direction.
\figref{fig:circuit} (d) illustrates this picture for a circuit with an undoped island structure where a Pauli string supported on three consecutive sites translates upward by two sites per truncated Floquet step and returns to the same Pauli string at its original location after $N/2(=L)$ steps.

When such return paths repeat with the same structure under successive traversals, their contributions can accumulate coherently across iterations.
Gelfand’s formula then relates the resulting long-time growth to the nontrivial spectral radius of the truncated propagator,
\begin{equation}
\rho\!\left(\tilde{\Phi}^{\mathrm{trunc}}_{w_{\mathrm{t}}}\right)
\ge
\left|\Tr\!\left[P\,(\tilde{\Phi}^{\mathrm{trunc}}_{w_{\mathrm{t}}})^L(P)\right]\right|^{1/L}.
\label{eq:spec_low_bound}
\end{equation}
Since the gap is the slowest per-period decay rate, the bound in Eq.~\eqref{eq:spec_low_bound} directly translates into an upper bound on the gap.
We evaluate this upper-bound estimate across representative doping structures, and \figref{fig:circuit}(b) summarizes the resulting Liouvillian gap $\Delta$ as a function of the dissipation strength $\gamma$ at the thermodynamic limit $N\to\infty$.
For the fully doped circuit (orange solid line) and the staggered pattern in which doped and undoped sites alternate, the minimal cutoff weights are $w_t=1$ and $w_t=2$, respectively, so that a unique return cycle exists within the truncated dynamics and yields the exact gaps $\Delta=\gamma+2$ and $\Delta=2\gamma+3$.

For more general patterns, our construction yields explicit upper bounds.
These are shown in \figref{fig:circuit}(b) as the light-purple dashed line for dense doping, where every undoped site is separated by at least one doped site, and as the green dashed line for block-staggered doping, where doped sites are separated by $k$ undoped sites.
For dense doping, we find $\Delta\le 3\gamma+c$ with $c=O(1)$, emphasizing that the singularity is similar in form to the fully doped case in that in both cases the gap does not grow with system size.
For block-staggered doping, we obtain a bound of the form $\Delta\le A\gamma+B$ where the dissipative slope $A$ scales at
most linearly with $k$.
Since the doping density scales as $p_h=1/(k+1)$, this result suggests that a system-size-independent gap can persist even when the doping becomes parametrically sparse.
Taken together, the doped structures above yield strong dissipation singularities that remain finite in the thermodynamic limit, whereas the undoped circuit shown as the blue solid line has $\Delta=(N/2)\gamma$ and hence diverges in the thermodynamic limit as $N\to\infty$.

Note that the upper and lower bounds derived above are determined by the spatial doping pattern of the one-period Floquet channel.
They follow from how this pattern constrains the coupling between Pauli weight sectors under the unitary layer followed by weight truncation, rather than from reusing the same local Haar rotations each period.
Accordingly, the same bounds apply both for quenched Haar disorder and for Haar rotations resampled independently at each Floquet step.
From this perspective, our bounds depend on how the doping pattern affects the Pauli spreading dynamics generated by the Clifford backbone.

\section{Non-doped Clifford Circuit}
\label{sec:non-doped}

We begin with the non-doped dissipative Clifford Floquet circuit as a solvable reference point among the doping structures considered in this work.
In this setting the Liouvillian gap can be obtained throughout the full dissipation-strength regime without invoking the truncation method of Sec.~\ref{subsec:pdynamic_trunc}, since the Clifford layer acts as a signed permutation on Pauli strings.
In Sec.~\ref{subsec:non_doped_eigen} we show that the Floquet spectrum decomposes into Clifford orbits, with eigenvalues determined by the orbit-averaged Pauli weight.
In Sec.~\ref{subsec:abs_local} we then examine how the choice of the underlying Clifford gate affects the presence or absence of a localized spectral sector in this shallow-depth setting.
For \textit{i}SWAP-class circuits, we find no deterministic localized sector, which leads to a gap that grows linearly with the system size.

\subsection{Eigenspectrum}
\label{subsec:non_doped_eigen}

In this subsection, we show that throughout the full dissipation regime, the eigenspectrum of the non-doped Floquet channel $\Phi_F$ is proportional to the orbit-averaged Pauli weight $\bar w(S)$.
For the non-doped setting described by Eq.~\eqref{eq:Floq_channel}, each Floquet cycle consists of a unitary evolution generated by a Clifford circuit, followed by a dissipative interval governed by local depolarization. The resulting one-period channel takes the form
\begin{align}
\Phi_F := \mathcal{N} \circ \mathcal{C},
\end{align}
where \(\mathcal{C}\) is the Clifford circuit channel.

The action of the Clifford channel \(\mathcal{C}\) is particularly simple in the Pauli basis. Since Clifford unitaries conjugate Pauli strings to Pauli strings up to a sign, the channel acts as a permutation on \(\mathcal{P}_N\). Specifically, for any \(S \in \mathcal{P}_N\), we have
\begin{align}
\mathcal{C}(S) = CSC^\dagger = \pm S',
\end{align}
where \(S' \in \mathcal{P}_N\) and \(C\) is the unitary circuit generating \(\mathcal{C}\). As \(\mathcal{P}_N\) is finite, repeated application of \(\mathcal{C}\) generates a closed orbit. That is, for each \(S_0 \in \mathcal{P}_N\), there exists a minimal integer \(L\) such that
\begin{align}
\mathcal{C}^L(S_0) = \pm S_0.
\end{align}
This structure partitions the Pauli basis into disjoint orbits, within which the action of \(\mathcal{C}\) amounts to a permutation. Accordingly, \(\mathcal{C}\) admits a block-diagonal form in the Pauli basis, with each block corresponding to a distinct orbit.
In contrast to the permutation action of the Clifford channel, the depolarizing noise channel $\mathcal{N}$ acts diagonally in the Pauli basis. 
As derived in Appendix~\ref{app:dep}, each Pauli string $S$ is simply rescaled as $\mathcal{N}(S)=\exp[-2\gamma w(S)]S$.

Building on the Pauli-string basis representation of the Clifford and depolarizing channels, the eigenspectrum of the Floquet map \(\Phi_F\) can be directly characterized.
Each Pauli string \(S\) undergoes a closed orbit of length \(L\) under repeated Clifford conjugation, with its amplitude damped at each step according to the Pauli weight of the intermediate string.
After completing the orbit, we find
\begin{align}
    \Phi_F^L[S] 
    &= (\mathcal{N} \circ \mathcal{C})^L [S] \nonumber \\
    &= \left[ \prod_{j=0}^{L-1} e^{-\gamma w(\mathcal{C}^j[S])} \right] S.
\end{align}
This relation shows that \(S\) is an eigenoperator of \(\Phi_F^L\), and hence contributes to the spectrum of \(\Phi_F\) itself.
The corresponding Liouvillian eigenvalue has real part
\begin{align}
    -\mathrm{Re}[\lambda_S] = \gamma\bar{w}(S),
    \label{eq:eign_mag}
\end{align}
where the averaged Pauli weight over the orbit is defined by
\begin{align}
    \bar{w}(S) := \frac{1}{L} \sum_{j=0}^{L-1} w(\mathcal{C}^j[S]).
    \label{def:avg_weight}
\end{align}
The eigenoperator associated with \(\mathrm{Re}[\lambda_S]\) is supported entirely within the orbit generated by \(S\). The degeneracy of \(\mathrm{Re}[\lambda_S]\) is at least the size of the orbit, although further degeneracy may arise from other orbits yielding the same averaged weight.

\subsection{Absence of Localization in \textit{i}SWAP-class Circuits}
\label{subsec:abs_local}

This subsection examines the Pauli-orbit structure in \textit{i}SWAP-class Floquet Clifford circuits.
At fixed system size, our numerics show that initially prepared Pauli strings gain Pauli weight roughly in proportion to the number of Floquet cycles.
The trend is clean over the pre-boundary window, before the spreading front reaches the system edges.
These observations motivate a working assumption: the gap-setting eigenmode is delocalized in Pauli space, with orbit-averaged weight that is extensive in the system size.
Intuitively, start from a string with a clear left and right endpoint; as long as the front has room to grow, each cycle adds an approximately constant number of non-identity sites, so the total weight tracks the cycle count.

\subsubsection{Two-qubit Clifford gates classification}
\label{sec:classification}

\begin{figure}
    \centering
    \includegraphics[width=0.98\linewidth]{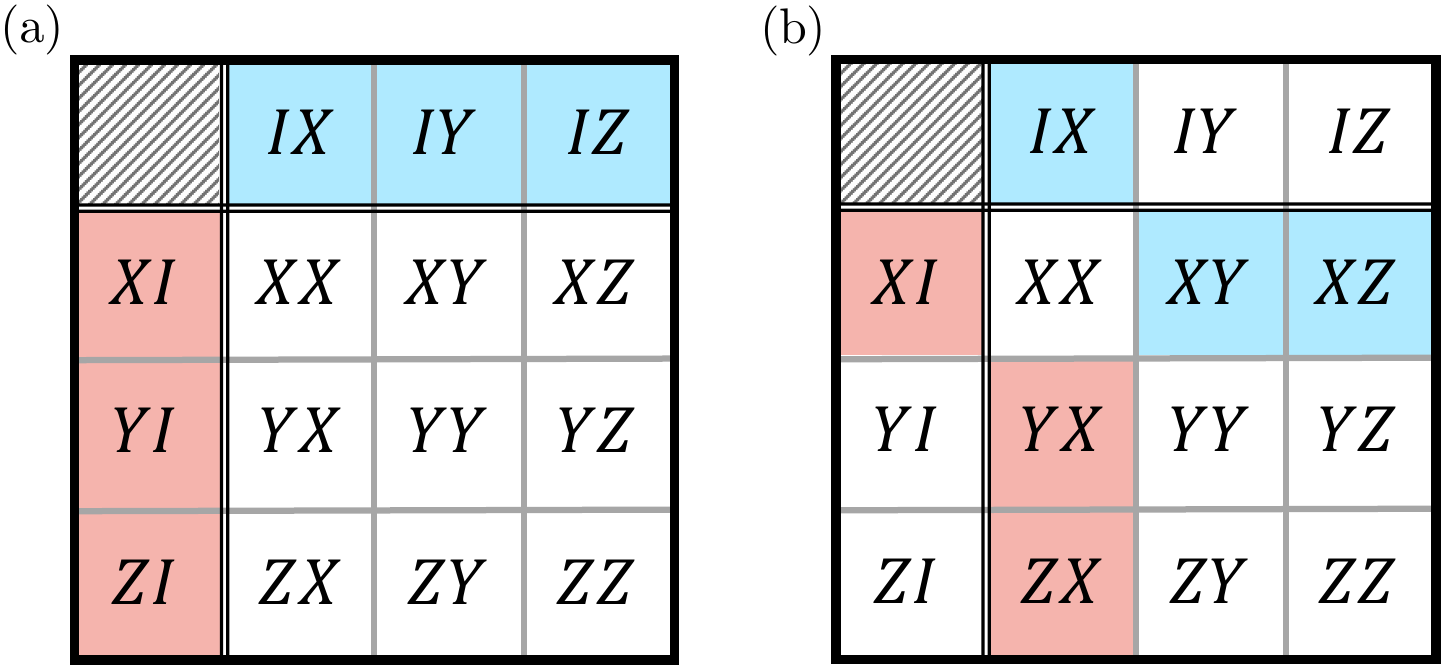}
    \caption{
    {\bf Two-qubit Clifford mapping patterns.}
    Local-Clifford classification of two-qubit Clifford gates from their action on single-site Paulis.
    The $16$ two-qubit Paulis are placed on a $4\times4$ grid, and the images of $\{X_1,Y_1,Z_1,X_2,Y_2,Z_2\}$ under conjugation are marked, with red (blue) indicating support on qubit~1 (2).
    Two patterns occur.
    (a) Trivial pattern, where single-site Paulis remain single-site supported.
    (b) Nontrivial pattern, where two single-site Paulis per qubit map to two-site Paulis within a common row or column and the third maps to the intersection single-site Pauli.
}
\label{fig:com_rel}
\end{figure}

We briefly recall the local-Clifford (LC) classification of two-qubit Clifford gates and introduce a structural invariant that distinguishes the \textit{i}SWAP class as the natural choice for avoiding orbit localization in brickwork circuits.

The LC classification can be read off from the conjugation action of a two-qubit Clifford unitary on single-site Pauli operators.
Because Clifford conjugation preserves commutation and anticommutation, any two-qubit Clifford $C$ maps the six single-site Paulis $\{X_1,Y_1,Z_1,X_2,Y_2,Z_2\}$ to six two-qubit Pauli operators with the same bipartite commutation structure.
Representing the 16 two-qubit Paulis on a $4\times4$ grid, one finds that only two distinct mapping patterns are possible.
In the trivial pattern shown in \figref{fig:com_rel}(a), the images of the single-site Paulis remain single-site supported.
In the nontrivial pattern shown in \figref{fig:com_rel}(b), for each qubit two of the three single-site Paulis are mapped to two-site Paulis that lie within a common row or column of the grid, while the remaining one is mapped to a single-site Pauli at the corresponding intersection.

Refining these two patterns by whether the map preserves or exchanges the qubit labels yields four LC equivalence classes with canonical representatives $\{I,\mathrm{SWAP},\mathrm{CZ}, i\mathrm{SWAP}\}$.
We denote the LC class of a representative $D$ by
\[
G_D := \{C \in \mathcal{C}_2 : C \simeq D\}.
\]
Within each of the two patterns in \figref{fig:com_rel}, the remaining distinction is whether the map preserves or exchanges the site labels of single-site Paulis, as indicated by the red and blue labels in the figure.
Accordingly, the pattern in \figref{fig:com_rel}(a) splits into the class $G_I$, for which the site labels are preserved, and the class $G_{\mathrm{SWAP}}$, for which the site labels are exchanged.
Likewise, the pattern in \figref{fig:com_rel}(b) splits into the class $G_{\mathrm{CZ}}$, which preserves site labels, and the class $G_{i\mathrm{SWAP}}$, which exchanges them.

In particular, this LC classification singles out two structural constraints on Pauli propagation, one ensuring the existence of strictly local modes and the other enforcing Pauli-weight conservation.
\begin{enumerate}
    \item \textit{Local support preservation}: Gates in $G_I$ and $G_{\mathrm{CZ}}$ preserve single-site support under conjugation.
    In particular, on each qubit there exists at least one single-site Pauli that remains supported on that same qubit after conjugation.
    In a brickwork circuit this can lead to commuting overlaps between consecutive layers, for instance
    \begin{align}
    \mathrm{CZ}_{12}\mathrm{CZ}_{23}
    =
    \mathrm{CZ}_{23}\mathrm{CZ}_{12},
    \nonumber
    \end{align}
    which inhibits spatial spreading and allows certain Pauli orbits to remain localized under circuit evolution, thereby generating a localized spectral sector.
    
    \item \textit{Weight preservation}: Gates in $G_I$ and $G_{\mathrm{SWAP}}$ preserve Pauli weight under conjugation. 
    That is, although a Pauli string may be transported along the chain, the number of non-identity sites is conserved and cannot increase.
    As a result, circuits built entirely from such gates cannot generate operators with extensive weight and therefore cannot support eigenmodes with extensive spatial support.
\end{enumerate}
By contrast, the \textit{i}SWAP class satisfies neither constraint. 
Under conjugation by an \textit{i}SWAP-class gate, single-site Pauli operators spread across the two qubits and can increase their Pauli weight.
As a result, brickwork circuits built from \textit{i}SWAP-class gates avoid localized sectors and allow Pauli strings to grow to extensive weight.
We therefore numerically characterize the orbit-averaged Pauli weights generated by shallow brickwork circuits built from different two-qubit Clifford gate sets.

\subsubsection{Eigenspectrum dependence on localization structure}

\begin{figure}
    \centering
    \includegraphics[width=0.98\linewidth]{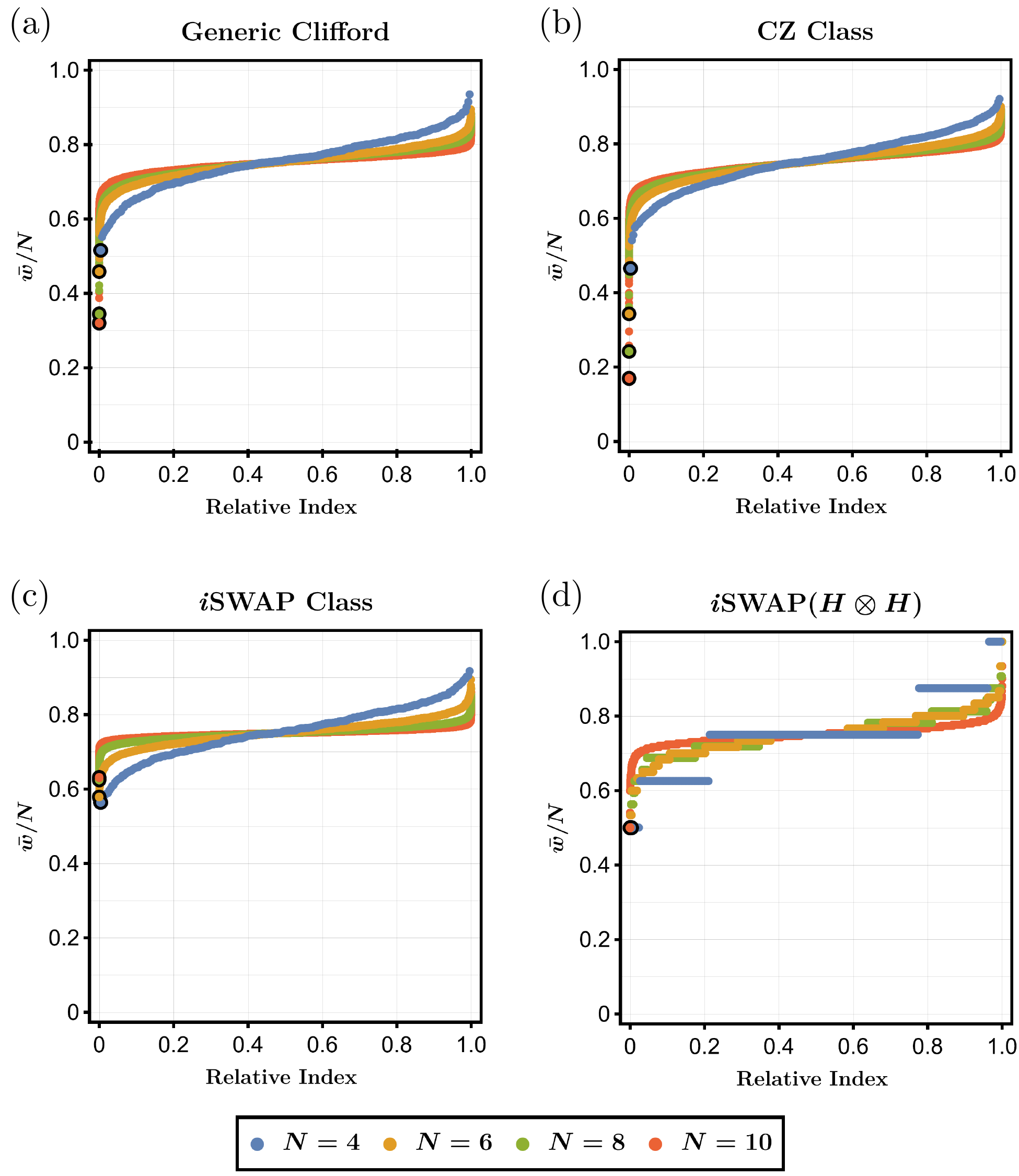}
    \caption{
    {\bf Orbit-averaged Pauli weights} for dissipative brickwork Floquet Clifford circuits.
    For each nontrivial Pauli string $S$, we compute the orbit-averaged weight $\bar w(S)$ and plot the value normalized by the system size, $\bar w(S)/N$, sorted in increasing order and labeled by the relative index.
    Black-circled markers indicate the smallest values for each $N$.
    (a) Generic two-qubit Clifford gate set.
    (b) CZ class gate set.
    (c) \textit{i}SWAP-class gate set.
    (d) \textit{i}SWAP-class circuit with all two-qubit gates fixed to $i\mathrm{SWAP}(H\otimes H)$.
    A small low-$\bar w$ sector is visible in (a) and (b), while it is absent in the \textit{i}SWAP class cases (c) and (d).}
    \label{figs:orbit}
\end{figure}

Figure~\ref{figs:orbit}(b) shows that CZ class circuits retain a small low-weight sector at the left end.
The smallest values of $\bar w(S)/N$ are highlighted by black-circled markers for each $N$, and their downward shift with increasing $N$ indicates that these strings remain localized rather than spreading to extensive weight.
Figure~\ref{figs:orbit}(a) shows the same qualitative behavior for a generic Clifford gate set, albeit with a  weaker low-weight sector.
Indeed, the localized structure is expected with probability $1/9$ in the CZ class, while in the generic Clifford ensemble used for \figref{figs:orbit}(a) it is further suppressed to $(1/9)(9/20)^2$, reflecting the probability for the relevant two-qubit gates to fall in the CZ class~\cite{kovacs2024operator}.
Through Eq.~\eqref{eq:eign_mag}, the presence of this localized weight sector $\bar w(S)=O(1)$ implies that the Liouvillian gap remains bounded by an $N$ independent constant.

We now turn to the \textit{i}SWAP class.
A low-$\bar w$ orbit can in principle arise if the single site to single site mappings align across consecutive bonds, allowing a weight-one Pauli to hop along the chain and yield an orbit with $\bar w(S)=1$.
For example, if all two-qubit gates are fixed to \textit{i}SWAP, then $Z_1\to Z_3\to\cdots\to Z_{N-1}$ and $Z_2\to Z_4\to\cdots\to Z_N$ form weight-one orbits.
However, the probability of forming such an aligned chain decays exponentially with $N$, so it does not generate a finite low-$\bar w$ sector in typical realizations.
Consistent with this, in the \textit{i}SWAP-class case in \figref{figs:orbit}(c) we find no localized sector at low $\bar w$.
In particular, the black-circled markers do not drift downward with increasing $N$, unlike in the CZ class case.
This indicates that typical Pauli orbits are not confined to a low-weight subspace but instead explore a broad range of Pauli weights.
This behavior is consistent with a Liouvillian gap that grows linearly with system size, $\Delta \sim N\gamma$.

For the remainder of this work, we fix the two-qubit gate to the representative in Eq.~\eqref{eq:iSWAP}.
This choice removes the fine-tuned weight-one chains discussed above and enforces strong spreading at the single-site level.
That is, the fixed gate is chosen so that a single-site Pauli transferred to a neighboring site (after the first Clifford layer) cannot remain single-site supported after the completion of the full Floquet step.
For example, if the first layer maps $X_1$ to $Z_2$, then the second necessarily maps $Z_2$ to a two-site Pauli operator.
As shown in \figref{figs:orbit}(d), this fixed-gate circuit exhibits the same qualitative behavior as the \textit{i}SWAP-class case in \figref{figs:orbit}(c).
In particular, the smallest normalized orbit-averaged Pauli weights take the exact value $1/2$, as depicted by the black-circled markers in \figref{figs:orbit}(d).
The corresponding orbits follow a simple alternating pattern, whose only dependence on $N$ is the string length.
Concretely, they are generated by the two-periodic strings
\begin{align}
\{\, (IY)^{\otimes N/2},\; (YI)^{\otimes N/2} \,\}.
\end{align}
As a result, for the DHFC with $n_h=0$, the Liouvillian gap takes the exact form
\begin{align}
\Delta_{\mathrm{non\text{-}doped}}=\frac{N}{2}\gamma \label{eq:gap_non-doped}
\end{align}
throughout the full dissipation range.

Therefore, in the absence of doping, the Liouvillian gap in \eqnref{eq:gap_non-doped} grows linearly in $N$ for any $\gamma$, so already the thermodynamic limit $N \rightarrow \infty$ and weak dissipation limit $\gamma \rightarrow 0^+$ do not commute.
Under the gap diagnostic, the undoped Clifford circuit exhibits the strongest intrinsic relaxation in our setting.
In some sense, this is surprising because Clifford circuits are classically efficiently simulable. The resolution is that the Liouvillian gap probes dissipative relaxation rather than unitary circuit complexity.

\section{Weak Dissipation Singularity}
\label{sec:weak_dissipation}

In this section, we study the Liouvillian gap of the DHFC under finite doping density of single-site Haar-random gates with dissipation strength $\gamma$.
We provide numerical evidence for a gap singularity across all doping densities $p_h = n_h/N$, suggesting that our model exhibits chaotic many-body dynamics.
Finally, we compare our numerics to the bounds obtained analytically for the strong dissipation regime, see \secref{sec:struc_dop}.

In \figref{figs:gap_scaling}, we show the results of our numerical simulations 
~\footnote{We use Lanczos-type algorithms to calculate the spectral gap of a given realization of the DHFC, then average over many realizations.
Standard implementations of the Lanczos algorithm are sufficient to compute $\Delta$ for $\gamma \gtrsim 1$. 
In the weak dissipation regime, we instead employ an approach similar to that outlined in \cite{yoshimura2024robustness}; starting from a uniform and traceless operator, we repeatedly apply the Floquet channel $\Phi$ and renormalize the resulting operator until sufficient convergence to a fixed operator, the slowest decaying mode, is achieved.}
for full and staggered doping configurations for $N \in \{4,6,8\}$. 
In the weak dissipation regime, \figref{figs:gap_scaling}(c) shows that the linear growth of $\langle \Delta \rangle$ with $N$ persists.
In the strong dissipation regime, \figref{figs:gap_scaling}(e) shows that the gap saturates to a finite value, contrary to the undoped case where it continues to scale linearly in $N$.
In fact, we find close agreement of our numerics with our analytic bounds for large but finite $\gamma$, particularly as $N$ increases, and we expect that the discrepancy approaches zero as $N$ increases. 
That is, when $n_h = O(N)$, the gap takes the form $\langle \Delta \rangle = A\gamma + B$ where $A,B = O(1)$. From the behavior of $\Delta$ in \figref{figs:gap_scaling}(a), it is expected that the gap increases rapidly with increasing $N$ for some small $\gamma$, saturating to a constant value.
Such behavior is unfortunately not observed for the small values of $\gamma$ accessible via our numerics, but already at intermediate values of $\gamma$, we see that the growth of $\langle \Delta \rangle$ is decreasing, which suggests the saturation of $\langle \Delta \rangle$ to the value predicted analytically.

We argue that the numerical results presented in \figref{figs:gap_scaling}, in conjunction with Result \ref{res:strong_dissipation_lower_bound}, imply a singularity in the Liouvillian gap in the thermodynamic limit for any nonzero $p_h$.
For such doping patterns, our expectation is that as the maximum system size $N$ increases, the threshold $\gamma^*$ below which the gap scales linearly in $N$ will decrease. 
Then, at any nonzero $\gamma$, the gap saturates to a finite value as $N$ increases, so $\Delta_{\gamma \rightarrow 0^+}$ will retain a finite value in the thermodynamic limit (but $\Delta_{\gamma = 0} =0$, because $\gamma=0$ corresponds to the unitary case in which the eigenvalues of the Floquet channel have modulus 1).

Thus, from the perspective of singularity diagnostics developed in recent work~\cite{mori2024liouvillian,yoshimura2025theory}, our numerics provide evidence that the dynamics of our model remain chaotic across all doping rates. 
Compared to the undoped case, at a finite doping rate, the gap retains a finite value in the thermodynamic limit, indicating a `weaker' instance of intrinsic relaxation than when $\lim_{N\rightarrow \infty} p_h = 0$.
In light of this, the following sections work in the strong dissipation regime to classify doping structures of the DHFC based on the scaling of their Liouvillian gap in the thermodynamic limit.

\section{Lower Bound for Doped Circuits via Weight Truncation}
\label{sec:low_bound}

This section bounds the Liouvillian gap from below in terms of the Haar-doping density.
We first establish in Sec.~\ref{subsec:eigen_free} a general criterion.
When the weight-truncated map supports no nontrivial eigenmodes, the Liouvillian gap of the untruncated channel is lower bounded at order \(w\gamma\).
We apply this criterion to contiguous doping structures in Sec.~\ref{subsec:cont_struc}, which yields an explicit lower bound on the gap as a function of the number of doped sites.
Finally, Sec.~\ref{subsec:gen_struc} generalizes the contiguous analysis to arbitrary doping patterns.
Using a pigeonhole argument, we derive a necessary condition on the doping density for sustaining a finite gap in the strong dissipation regime.

\subsection{Criterion for the Absence of Modes in the Truncated Subspace}
\label{subsec:eigen_free}

In Sec.~\ref{sec:non-doped}, we showed that in the undoped \textit{i}SWAP-class circuit the orbit-averaged Pauli weight is extensive in \(N\).
As a consequence, no Pauli orbit can remain confined to any fixed weight cutoff \(w_{\mathrm{t}}\).
Equivalently, for any Pauli string $S$ with \(w(S)\le w_{\mathrm{t}}\), repeated Floquet iterations must produce an iterate whose weight exceeds \(w_{\mathrm{t}}\) and eventually reaches \(O(N)\) along the orbit.
In the strong dissipation regime, we analyze the Liouvillian gap through the projected dynamics introduced in Sec.~\ref{subsec:pdynamic_trunc}, where after each step we retain only the weight below \(w_{\mathrm{t}}\) components.
Viewed in this truncated description, the finite-time escape of all low-weight operators is naturally captured by nilpotency of the projected map.
This criterion provides a convenient bridge from the undoped Clifford limit to Haar-doped circuits, where a simple orbit picture is no longer available.
Motivated by this, we introduce a notion of circuits that support no nontrivial modes in the weight-truncated subspace.
\begin{definition}[weight-\(w\) eigenmode-free]
\label{def:eigenmode-free}
Let \(\Phi\) be the Liouville-space representation of an \(N\)-qubit dissipative channel in the Pauli basis.
We say that \(\Phi\) is weight-\(w\) eigenmode-free if the projected map \(\Pi_{w}\Phi\Pi_{w}\) is nilpotent, where \(\Pi_{w}\) projects onto Pauli strings of weight at most \(w\).
\end{definition}
As an immediate consequence of Definition~\ref{def:eigenmode-free}, we can classify our undoped (fixed-gate) Clifford Floquet circuit as eigenmode-free up to an extensive weight cutoff:
\begin{corollary}
\label{cor:nondop_eigenfree}
The non-doped dissipative Clifford Floquet circuit built from the fixed two-qubit gate \(i\mathrm{SWAP}(H\otimes H)\) is weight-\((N/2-1)\) eigenmode-free.
\end{corollary}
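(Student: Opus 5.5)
We derive the corollary from the orbit structure of Sec.~\ref{subsec:non_doped_eigen} and the orbit-averaged weights of Sec.~\ref{subsec:abs_local}. For the undoped circuit, $\Phi_F=\mathcal{N}\circ\mathcal{C}$, and the symmetrized channel $\tilde\Phi_F$ acts in the Pauli basis as a signed permutation times a positive diagonal matrix: $\tilde\Phi_F(S)=\pm c_S\,\mathcal{C}(S)$ with $c_S>0$. Set $w=N/2-1$. Then $M:=\Pi_w\tilde\Phi_F\Pi_w$ is a weighted, signed partial permutation on the finite set $\mathcal{P}_{\le w}$. Its support graph has out-degree at most one and in-degree at most one, since $\mathcal{C}$ is a bijection on $\mathcal{P}_N$. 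Such a graph is a disjoint union of directed paths and directed cycles. A cycle in this graph occurs exactly when some full Clifford orbit $\{\mathcal{C}^j(S)\}_{j=0}^{L-1}$ lies entirely inside $\mathcal{P}_{\le w}$.

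The first step is to show that, in the absence of cycles, $M$ is nilpotent. Along any path $S\to\mathcal{C}(S)\to\cdots$, the orbit must leave $\mathcal{P}_{\le w}$ after finitely many steps, and at that point the projector kills the component. Hence $M^k(S)=0$ for every basis string $S$ once $k>|\mathcal{P}_{\le w}|$, and therefore $M^k=0$. The identity string must be excluded (or the map restricted to traceless operators) because $I$ has weight $0$ and is a fixed point. I will state this convention explicitly, in line with the paper's focus on the nontrivial spectrum.

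The second step is to exclude cycles. A cycle would be an orbit with $\max_j w(\mathcal{C}^j S)\le N/2-1$. Such an orbit would have $\bar w(S)\le N/2-1<N/2$. This contradicts the statement from Sec.~\ref{subsec:abs_local} that, for the fixed gate $i\mathrm{SWAP}(H\otimes H)$, the minimal orbit-averaged weight is exactly $N/2$, attained by the orbits of $(IY)^{\otimes N/2}$ and $(YI)^{\otimes N/2}$. The same statement is what gives $\Delta_{\text{non-doped}}=(N/2)\gamma$ in Eq.~\eqref{eq:gap_non-doped}. Every nontrivial orbit therefore contains at least one string of weight at least $N/2$, and hence leaves $\mathcal{P}_{\le N/2-1}$.

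The main obstacle is that the bound $\min_S\bar w(S)=N/2$ is supported in the text by numerics (Fig.~\ref{figs:orbit}(d)) rather than by a proof. If a rigorous argument is required, I would attempt one directly. The tool is the local update rule of $u$ on two sites, tracked over the two-layer brickwork. The goal is to show that a nontrivial string whose weight never exceeds $N/2-1$ cannot close its orbit. The first thing I would try is a light-cone argument: each endpoint of a string's support moves ballistically by two sites per period under $u$, and no single-site Pauli survives a full period single-site supported. From these two facts I would try to show that $\max_j w(\mathcal{C}^jS)\ge N/2$, which is the weaker statement actually needed here. Otherwise, I would take the numerically established minimum as given, as the paper does.
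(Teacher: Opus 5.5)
Your argument is essentially the paper's. The paper presents the corollary as an immediate consequence of two ingredients: the orbit decomposition of Sec.~\ref{subsec:non_doped_eigen}, and the numerically established minimum orbit-averaged weight $\bar w(S)=N/2$ from Sec.~\ref{subsec:abs_local}, which forces every nontrivial orbit to leave $\mathcal{P}_{\le N/2-1}$ and so makes the projected map nilpotent. Your explicit partial-permutation bookkeeping, and your remark that the identity string must be excluded from $\Pi_w$ because it is a fixed point, are correct refinements of points the paper leaves implicit.
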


\begin{figure}
    \centering
    \includegraphics[width=0.98\linewidth]{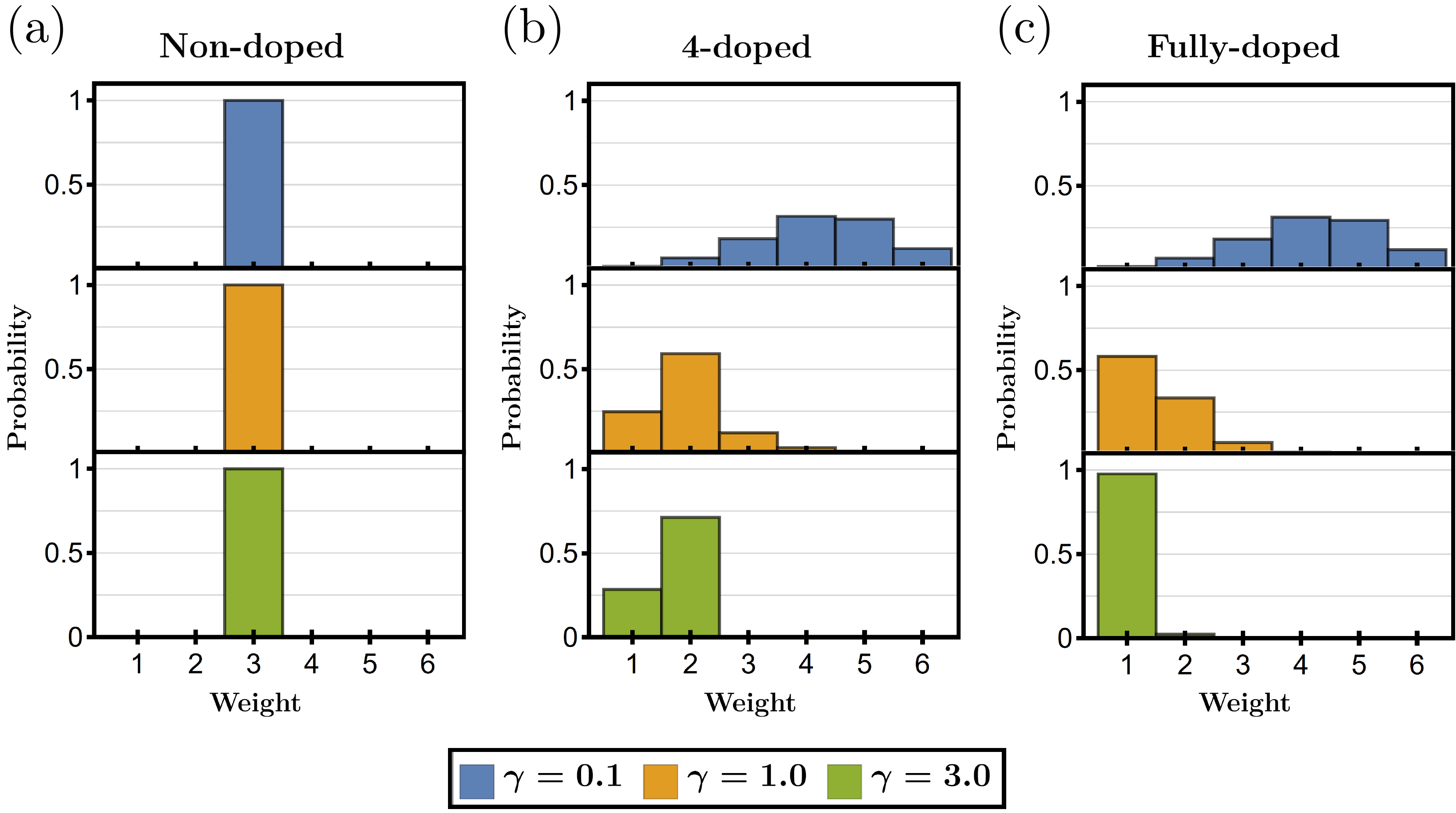}
    \caption{
    {\bf Weight distribution of the eigenoperator.}
    Pauli-weight content of the Liouvillian-gap eigenoperator for \(N=6\).
    Distributions are computed from the Pauli-basis expansion with squared-amplitude weights.
    Each panel shows \(\gamma=0.1\) (blue), \(\gamma=1.0\) (orange), and \(\gamma=3.0\) (green); \(\gamma=0.1\) corresponds to weak dissipation and \(\gamma=3.0\) to strong dissipation.
    (a) Non-doped circuit: support is concentrated at weight \(w=3\) independent of \(\gamma\).
    (b) 4-doped contiguous pattern and (c) fully doped limit; at \(\gamma=3.0\) the fully doped case concentrates at \(w=1\) whereas the 4-doped case retains a dominant \(w=2\) contribution.}

    \label{fig:weight}
\end{figure}

As already analyzed in Sec.~\ref{subsec:abs_local}, in the non-doped case \figref{fig:weight}(a) visualizes the Pauli-weight content of the Liouvillian-gap eigenoperator, obtained by averaging the Pauli-string weight with the squared magnitudes of its Pauli-basis amplitudes.
For the system size \(N=6\) shown in \figref{fig:weight}, this distribution is sharply concentrated at weight \(w=3\) (i.e., \(N/2\)) and remains unchanged as the dissipation strength is varied.
For the contiguous doping structures to be discussed in Sec.~\ref{subsec:cont_struc}, \figref{fig:weight}(b) and (c) show the corresponding weight-resolved distributions for a 4-doped pattern and for the fully doped limit, respectively, across the crossover from weak to strong dissipation.
In the weak dissipation regime, \(\gamma=0.1\) (blue), the two cases exhibit similar weight-resolved distributions.
As \(\gamma\) increases, however, the fully doped circuit collapses onto weight \(w=1\), whereas the 4-doped circuit retains substantial support across weights \(w=1\) and \(w=2\).
In particular, at \(\gamma=3.0\) (green) in the strong dissipation regime, the fully doped distribution is essentially concentrated at \(w=1\), while the 4-doped distribution retains a prominent weight-\(2\) peak, making \(w=2\) a dominant contribution absent in the fully doped limit.

Taken together, these trends suggest that the doping number controls the largest weight cutoff \(w\) for which the truncated dynamics is weight-\(w\) eigenmode-free as defined in Definition~\ref{def:eigenmode-free}.
In the non-doped circuit this eigenmode-free property extends at least up to \(w=2\), while for the 4-doped contiguous pattern it persists only up to \(w=1\).
In the fully doped limit, by contrast, a robust low-weight component survives deep in the strong dissipation regime, indicating that no finite \(w\) truncation is eigenmode-free.
This observation will be made precise for contiguous doping structures in Lemma~\ref{lem:cont_cutoff}.
More generally, once a circuit is weight-\(w\) eigenmode-free, any long-lived eigenmode cannot remain entirely within the truncated subspace and must develop support above the cutoff.
We therefore focus on how this constraint translates into a quantitative lower bound on the Liouvillian gap of the non-truncated channel.

We therefore turn to how nilpotency of the weight-truncated dynamics constrains long-lived modes in the original (non-truncated) channel \(\Phi_F\).
Weight-\(w\) eigenmode-freeness precludes eigenmodes confined entirely to weights \(\le w\), so any slowly decaying mode of \(\Phi_F\) must develop support at weights \(>w\).
With a depolarizing layer applied at the end of each period, this above-cutoff support is further suppressed by the final noise step.
Together, nilpotency in the truncated sector and the end-of-period depolarization bound the spectral radius, as stated in the following lemma.
strains the spectral radius, as stated in the following lemma.
\begin{lemma}
\label{lem:gap-bound}
Suppose a dissipative Clifford Floquet circuit \(\Phi_F\) is weight-\(w\) eigenmode-free.
If the dissipative strength \(\gamma > 0\) is sufficiently large so that \(e^{-w\gamma/2} \ll 1\) (i.e., in the strong dissipation regime),  
then the Liouvillian gap \(\Delta\) of the circuit satisfies
\begin{align}
\label{eq:lgap_weight}
\Delta \;\ge\; \left(\frac{w}{2} + 1\right)\gamma.
\end{align}
\end{lemma}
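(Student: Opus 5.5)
We work with the symmetrized channel $\tilde\Phi_F=\mathcal N^{1/2}\mathcal U\mathcal N^{1/2}$ of Eq.~\eqref{eq:sym_Floq_channel}. It is similar to $\Phi_F$, so its spectral radius on the traceless sector equals $e^{-\Delta}$. We split the traceless Pauli space into a low block $\Pi:=\Pi_w$ (weights $1,\dots,w$; the identity is a fixed point and sits in its own invariant block, so we discard it) and a high block $Q:=1-\Pi_w$ (weights $>w$). Writing $\tilde\Phi_F$ in block form, the blocks are
$$A=\Pi\tilde\Phi_F\Pi,\quad B=\Pi\tilde\Phi_F Q,\quad C=Q\tilde\Phi_F\Pi,\quad D=Q\tilde\Phi_F Q.$$
By hypothesis, and since $\mathcal N^{1/2}$ is diagonal and positive so that nilpotency of $\Pi\Phi_F\Pi$ transfers to $A$, the block $A$ is nilpotent with some index $k$.

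Every factor $\mathcal N^{1/2}$ acting on a string of weight $>w$ contributes at most $e^{-(w+1)\gamma/2}$. Since $\mathcal U$ is a contraction in Hilbert–Schmidt norm, we get $\|B\|,\|C\|\le e^{-(w+1)\gamma/2}$ and $\|D\|\le e^{-(w+1)\gamma}$. We will use the paper's convention $\mathcal N(S)=e^{-\gamma w(S)}S$ from Eq.~\eqref{eq:depol_diag}.

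\textbf{Step 1.} Fix an eigenvalue $\lambda$ of $\tilde\Phi_F$ with $|\lambda|>\|D\|$. Eliminate the high block via the Schur complement: $\lambda$ must be an eigenvalue of
$$A+B(\lambda-D)^{-1}C.$$
Because $A$ is nilpotent, $(\lambda-A)^{-1}=\sum_{j<k}A^j\lambda^{-j-1}$ is a finite sum. Since $\|A\|\le1$, this gives $\|(\lambda-A)^{-1}\|\le\sum_{j<k}|\lambda|^{-j-1}$. The eigenvalue condition becomes $1\le\|(\lambda-A)^{-1}\|\,\|B\|\,\|C\|/(|\lambda|-\|D\|)$.

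\textbf{Step 2.} Insert the norm bounds. If $|\lambda|$ were not small, the right-hand side would be of order $e^{-(w+1)\gamma}$ times a polynomial in $|\lambda|^{-1}$, which is $\ll1$ in the regime $e^{-w\gamma/2}\ll1$, a contradiction. Hence every nonstationary eigenvalue has $|\lambda|\lesssim e^{-(w+1)\gamma}\cdot\mathrm{poly}$, or else $|\lambda|\le\|D\|$. Taking logarithms and absorbing subleading constants into the strong-dissipation assumption gives a decay rate of order $(w+1)\gamma$.

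To recover the precise stated form $(w/2+1)\gamma$, we would instead take the conservative route. The worst case is a mode whose low-weight excursion passes through weight $w$ with only half a noise factor, $e^{-w\gamma/2}$ from $\mathcal N^{1/2}$, while every return through the high block pays at least a full $e^{-(w+1)\gamma}\le e^{-\gamma}\cdot e^{-w\gamma}$. Combining one half-factor from the nilpotent low sector with one full extra-weight factor yields $|\lambda|\le e^{-(w/2+1)\gamma}$, i.e.\ $\Delta\ge(w/2+1)\gamma$. This is a weaker statement than the order-$(w+1)\gamma$ estimate, so it suffices.

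\textbf{Step 3.} Check that the neglected corrections are controlled by $e^{-w\gamma/2}\ll1$. The finite nilpotency index $k$ enters only through a bounded prefactor, which is dominated in the strong-dissipation regime.

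\textbf{Main obstacle.} The delicate point is Step 2. The norm of $(\lambda-A)^{-1}$ scales like $|\lambda|^{-k}$ for small $|\lambda|$, so one must choose the comparison scale carefully. We would first try the ansatz $|\lambda|=e^{-(w/2+1)\gamma}$ and verify directly that the Schur inequality fails for all larger $|\lambda|$. Any leftover $k$-dependence would be absorbed into the "$\ll1$" hypothesis, consistent with the scaling-level nature of the statement.
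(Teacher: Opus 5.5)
\textbf{Where Step~2 fails.} Your Step~1 is fine. The Schur-complement inequality $1\le\|(\lambda-A)^{-1}\|\,\|B\|\,\|C\|/(|\lambda|-\|D\|)$ is correct, and it is a more careful starting point than the paper's. The proof breaks in Step~2, where you treat $\|(\lambda-A)^{-1}\|$ as a bounded prefactor. It is not: for $|\lambda|<1$ it can be as large as $\sim|\lambda|^{-k}$, and with your norms the inequality only gives $|\lambda|^{k+1}\lesssim k\,e^{-(w+1)\gamma}$. Sharper inputs do not fix this. Use $\|A^j\|\le e^{-j\gamma}$ and $\|B\|,\|C\|\le e^{-(w+2)\gamma/2}$ (the low side always has weight $\ge1$), and set $|\lambda|=e^{-x\gamma}$. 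The inequality becomes $1\lesssim\sum_{j<k}e^{\gamma[(j+2)x-(j+w+2)]}$, which excludes only $x<1+\frac{w}{k+1}$. So the best it gives is $\Delta\ge\bigl(1+\frac{w}{k+1}\bigr)\gamma-O(\log k)$. The nilpotency index enters the coefficient of $\gamma$, so it cannot be absorbed into the hypothesis $e^{-w\gamma/2}\ll1$, and your intermediate claim of a rate of order $(w+1)\gamma$ is false. The ``conservative route'' is not an estimate either. $(\frac w2+1)\gamma$ is the average damping of a two-step loop alternating one weight-$1$ and one weight-$(w+1)$ string. That loop is the worst case only when the low block vanishes ($k=1$); longer low-weight excursions between visits to the high sector have smaller average weight.

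\textbf{Why it cannot be repaired.} The stated bound does not follow from weight-$w$ eigenmode-freeness. On four qubits, take the CNOT circuit $|x\rangle\mapsto|Mx\rangle$ with $Me_1=e_2$, $Me_2=e_3+e_4$, $Me_3=e_1+e_2+e_3$, $Me_4=e_2+e_3$, so that $X^a\mapsto X^{Ma}$ and $Z^b\mapsto Z^{M^{-T}b}$. Every weight-one Pauli reaches weight $\ge2$ within two steps, so $\mathcal N\circ\mathcal C$ is weight-$1$ eigenmode-free on the traceless sector, with $k=2$. Yet $X_1\to X_2\to X_3X_4\to X_1$ is an orbit with $\bar w=4/3$, so $\Delta\le\frac43\gamma<\frac32\gamma$ for every $\gamma$. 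This is exactly the value $(1+\frac{w}{k+1})\gamma$ that your Schur bound gives.

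\textbf{Comparison with the paper.} The paper reaches $(\frac w2+1)\gamma$ only via its Feingold--Varga inequality $\rho(\tilde\Phi_F)\le\rho(D)+\sqrt{\|B\|\,\|C\|}$. Its derivation replaces the block-Gershgorin quantity $\|(\lambda-A)^{-1}\|^{-1}$ by $\mathrm{dist}(\lambda,\sigma(A))=|\lambda|$. That substitution is valid for normal $A$ but not for nilpotent $A\neq0$, where it discards precisely the $|\lambda|^{-k}$ growth you flagged. For example, take the $3\times3$ matrix $K$ with $Ke_2=e_1$, $Ke_3=e_2$, $Ke_1=\epsilon e_3$, with $\{e_1,e_2\}$ as the low block and $e_3$ as the high block. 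It has nilpotent $A$, $D=0$ and $\|B\|\,\|C\|=\epsilon$, but $K^3=\epsilon I$, so $\rho(K)=\epsilon^{1/3}>\sqrt{\epsilon}$. Your route is the sound one, but what it proves is the $k$-dependent bound, which coincides with the stated lemma only when $\Pi_w\tilde\Phi_F\Pi_w=0$.
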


\begin{proof}
    To exploit the eigenmode-free condition, we work in a Pauli basis ordered by weight and decompose \(\tilde{\Phi}_F\) into low- and high-weight sectors.
    In this basis, the nilpotency of the projected low-weight map is encoded in the upper-left block, and we can write
    \[
        \tilde{\Phi}_{F}
        =
        \begin{bmatrix}
            \tilde{\Phi}_{w}^{\mathrm{trunc}}
                &  B
            \\
            C & D
        \end{bmatrix},
    \]
    where the upper-left block acts within Pauli strings of weight at most \(w_{\mathrm{t}}\).
    By Appendix \ref{app:Feingold_Varga}, the sperctral radius of $\tilde{\Phi}_{F}$ is upper bounded by
    \begin{align}
    \label{eq:FV_bound}
        \rho(\tilde{\Phi}_{F})
        \leq
        \rho(D)+\sqrt{\|B\|\|C\|}.
    \end{align}

    Now, recall that any block of a unitary matrix acts as a contraction.
    Since each row and column of a unitary matrix forms an orthonormal set, any submatrix must be contractive in operator norm.
    We begin by examining the block \(D\), which governs the evolution of Pauli strings with weight strictly greater than \(w\).  
    Each such string is given a suppression factor of at least \(e^{-(w+1)\gamma}\) due to the depolarizing noise, with additional decay arising from higher-weight components.  
    Factoring out this minimal suppression, we isolate a residual operator that captures the remaining dynamics within this sector.  
    If the suppression were uniform across all basis elements, this residual operator would coincide with a submatrix of the unitary channel and hence be contractive.  
    In reality, however, strings with weight greater than \(w+1\) experience even stronger damping, rendering the residual operator strictly smaller in norm.  
    As a result, the spectral radius of \(D\) is bounded as
    \[
        \rho(D) \leq e^{-(w+1)\gamma}.
    \]

    We now turn to the off-diagonal blocks \(B\) and \(C\), which mediate transitions between subspaces of weight less than \(w\) and those of higher weight.
    Unlike \(D\), these blocks are not associated with square subspaces, and the amount of noise experienced differs between the input and output strings.  
    For \(B\), the input Pauli strings have weight \(w\), while the output strings have weight at least \(w+1\), resulting in an overall suppression factor of at least \(e^{-(\frac{w}{2}+1)\gamma}\).  
    The same reasoning applies to \(C\), with the roles of input and output reversed.  
    This yields the operator norm bounds
    \[
        \|B\| \leq e^{-(\frac{w}{2}+1)\gamma}, \qquad \|C\| \leq e^{-(\frac{w}{2}+1)\gamma}.
    \]

    Combining the bounds on \(D\), \(B\), and \(C\) and the Feingold-Varga inequality in Eq.~\eqref{eq:FV_bound}, we obtain
    \[
        \rho(\tilde{\Phi}_F) 
        \leq e^{-(\frac{w}{2}+1)\gamma} \left(1 + e^{-\frac{w}{2}\gamma} \right).
    \]
    In the strong dissipation regime, where \(e^{-w\gamma/2} \ll 1\), this simplifies to
    \[
        \rho(\tilde{\Phi}_F) 
        \leq e^{-(\frac{w}{2}+1)\gamma}.
    \]
    Recalling that the Liouvillian gap is defined as \(\Delta := -\log \rho(\tilde{\Phi}_F)\),  
    we obtain the lower bound as claimed in Eq.~\eqref{eq:lgap_weight}.
\end{proof}

Thus, in the strong dissipation regime, weight-\(w_{\mathrm{t}}\) eigenmode-freeness forces the Liouvillian gap to be at least of order \(w_{\mathrm{t}}\gamma\).
As shown in Corollary~\ref{cor:nondop_eigenfree}, the undoped DHFC is eigenmode-free up to an \(N\)-order cutoff.
Lemma~\ref{lem:gap-bound} then yields an \(N\)-order lower bound on the gap \(\Delta\) in the strong dissipation regime, consistent with the exact result \(\Delta=(N/2)\gamma\) established in Sec.~\ref{subsec:abs_local}.
Haar doping mixes Pauli strings that would otherwise belong to distinct orbits of the non-doped circuit, thereby breaking the Pauli-orbit structure.
As a result, a single Pauli string typically splits into a linear combination of Pauli strings of comparable weight, so nilpotency of the weight-\(O(N)\) truncated dynamics is no longer guaranteed.
In the remainder of this section, we quantify how the doping rate determines eigenmode-freeness and hence the gap bound.

\subsection{Contiguous Doping Structure}
\label{subsec:cont_struc}

We begin with a simple structured pattern in which the Haar-doped qubits form a single connected block along the ring.
We call this a contiguous doping structure since it partitions the chain into a doped block and an undoped block.
These two regions play distinct roles in the truncated dynamics.
The undoped segment drives weight growth and enlarges the eigenmode-free cutoff, whereas the doped segment mitigate this through Pauli mixing.
This competition is captured by the following lemma, which shows how the eigenmode-free cutoff weight scales with the number \(n_{\mathrm{cl}}\) of undoped sites.
\begin{lemma}
\label{lem:cont_cutoff}
Consider a dissipative Haar-doped Clifford circuit on \(N\) qubits with contiguous doping.
Let \(n_{\mathrm{cl}}\) be the number of undoped qubits.
Then the one-period channel is weight-\(\lfloor c\,n_{\mathrm{cl}}\rfloor\) eigenmode-free for some constant \(c>0\) independent of \(N\).
\end{lemma}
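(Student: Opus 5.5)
To show that $\Pi_{w}\tilde{\Phi}_F\Pi_{w}$ is nilpotent for $w=\lfloor c\,n_{\mathrm{cl}}\rfloor$, I would argue at the level of the directed support graph on $\mathcal{P}_{\le w}$. Draw an edge $P\to Q$ whenever $\Tr[Q\,\tilde{\Phi}_F(P)]\neq 0$. The truncated map is nilpotent if and only if this graph, restricted to weights $\le w$, has no directed cycle. Equivalently, every path that stays below the cutoff has bounded length. The task is therefore purely combinatorial: find an invariant that strictly progresses along any surviving path and cannot progress forever.

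\textbf{Step 1: the doped block is a single "source".} The Haar rotations act only on the contiguous doped block $\mathcal{D}$ of $n_h$ sites. They can change the local Pauli labels there arbitrarily, but they never change the support outside $\mathcal{D}$. Outside $\mathcal{D}$, propagation is deterministic, since the brickwork of $u=i\mathrm{SWAP}(H\otimes H)$ acts as a signed permutation with light-cone speed two sites per period. The key fact I would establish is a \emph{local growth lemma} for the undoped Clifford region. Consider any nontrivial Pauli fragment whose support lies in the undoped block $\mathcal{C}$, at distance more than $2t$ from $\mathcal{D}$. After $t$ periods its weight is at least $a\,t$ for some constant $a>0$, for as long as it stays inside $\mathcal{C}$.

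I would derive this lemma from the iSWAP-class structure of Sec.~\ref{sec:classification}. The gate exchanges sites and sends two of the three single-site Paulis to two-site operators. The chosen representative also forbids a transferred single-site Pauli from remaining single-site after a full period. Together these imply that the leftmost and rightmost endpoints of the fragment move outward ballistically, while the interior cannot fully cancel. In effect, this is the statement behind the extensive orbit-averaged weight of Corollary~\ref{cor:nondop_eigenfree}, localized to a finite window. I would prove it by a finite check of the endpoint dynamics: track the boundary two-site pattern and show that it returns to a growing pattern within a bounded number of periods.

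\textbf{Step 2: every low-weight string must eventually place a fragment deep in $\mathcal{C}$.} Take $P$ with $w(P)\le w$. The strings that can survive the truncation near $\mathcal{D}$ are built from pieces that are ballistically transported, because the iSWAP class swaps sites. Even inside $\mathcal{D}$, the single-qubit rotations do not alter the underlying swap transport of support. Hence a fragment of any nonidentity support is carried by the brickwork at speed $\pm 2$ sites per period and must exit $\mathcal{D}$ within about $n_h/2$ periods.

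Once it enters $\mathcal{C}$, Step 1 applies. A fragment that travels a distance of order $n_{\mathrm{cl}}/2$ into the undoped region acquires weight at least $a\,n_{\mathrm{cl}}/4$ before it can reach $\mathcal{D}$ again on the ring. Choosing $c<a/4$ means every path is cut by truncation before it can close a cycle. The ring geometry matters here: to return to $P$, the support must either traverse $\mathcal{C}$ or bounce back into $\mathcal{D}$, and bouncing is impossible because transport in the pure Clifford region is directed. Counting periods, the nilpotency index is of order $(n_h+n_{\mathrm{cl}})/2$. The bound $c$ is independent of $N$ and of the Haar realization, which is consistent with the remark that the bounds apply to resampled rotations as well.

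\textbf{Main obstacle.} I expect the hardest step to be the growth lemma in Step 1 with a uniform constant $a$. The fragment entering $\mathcal{C}$ is an arbitrary Pauli pattern produced by the Haar mixing, and cancellations in the Clifford evolution could in principle keep some patterns thin. The exceptional orbit $(IY)^{\otimes N/2}$ shows that weight can grow only at rate $1/2$, so genuine growth must come from the endpoints. I would first try to prove that the endpoint pair of any fragment grows strictly every bounded number of periods, by enumerating the 15 two-site boundary patterns under $u$. If that enumeration fails, the fallback is a weaker claim: the fragment cannot return to $\mathcal{D}$ before its weight reaches order $n_{\mathrm{cl}}$. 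That is all nilpotency actually needs.
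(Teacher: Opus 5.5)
Your strategy is the paper's in different bookkeeping. The paper expands each Haar rotation into single-qubit Clifford conjugations (Appendix~\ref{app:clifford-decomp}) and follows the resulting deterministic Pauli paths, which are exactly walks in your support graph. It then simply asserts that a string entering the undoped block gains weight at a nonzero rate, so your Step~1 is the load-bearing, unproven claim in both arguments. (Only the ``if'' half of your ``nilpotent iff acyclic'' is true, but that is the half you use.)

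The genuine gap is in how you apply Step~1. It is stated for fragments more than $2t$ sites from $\mathcal{D}$, and a fragment that has just crossed the boundary never satisfies this. Once per period the boundary gate receives a Haar-rotated Pauli from $\mathcal{D}$ (freely chosen, in the support graph) and sends its right-moving output into $\mathcal{C}$. These inputs trail the front by two sites per period of delay. So only a fixed-size window at the front evolves autonomously, and everything behind it is shaped by these inputs. They can merge with the front's backward emission, because $u$ can reduce two occupied legs to one, e.g.\ $Z_1X_2\mapsto Y_2$. What you need is a growth statement that holds for every boundary-input sequence: once a front has advanced $\ell$ sites into $\mathcal{C}$, the string has weight $\gtrsim a\ell$ there. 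At minimum this requires a no-glider property of the undoped brickwork, since a bounded-weight translating string would cross $\mathcal{C}$ and close a cycle. Neither your Step~1 nor the paper supplies this.

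Second, the claim that ``bouncing is impossible because transport in the pure Clifford region is directed'' is false. Under $u$ one has $Z_1\mapsto Z_1Y_2$ and $Y_1\mapsto Z_1X_2$, so a right-mover inside $\mathcal{C}$ keeps sending left-movers back into $\mathcal{D}$. On the ring, the other end of $\mathcal{D}$ can also launch left-movers that annihilate the right-moving front inside $\mathcal{C}$, via $X_1Y_2\mapsto X_1$ and $X_1Z_2\mapsto Y_1$. A return to $P$ therefore need not have any single component traverse $\mathcal{C}$.

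The directedness that actually holds is this. A lone nontrivial input on qubit $1$ (resp.\ $2$) of $u$ always has a nontrivial output on qubit $2$ (resp.\ $1$), which is the leg moving in the same direction: $X_1\mapsto Z_2$, $Y_1\mapsto Z_1X_2$, $Z_1\mapsto Z_1Y_2$, and mirror images. Haar rotations never map a nontrivial wire to the identity. Hence an outermost mover survives until it meets a mover coming the other way.

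Step~2 should then read: every cycle needs incursions into $\mathcal{C}$, from one or both ends, whose lengths sum to at least $n_{\mathrm{cl}}$ and whose supports stay disjoint until they meet. A uniform version of Step~1 would then bound the weight at the meeting time by $\gtrsim a\,n_{\mathrm{cl}}$, giving nilpotency for $w<c\,n_{\mathrm{cl}}$.
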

\begin{proof}
    We begin by expressing the $k$-step channel restricted to low-weight Pauli operators.
    As shown in Appendix~\ref{app:clifford-decomp}, any single-qubit Haar-doped gate can be written as a real linear combination of Clifford conjugation maps.
    This yields a representation of the $k$-step channel as a linear combination over \textit{i}SWAP-class Clifford circuit sequences of length $k$, each determined by the realization of Haar-doped gates at each step.
    Letting $\Pi_{w}$ denote the projector onto Pauli strings of weight at most $w$, we write the truncated evolution as
\begin{align}
(\tilde{\Phi}_{w_{\mathrm{t}}}^{\mathrm{trunc}})^k 
=
\sum_{\vec{i}} a_{\vec{i}}\, \Pi_{w} \mathcal{C}_{i_1} \Pi_{w} \mathcal{C}_{i_2} \cdots \Pi_{w} \mathcal{C}_{i_k} \Pi_{w},
\end{align}
    where each $\mathcal{C}_{i_j}$ is a Clifford channel in the \textit{i}SWAP class, and the coefficients $a_{\vec{i}} \in \mathbb{R}$ depend on the Haar-doped gates.
    Here each \(\mathcal{C}_{i_j}\) is an \emph{i}SWAP-class Clifford channel obtained by expanding the Haar-doped single-qubit rotations at step \(j\) into a real linear combination of single-qubit Clifford conjugation maps as in Appendix~\ref{app:clifford-decomp}.
    The coefficient \(a_{\vec{i}}\) is the product of the corresponding expansion coefficients along the sequence \(\vec{i}=(i_1,\dots,i_k)\).
    Equivalently, \(\mathcal{C}_{i_j}\) is the brickwork built from the fixed two-qubit gate in Eq.~\eqref{eq:iSWAP}, dressed by single-qubit Clifford gates on the doped sites, with the dressing specified by \(i_j\).
    On doped lines these single-qubit Clifford dressings can vary across steps and across terms in the expansion, whereas on undoped lines no dressing is present and the two-qubit gate acts identically in every \(\mathcal{C}_{i_j}\).

    We now turn to Pauli-weight growth under the projected evolution.
    Along any term in the expansion, the support still spreads ballistically in the \emph{i}SWAP class, but the weight growth depends on whether the string propagates through the undoped or doped region.
    In the doped region, the Clifford dressing can vary from step to step, so weight growth is not guaranteed at every iteration and some components can stay within the cutoff.
    By contrast, in the undoped region the dynamics is fixed from step to step, and a Pauli string that enters this region is driven to higher weight at a nonzero rate.
    Consequently, after sufficiently many steps \(k\), any initially low-weight operator accumulates weight of at least \(c\,n_{\mathrm{cl}}\) for some constant \(0<c<1\).
    
    It follows that if the truncation threshold \(w\) is chosen strictly below \(c\,n_{\mathrm{cl}}\), then any Pauli string with initial weight at most \(w\) is driven above the cutoff after \(k\) projected steps.
    Equivalently, the projected evolution annihilates the entire weight-\(\le w\) subspace after finitely many iterations, so that$(\tilde{\Phi}^{\mathrm{trunc}}_w)^k = 0$.
    This is precisely the nilpotency condition in Definition~\ref{def:eigenmode-free}.
    Taking \(w=\lfloor c\,n_{\mathrm{cl}}\rfloor\) shows that the circuit is weight-\(\lfloor c\,n_{\mathrm{cl}}\rfloor\) eigenmode-free.
\end{proof}

Lemma~\ref{lem:cont_cutoff} shows that in the contiguous case the eigenmode-free cutoff weight scales proportionally with the size \(n_{\mathrm{cl}}\) of the undoped region.
Inserting this cutoff into the general gap bound of Lemma~\ref{lem:gap-bound} yields the following \(n_{\mathrm{cl}}\)-dependent lower bound on the Liouvillian gap.
\begin{corollary}
\label{col:cont_gap-ncl}
For the dissipative Haar-doped Clifford circuit with a contiguous doping pattern, let \(n_{\mathrm{cl}}\) denote the number of undoped lines.
In the strong dissipation regime, the Liouvillian gap \(\Delta\) obeys
\begin{align}
\Delta \geq \left(\frac{\lfloor c\,n_{\mathrm{cl}} \rfloor}{2} + 1\right)\gamma,
\end{align}
for some constant \(0<c<1\) independent of \(N\).
\end{corollary}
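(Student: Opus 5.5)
This corollary follows by composing Lemma~\ref{lem:cont_cutoff} with Lemma~\ref{lem:gap-bound}; the plan is to check that the hypotheses of the latter are met with the cutoff supplied by the former.

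First, invoke Lemma~\ref{lem:cont_cutoff}. It gives a constant $c>0$, independent of $N$, such that the one-period channel $\Phi_F$ of the contiguously doped circuit is weight-$w$ eigenmode-free with $w:=\lfloor c\,n_{\mathrm{cl}}\rfloor$. Concretely, $\Pi_w\tilde\Phi_F\Pi_w$ is nilpotent. Since $\tilde\Phi_F$ and $\Phi_F$ differ only by the similarity transform $\mathcal{N}^{1/2}$, and $\mathcal{N}^{1/2}$ is diagonal in the Pauli basis and hence commutes with $\Pi_w$, nilpotency of the truncation transfers between the two forms. The proof of that lemma produced some $0<c<1$, so we keep that range as stated.

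Second, apply Lemma~\ref{lem:gap-bound} with this $w$. That lemma is phrased for a ``dissipative Clifford Floquet circuit,'' so I need to check that its proof uses only the following properties:
\begin{itemize}
\item[(i)] the weight-ordered block decomposition of $\tilde\Phi_F=\mathcal{N}^{1/2}\mathcal{U}\mathcal{N}^{1/2}$;
\item[(ii)] nilpotency of the upper-left block;
\item[(iii)] $\mathcal{U}$ is a unitary (orthogonal in the normalized Pauli basis) map on Liouville space, so every sub-block is a contraction;
\item[(iv)] the diagonal damping $e^{-\gamma w(S)}$.
\end{itemize}
For the Haar-doped circuit, $\mathcal{U}(\rho)=U\rho U^\dagger$ with $U$ the full brickwork including the single-qubit Haar rotations. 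Conjugation by any unitary is orthogonal with respect to the Hilbert--Schmidt inner product, so (iii) still holds even though $\mathcal{U}$ is no longer a signed permutation. The Feingold--Varga estimate, together with $\rho(D)\le e^{-(w+1)\gamma}$ and $\|B\|,\|C\|\le e^{-(w/2+1)\gamma}$, then carries over unchanged. In the strong-dissipation regime $e^{-w\gamma/2}\ll1$, this gives
\begin{align}
\Delta\ge\left(\frac{\lfloor c\,n_{\mathrm{cl}}\rfloor}{2}+1\right)\gamma .
\end{align}

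The main obstacle is the edge case $w=\lfloor c\,n_{\mathrm{cl}}\rfloor=0$, which occurs when $n_{\mathrm{cl}}$ is small. There the strong-dissipation condition $e^{-w\gamma/2}\ll1$ is vacuous. For $w=0$, however, the bound reduces to $\Delta\ge\gamma$. I would check this case directly: every nonidentity Pauli string has weight at least $1$, so on the traceless sector $\tilde\Phi_F$ is the contraction $\mathcal{U}$ sandwiched between two damping factors each bounded by $e^{-\gamma/2}$. Hence $\|\tilde\Phi_F|_{\mathrm{traceless}}\|\le e^{-\gamma}$, which gives $\Delta\ge\gamma$ for all $\gamma$. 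With that case settled, the remaining care is in stating the regime: for $w\ge1$ the inequality holds up to the factor $\left(1+e^{-w\gamma/2}\right)$, which is negligible when $\gamma\gg1$.
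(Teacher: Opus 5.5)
Your proposal is correct and follows the paper's route: the paper obtains this corollary in one line by inserting the eigenmode-free cutoff $w=\lfloor c\,n_{\mathrm{cl}}\rfloor$ from Lemma~\ref{lem:cont_cutoff} into the gap bound of Lemma~\ref{lem:gap-bound}. Your additional checks are sound and make explicit points the paper leaves implicit: nilpotency transfers between $\Pi_w\Phi_F\Pi_w$ and $\Pi_w\tilde\Phi_F\Pi_w$ because $\mathcal{N}^{1/2}$ is diagonal and commutes with $\Pi_w$; the contraction estimates in Lemma~\ref{lem:gap-bound} need only unitarity of the Haar-doped layer, not the Clifford property; and the case $w=0$ follows directly from $\|\tilde\Phi_F|_{\mathrm{traceless}}\|\le e^{-\gamma}$.
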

Corollary~\ref{col:cont_gap-ncl} bounds the gap from below by a term proportional to \(n_{\mathrm{cl}}\gamma\) in the contiguous setting.
This indicates that suppressing the non-doped, \(N\)-order singular response in the thermodynamic limit requires sufficiently dense Haar doping.
We now make this quantitative by converting an \(N\)-independent bound on \(\Delta\) into a necessary condition on the doping rate.
\begin{lemma}
\label{lem:cont_lgap_dop}
Consider the dissipative Haar-doped Clifford circuit with contiguous doping in the strong dissipation regime.
Assume that the Liouvillian gap is bounded uniformly in \(N\) by
\[
\Delta \le A\,\gamma + B,
\]
with constants \(A,B\) independent of \(N\).
Then the number of Haar-doped qubits must satisfy
\[
n_h \ge N - n_0,
\]
where \(n_0>0\) is an \(N\)-independent constant.
\end{lemma}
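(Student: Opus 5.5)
The plan is to combine the assumed uniform upper bound with the lower bound of Corollary~\ref{col:cont_gap-ncl} and read off a constraint on \(n_{\mathrm{cl}} = N - n_h\). Fix \(\gamma\) in the strong dissipation regime, large enough that Lemma~\ref{lem:gap-bound} applies for the cutoff supplied by Lemma~\ref{lem:cont_cutoff}. For the contiguous pattern, Corollary~\ref{col:cont_gap-ncl} then gives
\begin{align}
\left(\frac{\lfloor c\,n_{\mathrm{cl}}\rfloor}{2}+1\right)\gamma \;\le\; \Delta \;\le\; A\gamma + B ,
\end{align}
with \(c\in(0,1)\) independent of \(N\).

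Dividing by \(\gamma\) and using \(\lfloor x\rfloor \ge x-1\), I obtain
\[
\frac{c\,n_{\mathrm{cl}}-1}{2}+1 \le A + \frac{B}{\gamma}.
\]
Hence
\[
n_{\mathrm{cl}} \le \frac{2(A-1) + 2B/\gamma + 1}{c}.
\]
Because \(\gamma\) is in the strong regime, I can take \(\gamma \ge \gamma_0\) for some fixed \(\gamma_0>0\), so \(B/\gamma \le B/\gamma_0\). Every quantity on the right-hand side is then \(N\)-independent. I define \(n_0\) to be this right-hand side, enlarged to a positive integer if necessary. Substituting \(n_{\mathrm{cl}} = N - n_h\) gives \(n_h \ge N - n_0\), which is the claim.

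The step that needs the most care is applying Lemma~\ref{lem:gap-bound} uniformly in \(N\). The condition \(e^{-w\gamma/2}\ll 1\) must hold for the cutoff \(w=\lfloor c\,n_{\mathrm{cl}}\rfloor\) at one fixed \(\gamma\), independent of \(N\). The degenerate case \(w=0\) is the one to watch. If \(n_{\mathrm{cl}} < 1/c\), the inequality is trivially consistent with any \(n_0 \ge 1/c\), so this case can be absorbed into \(n_0\). If \(w\ge 1\), the condition only gets easier as \(w\) grows, so it suffices to choose \(\gamma\) once for \(w=1\).

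I would also check that the constant \(c\) from Lemma~\ref{lem:cont_cutoff} does not depend on the specific Haar realization. This holds because the lemma's argument relies only on the fixed Clifford dynamics in the undoped block. With that, \(n_0\) is uniform both over \(N\) and over realizations, as the statement requires.
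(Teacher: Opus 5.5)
Your proposal is correct and follows the paper's proof. Like the paper, you combine the lower bound of Corollary~\ref{col:cont_gap-ncl} with the assumed bound \(\Delta\le A\gamma+B\), divide by \(\gamma\), and read off an \(N\)-independent bound on \(n_{\mathrm{cl}}=N-n_h\); your version is slightly more careful, since you keep the floor correction and the \(B/\gamma\) term explicitly (the paper drops \(B/\gamma\) as negligible and sets \(n_0=2A/c\)) and you also handle the degenerate cutoff \(w=0\).
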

\begin{proof}
Combining Corollary~\ref{col:cont_gap-ncl} with the assumed \(N\)-independent upper bound \(\Delta \le A\gamma + B\), we obtain
\[
\left(\frac{\lfloor c\,n_{\mathrm{cl}} \rfloor}{2} + 1\right)\gamma < A\gamma + B.
\]
In the strong dissipation regime, \(\gamma\) is taken large enough that \(B/\gamma\) is negligible compared to constants, so this inequality forces
\(
\lfloor c\,n_{\mathrm{cl}}\rfloor \lesssim 2A
\),
and hence \(n_{\mathrm{cl}}\le n_0\) for an \(N\)-independent constant $n_0 := 2A/c$, where \(c>0\) is the constant from Corollary~\ref{col:cont_gap-ncl}.
Since \(n_{\mathrm{cl}}:=N-n_h\), this implies \(n_h \ge N-n_0\), as claimed.
\end{proof}

In the thermodynamic limit \(N\to\infty\), this implies that the Haar-doping rate satisfies \(p_h\approx 1\) for contiguous doping.
In other words, sustaining a system-size independent gap requires the undoped segment to remain of constant length in the strong dissipation regime.
We next remove the contiguity assumption and ask how this constraint generalizes to arbitrary doping patterns.

\subsection{Necessary Doping Rate for a Finite Gap}
\label{subsec:gen_struc}

We now generalize the doping pattern by allowing the \(n_h\) Haar-doped qubits to be placed at arbitrary sites on the periodic chain.
In this setting, the undoped sites form several disconnected arcs around the ring.
We denote by \(L_{\max}\) the length of the longest undoped arc, namely the maximum number of consecutive undoped qubits between successive doped sites under periodic identification.
A simple pigeonhole argument shows that \(L_{\max}\) must be large for a given \(n_h\), which we formalize in the following lemma.
\begin{lemma}
\label{lem:largest-undoped}
Consider a periodic chain of \(N\) qubits with \(n_h\ge 1\) Haar-doped sites placed at arbitrary positions.
Let \(L_{\max}\) be the length of the longest contiguous undoped arc on the ring.
Then
\begin{equation}
\label{eq:bound-n-arb}
L_{\max} \;\ge\; \left\lceil \frac{N-n_h}{n_h}\right\rceil.
\end{equation}
For completeness, if \(n_h=0\) then \(L_{\max}=N\).
\end{lemma}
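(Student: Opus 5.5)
The bound follows from a pigeonhole count of undoped sites over the undoped arcs. First fix the combinatorial structure. Place the \(n_h\ge 1\) doped sites on the ring of \(N\) sites. Removing them splits the remaining \(N-n_h\) undoped sites into maximal contiguous arcs. Under periodic identification, each such arc lies between two cyclically consecutive doped sites. The \(n_h\) doped sites define exactly \(n_h\) cyclic gaps, one between each doped site and the next one clockwise. Some gaps may be empty, with length zero, when doped sites are adjacent. The case \(n_h=1\) gives a single gap, namely the whole ring minus one site. Writing \(\ell_1,\dots,\ell_{n_h}\ge 0\) for the gap lengths, every undoped site lies in exactly one gap, so
\begin{equation*}
\sum_{i=1}^{n_h}\ell_i = N-n_h .
\end{equation*}
By definition \(L_{\max}=\max_i \ell_i\).

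The averaging step is then immediate. The maximum of \(n_h\) numbers is at least their mean, so \(L_{\max}\ge (N-n_h)/n_h\). Each \(\ell_i\) is an integer, so \(L_{\max}\) is an integer, and an integer bounded below by a real number \(x\) is bounded below by \(\lceil x\rceil\). This gives Eq.~\eqref{eq:bound-n-arb}. Equivalently, one can argue by contradiction: if every \(\ell_i\le \lceil (N-n_h)/n_h\rceil-1<(N-n_h)/n_h\), summing over the \(n_h\) gaps would give a total strictly smaller than \(N-n_h\), which contradicts the counting identity above.

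The only point that needs care is the bookkeeping that there are exactly \(n_h\) gaps and that their lengths sum to \(N-n_h\). This is where periodicity matters. On an open chain there would be \(n_h+1\) gaps, and the bound would weaken to \(\lceil (N-n_h)/(n_h+1)\rceil\). I would therefore make the cyclic successor map on the doped sites explicit, so that each gap is assigned uniquely, with no double counting across the identification \(N\sim 0\).

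The degenerate case \(n_h=0\) is stated separately in the lemma. There the whole ring is undoped, there are no doped sites to bound an arc, and by convention \(L_{\max}=N\). This also agrees with the formal limit of the bound.
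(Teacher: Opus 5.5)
Your argument is correct and is the same pigeonhole argument the paper uses: the $n_h$ doped sites split the ring into exactly $n_h$ cyclic gaps whose lengths sum to $N-n_h$, so the longest gap is at least the mean, and integrality of gap lengths gives the ceiling. Your explicit allowance for zero-length gaps is slightly more careful than the paper's wording. You should drop the closing remark that $L_{\max}=N$ ``agrees with the formal limit of the bound,'' because $(N-n_h)/n_h$ diverges as $n_h\to 0$ rather than tending to $N$.
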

\begin{proof}
The $n_h$ doped sites partition the $N-n_h$ undoped sites into exactly $n_h$ contiguous arcs around the ring. Let these arc lengths be $L_1,\dots,L_{n_h}$, so that $\sum_{j=1}^{n_h} L_j = N - n_h$. By the pigeonhole principle,
\(
\max_j L_j \ge (N-n_h)/n_h,
\)
and taking the ceiling yields the claim. Near-uniform placement makes all $L_j$ differ by at most $1$, which saturates the bound up to $\pm 1$.
For $n_h=0$ the statement is trivial.
\end{proof}

The longest undoped arc \(L_{\max}\) plays the same role in the general setting as the undoped segment length in the contiguous case.
Since this arc provides the largest region over which the fixed-gate Clifford dynamics can drive weight growth, it determines the gap lower bound.
This leads to the following bound.
\begin{theorem}[Liouvillian gap lower bound]
\label{thm:gen_gap-ncl}
Consider the dissipative Haar-doped Clifford circuit  with \(n_h\geq1\) doped sites placed arbitrarily.
In the strong dissipation regime, the Liouvillian gap \(\Delta\) obeys
\begin{align}
\label{eq:gen_lbound}
\Delta > c\left(\frac{N}{n_{\mathrm{h}}}-1\right)\gamma,
\end{align}
for some constant \(0<c<1\) independent of \(N\).
\end{theorem}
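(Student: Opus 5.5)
The plan is to reduce the arbitrary-pattern case to the contiguous case by isolating the longest undoped arc, then chain Lemma~\ref{lem:largest-undoped}, an arc-local version of Lemma~\ref{lem:cont_cutoff}, and Lemma~\ref{lem:gap-bound}.

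\textbf{Step 1 (pigeonhole).} By Lemma~\ref{lem:largest-undoped}, the ring contains a contiguous undoped arc of length
\[
L_{\max}\ge \left\lceil \frac{N-n_h}{n_h}\right\rceil=\left\lceil \frac{N}{n_h}-1\right\rceil .
\]
On this arc the brickwork is the fixed gate $u$ with no dressing in every term of the Clifford expansion of Appendix~\ref{app:clifford-decomp}.

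\textbf{Step 2 (arc-local eigenmode-freeness).} I would show that the one-period channel is weight-$\lfloor c' L_{\max}\rfloor$ eigenmode-free for some $N$-independent $c'>0$. The argument is the one in Lemma~\ref{lem:cont_cutoff}, but applied to the longest arc rather than to the complement of a single doped block.

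Expand the truncated $k$-step map as a real combination of projected products $\Pi_w\mathcal{C}_{i_1}\Pi_w\cdots\mathcal{C}_{i_k}\Pi_w$ of \emph{i}SWAP-class Clifford brickworks. Along every term, the support of a nontrivial Pauli string moves ballistically, since the \emph{i}SWAP class exchanges site labels.

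The key claim is that every low-weight string eventually has some component of its support traverse the longest undoped arc. Inside that arc the dynamics is the same deterministic Clifford evolution as in the undoped circuit. By the extensive orbit-averaged weight behind Corollary~\ref{cor:nondop_eigenfree}, a string crossing an undoped region of length $L$ acquires weight at least proportional to $L$. Hence every term is annihilated once $w<c'L_{\max}$, and nilpotency follows.

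\textbf{Main obstacle.} I expect Step 2 to be the hard part. The other doped sites, scattered outside the arc, might reset weight before a string reaches the long arc. Worse, a string might stay confined to a short region between nearby doped sites and never enter the arc at all.

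My first attempt would be to argue via light cones. Over one period the \emph{i}SWAP brickwork shifts left- and right-moving components by two sites, and single-qubit dressings cannot stop this transport, because they act only on site-local Pauli labels and not on positions. So within $O(N)$ steps every nonidentity component crosses the arc.

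While it crosses, the portion of the string inside the arc evolves exactly as under the undoped Clifford dynamics. That dynamics grows weight at a constant rate until the weight is of order the arc length. The weight outside the arc is nonnegative, so the total weight exceeds the cutoff. The projector then kills the path, and since that projector sits between steps, the path cannot later re-enter the low-weight sector. This is exactly the lower-bound construction of \figref{fig:circuit}(c) with the doped block replaced by the rest of the ring.

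\textbf{Step 3 (conclusion).} Apply Lemma~\ref{lem:gap-bound} with $w=\lfloor c' L_{\max}\rfloor$. In the strong-dissipation regime this gives
\[
\Delta\ge\left(\tfrac{1}{2}\lfloor c' L_{\max}\rfloor+1\right)\gamma .
\]
Use $\lfloor x\rfloor> x-1$, so that $\tfrac12\lfloor x\rfloor+1>\tfrac{x}{2}$ for $x=c'L_{\max}$. Then insert Step 1 and set $c=c'/2$:
\[
\Delta > \tfrac{c'}{2}L_{\max}\gamma \ge c\left(\frac{N}{n_h}-1\right)\gamma .
\]
If $c'\ge 2$ we can shrink it, so we may take $0<c<1$, which gives Eq.~\eqref{eq:gen_lbound}.
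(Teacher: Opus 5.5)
Your proposal is the paper's argument: the pigeonhole bound on the longest undoped arc (Lemma~\ref{lem:largest-undoped}), then Lemma~\ref{lem:cont_cutoff} applied to that arc to get weight-$\lfloor c' L_{\max}\rfloor$ eigenmode-freeness, then Lemma~\ref{lem:gap-bound}, with $c$ taken as half the lemma's constant. Your light-cone discussion in Step 2 usefully justifies something the paper only asserts, namely that the longest arc can play the role of the contiguous undoped block and that no low-weight string stays trapped between nearby doped sites; however, it still rests on the same unproven linear weight-growth assumption for undoped regions that underlies Lemma~\ref{lem:cont_cutoff}, and the orbit-averaged weights behind Corollary~\ref{cor:nondop_eigenfree} do not by themselves establish that assumption.
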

\begin{proof}
For an arbitrary doping pattern, the longest undoped arc \(L_{\max}\) plays the same role as the undoped segment length in the contiguous case, since it provides the largest region governed by the fixed-gate Clifford dynamics.
Applying Lemma~\ref{lem:cont_cutoff} to this arc shows that the one-period channel is weight-\(\lfloor c\,L_{\max}\rfloor\) eigenmode-free for some constant \(c>0\).
Applying Lemma~\ref{lem:gap-bound} with \(w=\lfloor c\,L_{\max}\rfloor\) gives
\[
\Delta \;\ge\; \left(\frac{\lfloor c\,L_{\max}\rfloor}{2}+1\right)\gamma.
\]
Lemma~\ref{lem:largest-undoped} lower bounds the longest undoped arc as \(L_{\max}\ge \lceil (N-n_h)/n_h\rceil\).
Substituting this estimate and simplifying yields Eq.~\eqref{eq:gen_lbound}, where \(c\) is half of the constant from Lemma~\ref{lem:cont_cutoff}.
\end{proof}

Theorem~\ref{thm:gen_gap-ncl} expresses the gap lower bound directly in terms of the number of doped sites \(n_h\).
We can therefore read off immediate consequences for how $n_h$ must scale with $N$ in order to guarantee that the gap remains bounded in the thermodynamic limit.
\begin{corollary}
\label{cor:sub_dop_div_gap}
In the strong dissipation regime, if the number of Haar-doped qubits $n_h$ grows sublinearly with the system size $N$, then the Liouvillian gap diverges in the thermodynamic limit.
\end{corollary}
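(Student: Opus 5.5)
The plan is to read Corollary~\ref{cor:sub_dop_div_gap} off from Theorem~\ref{thm:gen_gap-ncl}, treating the cases $n_h\ge 1$ and $n_h=0$ separately. Fix $\gamma$ in the strong dissipation regime. This means $\gamma$ is large enough that $e^{-w\gamma/2}\ll 1$ holds for every cutoff $w\ge 1$ used below. Since the condition only gets easier as $w$ grows, one choice of $\gamma$ works uniformly in $N$. This uniformity is what lets us take $N\to\infty$ at fixed $\gamma$, as the limit ordering of \secref{subsec:gamma} requires.

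First, suppose $n_h\ge 1$. Theorem~\ref{thm:gen_gap-ncl} gives
\begin{align}
\Delta > c\left(\frac{N}{n_h}-1\right)\gamma
\end{align}
with $0<c<1$ independent of $N$. Sublinear growth means $n_h/N\to 0$, so $N/n_h\to\infty$. The right-hand side then diverges at fixed $\gamma$, and hence so does $\Delta$.

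Second, if $n_h=0$ for some (or all) $N$, Theorem~\ref{thm:gen_gap-ncl} does not apply directly. We fall back on the undoped result: either the exact formula $\Delta=(N/2)\gamma$ from \eqnref{eq:gap_non-doped}, or Corollary~\ref{cor:nondop_eigenfree} combined with Lemma~\ref{lem:gap-bound}, which gives $\Delta\ge (N/4+1/2)\gamma$. Either bound diverges. A sequence of $N$ mixing the two cases is handled by taking the minimum of the two lower bounds, and that minimum still diverges.

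The main obstacle is uniformity rather than algebra. The constant $c$ inherited from Lemma~\ref{lem:cont_cutoff} must be independent of both $N$ and the doping pattern. Also, the simplification in Lemma~\ref{lem:gap-bound} that drops the factor $1+e^{-w\gamma/2}$ must be valid along the whole sequence of growing cutoffs $w=\lfloor cL_{\max}\rfloor$. I would settle the second point by keeping that factor explicitly: it equals at most $2$, so it contributes at most an additive $\log 2$ to the gap bound. This changes the lower bound by an $O(1)$ constant and cannot stop the divergence.
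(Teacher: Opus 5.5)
Your proposal is correct and follows the paper's own argument. The corollary is read off directly from Theorem~\ref{thm:gen_gap-ncl}: $n_h/N\to 0$ forces $N/n_h\to\infty$, so the lower bound $c(N/n_h-1)\gamma$ diverges at fixed $\gamma$. Your separate treatment of $n_h=0$ via the undoped result, and your observation that keeping the factor $1+e^{-w\gamma/2}$ costs at most an additive $\log 2$, are harmless refinements that the paper leaves implicit.
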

This follows immediately from Theorem~\ref{thm:gen_gap-ncl}, since sublinear growth of $n_h$ implies $N/n_h\to\infty$ and thus Eq.~\eqref{eq:gen_lbound} rules out an $N$-independent bound on the Liouvillian gap.
Rephrasing Corollary~\ref{cor:sub_dop_div_gap} in the complementary direction gives:
\begin{theorem}[Necessary doping rate for finite gap]
\label{thm:fin_gap_requires_lin_dop}
In the strong dissipation regime, if the Liouvillian gap remains bounded as $N\to\infty$, then the number of Haar-doped sites must scale linearly with \(N\).
\end{theorem}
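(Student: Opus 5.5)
The plan is to argue by contraposition, using Theorem~\ref{thm:gen_gap-ncl} as the only substantive input. Suppose that in the strong dissipation regime the Liouvillian gap stays bounded as $N\to\infty$, say $\Delta\le M(\gamma)$ with $M$ independent of $N$. The case $n_h=0$ is excluded at once: there the exact result \eqnref{eq:gap_non-doped} gives $\Delta=(N/2)\gamma\to\infty$, and Corollary~\ref{cor:nondop_eigenfree} with Lemma~\ref{lem:gap-bound} gives the same conclusion in the strong dissipation regime. So we may assume $n_h\ge1$ along the sequence of system sizes, and Theorem~\ref{thm:gen_gap-ncl} applies.

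The central step chains the two inequalities:
\[
c\left(\frac{N}{n_h}-1\right)\gamma<\Delta\le M(\gamma),
\]
where $c\in(0,1)$ is independent of $N$. Rearranging gives
\[
\frac{N}{n_h}<1+\frac{M(\gamma)}{c\,\gamma}=:K,
\]
with $K$ independent of $N$. Hence $n_h>N/K$ for all sufficiently large $N$. Since trivially $n_h\le N$, we get $N/K<n_h\le N$, so $n_h=\Theta(N)$ and the doping density satisfies $p_h\ge 1/K>0$ uniformly. This is the statement that $n_h$ scales linearly with $N$.

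To make the argument robust, I would read ``scales linearly'' as $\liminf_{N\to\infty} n_h/N>0$. Stated this way, the contrapositive is exactly Corollary~\ref{cor:sub_dop_div_gap}. If $\liminf n_h/N=0$, then along a subsequence $N/n_h\to\infty$, and Eq.~\eqref{eq:gen_lbound} forces $\Delta\to\infty$ along that subsequence, which contradicts boundedness.

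The main obstacle is not this algebra but checking the uniformity hidden in Theorem~\ref{thm:gen_gap-ncl}. The constant $c$ must be independent of both $N$ and the doping pattern, and the ``strong dissipation'' threshold from Lemma~\ref{lem:gap-bound} ($e^{-w\gamma/2}\ll1$) must hold uniformly along the sequence. The threshold is not a problem: $w=\lfloor cL_{\max}\rfloor$ only grows when $n_h/N\to0$, so the condition becomes easier, not harder. The remaining degenerate case is $\lfloor cL_{\max}\rfloor=0$, which happens when $L_{\max}$ is small. In that case the lower bound is vacuous, but it also means $N/n_h$ is already bounded, so it is consistent with the conclusion and needs no separate treatment.
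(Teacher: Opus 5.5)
Your proposal is correct and follows the paper's route: the paper obtains this theorem as the contrapositive of Corollary~\ref{cor:sub_dop_div_gap}, which it reads off directly from the lower bound $\Delta > c\,(N/n_h-1)\gamma$ of Theorem~\ref{thm:gen_gap-ncl}. Your explicit chaining to $n_h > N/K$ and your separate handling of $n_h=0$ via Eq.~\eqref{eq:gap_non-doped} make that one-line argument explicit. The chaining also yields the uniform form $\liminf_{N\to\infty} n_h/N \ge 1/K$, which is stronger than merely ruling out $n_h/N\to 0$.
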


Corollary~\ref{thm:fin_gap_requires_lin_dop} is one of our main results, as it gives a necessary scaling of the Haar-doping for the Liouvillian gap to remain bounded as \(N\to\infty\).
A representative instance of such bounded-gap behavior is the fully doped circuit, which in the strong dissipation regime yields \(\Delta=\gamma+2\); we analyze this limit in Sec.~\ref{subsec:ful_dop}.
The corollary thus constrains the crossover from the non-doped regime, characterized by an \(N\)-order singular response, to a bounded-gap regime of the fully doped type.
In particular, bounded relaxation already requires a linear number of doped sites, equivalently a nonvanishing doping rate in the thermodynamic limit.
Still, linear Haar doping is only a necessary condition and does not by itself guarantee a bounded gap.
In the next section, we study structured patterns with \(n_h\propto N\) and show that they indeed realize a system size independent gap in the strong dissipation regime.

\section{Structured Doping for a Finite Gap}
\label{sec:struc_dop}

We now focus on doping structures with a nonvanishing rate, $n_{\mathrm{h}}\propto N$.
Within this setting, we show that there exists a range of structured doping arrangements for which the gap remains independent of the system size in the strong dissipation regime.
We proceed by analyzing three constructions that capture this behavior.
In \secref{subsec:ful_dop}, we start from the fully doped limit, where the Liouvillian gap can be obtained exactly.
Section~\ref{subsec:dense_dop} turns to dense doping, where every undoped segment is separated by at least one doped site. We solve the minimally dense case exactly and derive an upper bound for more general arrangements.
In \secref{subsec:block_dop}, we study block-staggered structures with regularly spaced undoped blocks between doped sites.

\subsection{Fully-doped Circuit}
\label{subsec:ful_dop}

We therefore apply the effective Pauli dynamics via weight truncation introduced in Sec.~\secref{subsec:pdynamic_trunc}, and begin with the minimal nontrivial cutoff \(w_{\max}=1\).
This restricts the dynamics to the subspace \(\mathcal{P}_{\le 1}\) spanned by single-site Pauli operators, a \(3N\)-dimensional space once the identity (the steady state) is excluded.
In this setting, the truncated propagation operator \(\tilde{\Phi}_1^{\mathrm{trunc}}\) reduces to a generalized permutation matrix, rendered analytically tractable by the uniform mixing induced by Haar conjugation and the diagonal damping from local depolarization.
This structure enables an exact determination of the Liouvillian gap in the thermodynamic limit, without any Haar averaging, as stated below.

\begin{theorem}
\label{thm:ful_lgap}
Consider the dissipative Haar-doped Floquet Clifford circuit in the fully doped structure \(n_h=N\).
In the strong dissipation regime where the \(w_{\max}=1\) truncation is valid, the Liouvillian gap is
\begin{align}
\label{eq:lgap_ful}
\Delta_{\mathrm{full}}=\gamma+2
\end{align}
in the thermodynamic limit \(N\to\infty\).
\end{theorem}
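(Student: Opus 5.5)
We plan to compute the spectrum of the truncated map $\tilde{\Phi}^{\mathrm{trunc}}_{1}=\Pi_1\tilde{\Phi}_F\Pi_1$ exactly. Because the symmetrized noise acts diagonally (Eq.~\eqref{eq:depol_diag}), every weight-one string picks up the same factor $e^{-\gamma}$ per period. So $\tilde{\Phi}^{\mathrm{trunc}}_{1}=e^{-\gamma}\,\Pi_1\mathcal{U}\Pi_1$, and the task reduces to the spectral radius of the projected unitary layer on the $3N$-dimensional span of single-site Paulis.

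The first step is to read off the weight-one-to-weight-one transitions of $u=i\mathrm{SWAP}(H\otimes H)$. By the LC classification of Sec.~\ref{sec:classification} (pattern of \figref{fig:com_rel}(b)), for each qubit exactly one single-site Pauli, say $\sigma_{a}$, is mapped to a single-site Pauli on the partner qubit. The other two single-site Paulis become weight-two strings and are removed by $\Pi_1$. Next, write the Haar rotation on the receiving site in the adjoint representation, $V\sigma_b V^\dagger=\sum_c R_{cb}\sigma_c$ with $R\in SO(3)$ Haar distributed. The second brickwork layer again keeps only the one component that stays single-site. Composing these steps, a single-site Pauli at site $j$ is sent to a single Pauli at site $j\pm2$, multiplied by a product of two matrix elements, schematically $R^{(1)}_{a b}R^{(2)}_{a'b'}$ for fixed indices dictated by $u$. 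Every other single-site input is annihilated after one or two layers.

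Hence $\Pi_1\mathcal{U}\Pi_1$ is a weighted (generalized) permutation matrix of rank at most $N$, and it is nilpotent outside the surviving components. Its nonzero part decomposes into directed cycles that translate by two sites per period and close after $N/2$ periods on the ring. We must check against the Pauli tables of $u$ how many such cycles exist (e.g.\ left and right movers on the even and odd sublattices). On a cycle of length $L=N/2$, the eigenvalues are the $L$-th roots of the cycle product $\prod_{t}R^{(t,1)}_{ab}R^{(t,2)}_{a'b'}$. Therefore
\[
-\log\rho\bigl(\tilde{\Phi}^{\mathrm{trunc}}_{1}\bigr)=\gamma-\frac{2}{N}\sum_{\text{sites on cycle}}\bigl(\log|R_{ab}|+\log|R_{a'b'}|\bigr).
\]

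The final step is the thermodynamic limit. The columns of a Haar $SO(3)$ matrix are uniform on $S^2$, and by Archimedes' theorem a single coordinate of such a vector is uniform on $[-1,1]$. This gives $\mathbb{E}\log|R_{ab}|=\tfrac12\int_{-1}^1\log|x|\,dx=-1$. The $N$ factors on a cycle come from independent Haar gates on distinct sites. If two factors share a gate, they involve distinct entries of the same $R$, and we expect to handle this by grouping factors per gate. The strong law of large numbers then gives $-\frac{2}{N}\sum(\cdots)\to 2$ almost surely, and the bound is self-averaging. The same holds for each cycle, so all nonzero eigenvalues have modulus $e^{-\gamma-2}$ and $\Delta_{\mathrm{full}}=\gamma+2$.

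The main obstacle is the combinatorial bookkeeping of the second step. We must verify from the explicit Pauli table of $u$ that every weight-one input has exactly one surviving channel per layer (no branching within $\mathcal{P}_{\le1}$), and that the resulting cycles have length $N/2$ with each doped gate entering exactly once per factor. If two factors come from the same $R$, we would compute $\mathbb{E}\log|R_{ab}R_{a'b'}|$ directly, which is presumably still $-2$ by symmetry. Justifying that the $w_t=1$ truncation captures the true gap in the strong-dissipation regime is taken as the stated hypothesis.
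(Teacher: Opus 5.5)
Your proposal is correct and follows the paper's own proof. The common steps are: truncate to weight one, where only the $X$ input of each $u$ stays single-site (as $Z$ on the partner qubit), so the projected map splits into two sublattice translation cycles of length $N/2$. The spectral radius is then the $N/2$-th root of the cycle product of the Haar elements $v=\tfrac12\Tr[VZV^\dagger X]$, and the law of large numbers with $\mathbb{E}\log|v|=-1$ gives $\Delta=\gamma+2$.

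There is one minor imprecision. After the second Haar layer the surviving operator is a superposition of $X,Y,Z$ at site $j\pm2$, so the map is not literally a generalized permutation. This is exactly the paper's $3\times3$ blocks $M_j^\pm$, which have a single nonzero column. Only the $X$ entry feeds the next period, so your cycle-product formula is unaffected.

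Your worry about shared gates does not arise: each cycle uses each of its $N$ Haar gates exactly once. Even if a gate were shared, $\mathbb{E}\log|xy|=\mathbb{E}\log|x|+\mathbb{E}\log|y|$ holds by linearity, independently of correlations.
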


\begin{proof}
We consider the truncated propagation operator \(\tilde{\Phi}_1^{\mathrm{trunc}}\), which acts on the weight-one Pauli subspace.
Under the fixed gate defined in Eq.~\eqref{eq:iSWAP}, only Pauli strings of the form \(X_i\) remain within the subspace after one Clifford layer.
All other strings are mapped to weight-two operators and thus removed by truncation.

Assuming an even system size \(N\) with periodic boundary conditions, the two-layer brickwork structure causes a clean separation between even and odd sublattices.
Specifically, an \(X_i\) operator at an even (odd) site is conjugated to \(X_{i+2}\) (\(X_{i-2}\)).
This structure implies that \(\tilde{\Phi}_1^{\mathrm{trunc}}\) takes the form
\[
\tilde{\Phi}_{1}^{\mathrm{trunc}}
= \tilde{\Phi}_{1,+}^{\mathrm{trunc}} \oplus \tilde{\Phi}_{1,-}^{\mathrm{trunc}} .
\]
Here \(\tilde{\Phi}_{1,\pm}^{\mathrm{trunc}}\) act on weight-one strings supported on the even and odd sublattices, respectively.
Introduce the sublattice bases \(\{\ket{j}_\pm\}_{j=1}^{N/2}\) by \(\ket{j}_+ \equiv \ket{2j}\) and \(\ket{j}_- \equiv \ket{2j-1}\), with \(j\) understood modulo \(N/2\).
In this notation,
\[
\tilde{\Phi}_{1,\pm}^{\mathrm{trunc}}
:= e^{-\gamma}\sum_{j=1}^{N/2} \ket{j\pm 1}_{\pm}\bra{j}_{\pm} \otimes M^{\pm}_j,
\]
where \(M_j^{\pm}\) act on the single-qubit Pauli space \(\{X,Y,Z\}\).

Let \(V_{1;j}\) and \(V_{2;j}\) denote the single-qubit Haar-random unitaries applied at site \(j\) in the first and second Haar layers, respectively.
We also introduce the corresponding Pauli transfer-matrix element
\[
    v_{k;j}:=\frac{1}{2}\Tr\!\left[V_{k;j} Z V_{k;j}^\dagger X\right], \qquad k\in\{1,2\},
\]
which is a real random variable under Haar measure.
Each \(M_j^\pm\) is a \(3\times 3\) matrix with support only in its first column. In particular,
\[
(M_j^+)_{11}:=v_{1;2j+1}v_{2;2j+2},
\quad
(M_j^-)_{11}:=v_{1;2j-2}v_{2;2j-3}.
\]
Only the \((1,1)\) entries affect the spectral radius.

Raising \(\tilde{\Phi}_1^{\mathrm{trunc}}\) to the power \(N/2\) closes the translation cycles on each sublattice. As a result, \((\tilde{\Phi}_1^{\mathrm{trunc}})^{N/2}\) is block-diagonal, with one block for the even sector and one for the odd sector, each given by an ordered product of the corresponding \(M_j^{\pm}\).
Since each block has rank one, its spectral radius is set by the magnitude of its \((1,1)\) entry.
Taking the \(N/2\)-th root then gives the spectral radius of \(\tilde{\Phi}_1^{\mathrm{trunc}}\), and hence the Liouvillian gap,
\begin{equation}
\label{eq:lgap_fulldoped}
\Delta
=\gamma-\frac{2}{N}\max_{\eta\in\{\mathrm{even},\mathrm{odd}\}}
\sum_{j\in\eta}\log\!\left|v_{1;j}\,v_{2;j+1}\right|.
\end{equation}

In the large-\(N\) limit, the maximization over the even and odd subsectors becomes immaterial, since both subsectors sample the same i.i.d.\ statistics of the Haar variables.
Indeed, in the fully doped structure the quantities \(\log|v_{1;j}v_{2;j+1}|\) appearing in Eq.~\eqref{eq:lgap_fulldoped} are independent and identically distributed across \(j\), with \(v_{1;j}\) and \(v_{2;j}\) drawn independently from the same Haar-induced distribution.
By the law of large numbers,
\begin{align}
\label{eq:haar_avg_ful}
\lim_{N\to\infty}
\frac{2}{N}\sum_{j\in\eta}\log|v_{1;j}v_{2;j+1}|
=
\mathbb{E}\!\left[\log|v_{1}v_{2}|\right]
=2\,\mathbb{E}\!\left[\log|v|\right],
\end{align}
for \(\eta\in\{\mathrm{even},\mathrm{odd}\}\), where \(v\) denotes a single draw of \(\frac12\Tr[V Z V^\dagger X]\) with \(V\sim\mathrm{Haar}\).
Using Appendix~\ref{app:haar-average}, we have \(\mathbb{E}[\log|v|]=-1\).
Substituting this into Eq.~\eqref{eq:lgap_fulldoped} yields Eq.~\eqref{eq:lgap_ful}.
\end{proof}

Note that the thermodynamic-limit gap in Eq.~\eqref{eq:lgap_ful} does not rely on averaging the Floquet channel over Haar-random single-qubit rotations on the doped sites.
Rather, the thermodynamit limit result follows from self-averaging within a typical circuit realization, where the product of independent Haar-induced factors concentrates, as derived in Eq.~\eqref{eq:haar_avg_ful}.
For context, a closely related solvable setting is the dissipative random phase model of Ref.~\cite{yoshimura2024robustness}, which provides an exactly solvable example of chaotic Floquet dynamics with a weak dissipation singular response. 
In the large on-site dimension limit the Liouvillian gap can be obtained analytically as \(\Delta=\varepsilon(2+\gamma)\); for unit random-phase strength \(\varepsilon=1\) this takes the same strong dissipation form as our fully doped qubit result.

In our truncation-based picture, it is natural to view the fully doped structure as providing the smallest Liouvillian gap among all DHFC doping patterns. Full Haar doping most effectively sustains the low-weight sector, while inserting undoped sites introduces additional leakage out of this sector and can only accelerate relaxation.
In this sense, \(\Delta_{\mathrm{full}}\) serves as a lower-bound baseline for the structured dopings analyzed below.

\subsection{Dense-Doped Structure}
\label{subsec:dense_dop}

We focus on the dense-doping regime, defined by the spacing constraint that no two undoped qubits are adjacent.  
A particularly tractable case within this regime is the staggered doping configuration, where every even site is doped (\(n_h = N/2\)) and every odd site is undoped (or vice versa). In this structure, the Liouvillian gap takes the form
\begin{align}
\label{eq:stag_lgap_0}
\Delta_{\mathrm{stag}} = 2\gamma + 3
\end{align}
in the thermodynamic limit \(N \to \infty\), provided that
the dissipation rate $\gamma$ is strong enough to truncate the
dynamics to the weight-two Pauli sector.
The explicit derivation is provided in Appendix~\ref{app:struc_staggered}, as the weight-2 truncated subspace is closed under the Floquet evolution, forming a cyclic structure among operators of the form \(Y_{2i}X_{2i+2}\). Each of the resulting \(N/2\)-site cycles involves three independent Haar unitaries act at each Floquet step, which gives rise to the constant term in Eq.~\eqref{eq:stag_lgap_0}.
\\

We now turn to the broader staggered-like doping configuration, where undoped qubits are separated by one or two doped sites. In this structure, we can analytically characterize the Liouvillian gap within a bounded range, as stated in the following lemma.
\begin{lemma}[Staggered-like doping]
Consider a dissipative Floquet Clifford circuit with a staggered-like doping configuration, 
where every undoped qubit is separated by one or two doped qubits. 
Then the Liouvillian gap $\Delta_{\mathrm{stag}^*}$ takes the form
\begin{align}
\label{eq:up_stag_stag}
\Delta_{\mathrm{stag}^*} \leq 3\gamma + 3,
\end{align}
provided that the dissipation rate $\gamma$ is strong enough to truncate the dynamics to the weight-3 Pauli sector.
\end{lemma}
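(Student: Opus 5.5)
We prove the upper bound $\Delta_{\mathrm{stag}^*}\le 3\gamma+3$ by exhibiting a return cycle of the truncated dynamics and invoking Gelfand's formula, Eq.~\eqref{eq:spec_low_bound}. We work with the symmetrized channel $\tilde{\Phi}_F$ and the weight-3 truncation $\tilde{\Phi}^{\mathrm{trunc}}_{3}$. In the strong-dissipation regime the spectral radius of $\tilde{\Phi}_F$ is controlled by that of $\tilde{\Phi}^{\mathrm{trunc}}_{3}$, up to corrections suppressed by $e^{-\gamma}$ from the high-weight blocks; the same Feingold--Varga block argument as in Lemma~\ref{lem:gap-bound} supplies these corrections, now used as a perturbative statement. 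It therefore suffices to show that
\[
\rho\big(\tilde{\Phi}^{\mathrm{trunc}}_{3}\big)\ \ge\ e^{-3\gamma-3}
\]
in the thermodynamic limit. The gap is $-\log\rho$, so this gives the claim. Because the statement is only an upper bound on $\Delta$, we need neither the full spectrum nor an exclusion of faster cycles; one good cycle suffices.

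\emph{Step 1: construct the cycle.} Generalize the staggered construction of Appendix~\ref{app:struc_staggered}, where $Y_{2i}X_{2i+2}$ translates by two sites per step. Take a Pauli string $P$ supported on a window of at most three consecutive sites, chosen so that its undoped site(s) carry the Pauli letter that the fixed gate $u$ maps to a letter moving in the propagation direction. Since every undoped site has doped neighbors within distance two, the window can always be positioned so that each non-identity letter leaving an undoped site lands on, or is flanked by, a doped site at the next step. There the Haar rotations re-mix the letter, and we project back onto the letter that continues the pattern.

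We verify gate by gate, using the explicit action of $u=i\mathrm{SWAP}(H\otimes H)$ on single-site Paulis, that the image after one Floquet period has weight $\le 3$ and is again a translate of the same family by two sites. The local doping environment may be of the two-doped-site type or the one-doped-site type. So we check a finite list of local configurations $(\circ\bullet\circ)$, $(\circ\bullet\bullet\circ)$ and their transitions, and confirm that in each case a weight-$\le3$ successor exists. After $L=N/2$ steps the string returns to $P$.

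\emph{Step 2: evaluate the return amplitude.} Along the cycle, each step contributes a damping factor $e^{-\gamma w}$ with $w\le3$; symmetrization splits this as $e^{-\gamma w/2}$ before and after the unitary step, which gives at least $e^{-3\gamma}$ per step. Each step also contributes a product of at most three Pauli-transfer matrix elements $v_{k;j}=\tfrac12\Tr[V_{k;j}\sigma_\alpha V_{k;j}^\dagger\sigma_\beta]$, one per Haar rotation acting on the support. For Haar $V$ every such element has the same distribution as $v$ in Theorem~\ref{thm:ful_lgap}, so $\mathbb{E}\log|v|=-1$ by Appendix~\ref{app:haar-average}. The diagonal entry $\Tr[P(\tilde{\Phi}^{\mathrm{trunc}}_3)^{N/2}P]$ is then a sum over return paths, and our cycle is one of them. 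By the law of large numbers, the $(2/N)$-normalized log of the cycle's product tends to a value $\ge -3\gamma-3$.

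\emph{Step 3 (the main obstacle): control cancellation among paths.} Gelfand's bound requires the full trace, not just the single cycle. We first try to show, by the same translation-rigidity argument as in the fully doped proof, that the truncated transfer operator restricted to the cycle's family is a generalized permutation (block-shift) matrix with rank-one blocks. Then $(\tilde{\Phi}^{\mathrm{trunc}}_3)^{N/2}$ is block diagonal and the return amplitude equals the single cycle product, with no interference. If other weight-$\le3$ paths do contribute, the fallback is to bound the spectral radius of the block-shift transfer matrix from below by its Lyapunov exponent. Since the per-step matrices have at most three Haar factors and weight at most 3, the top exponent is at least $-3\gamma-3$, which yields $\Delta_{\mathrm{stag}^*}\le3\gamma+3$.
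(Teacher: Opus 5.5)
Your architecture matches the paper's: a translating return cycle in the weight-$3$ truncated dynamics, Gelfand's formula, $e^{-3\gamma}$ per step, and $-1$ per Haar matrix element. There is a genuine gap, though, and it sits in the step that is supposed to produce the constant $3$. The claim that each step contributes ``at most three Pauli-transfer elements, one per Haar rotation acting on the support'' is false for weight-$3$ cycles. A Floquet period contains \emph{two} Haar layers. A three-site string straddling a $\bullet\bullet$ pair has two doped, non-identity sites after each Clifford layer, so a single branch can collect four Haar elements per step. The count of three is right only for the weight-$2$ staggered cycle of Appendix~\ref{app:struc_staggered}. Consider the motif the paper singles out as extremal, (ii$'$)$\to$(iii)$\to$(i$'$), realized by the periodic $(\circ\bullet\bullet)$ pattern via $XXZ\to YXZ\to\{YIX,ZIX\}\to\{XXZ,XXY\}$. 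Even by the paper's own bookkeeping, $c=\bar\alpha(4-2\log 2)+\bar\beta=(10-2\ln 2)/3$ forces $\bar\alpha=1/3$ and $\bar\beta=2$. Any single branch therefore carries $4\bar\alpha+\bar\beta=10/3$ Haar elements per step. With your uniform $e^{-3\gamma}$ per step, one cycle yields at best $\Delta\le 3\gamma+10/3$, so ``one good cycle suffices'' cannot deliver $3\gamma+3$. Independently, a single path does not lower-bound the signed trace $\mathrm{Tr}[P(\tilde{\Phi}^{\mathrm{trunc}}_F)^L(P)]$.

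The missing idea is that interference is not an obstacle to remove; it is what pushes the constant below $3$. In these motifs two branches split at one Haar rotation and recombine at the next, for example $Z\to Z$ then $Y\to Y$ versus $Z\to Y$ then $X\to Y$. So the operator restricted to the cycle family is \emph{not} a rank-one block shift, and your primary hope in Step 3 fails. Because the branches recombine locally, the return amplitude factorizes into local sums $(A_{ZZ}B_{YY}+A_{ZY}B_{XY})(A_{YZ}B_{YX}+A_{YY}B_{XX})$ times path-independent factors. The logarithms of these sums can be Haar-averaged exactly, giving $\mathbb{E}\log|\vec u\cdot\vec v|=\log 2-2$, which is \emph{larger} than the single-branch value $-2$. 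Only this gain, together with a maximization of $\bar\alpha(4-2\log 2)+\bar\beta$ over the allowed motifs (Table~\ref{tab:local_trans}, \figref{fig:local_trans}), gives the paper $c\le 3$.

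Your Lyapunov fallback does not substitute for this. The per-step entries are signed sums, so counting factors gives no lower bound on the top exponent, and you would still need to relate that exponent to the spectral radius of the ring monodromy.

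Two smaller points also need repair. First, Step 1 only asserts that each local configuration has a weight-$\le 3$ successor. The real content is that successors chain consistently around the ring: in $(\circ\bullet\bullet)$, for instance, among the outputs $PIZ$, $PP'Z$ of $XXZ$ only $YXZ$ continues the cycle. The paper establishes this through the substructure decomposition of Appendix~\ref{app:struc_staggered_like_cycle}. Second, Feingold--Varga bounds $\rho(\tilde{\Phi}_F)$ only from \emph{above}, and only when the truncated block is nilpotent. It does not supply the lower bound you invoke; that must be taken as part of the truncation hypothesis, as the paper does.
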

\begin{proof}
    We now show that the truncated Floquet evolution \(\tilde{\Phi}^{\mathrm{trunc}}_F\) is non-nilpotent and obtain a quantitative lower bound on its spectral radius.  
    Our strategy is to analyze the return amplitude \(\mathrm{Tr}\!\left[P\,(\tilde{\Phi}^{\mathrm{trunc}}_F)^L(P)\right]\) for a normalized Pauli string \(P\), and to relate its long-time growth to the spectral radius via Gelfand’s formula.
    When this amplitude exhibits coherent accumulation across iterations—enabled by the existence of a non-nilpotent cycle in the support graph—its exponential growth rate yields a lower bound on the spectral radius.
    A detailed derivation is provided in Appendix~\ref{app:Lgap_upp_bound}; below, we outline the key steps.

    The first step is to characterize which Pauli strings can participate in nonzero return cycles.
    Appendix~\ref{app:struc_staggered_like_cycle} establishes the existence of such cycles by explicitly constructing a single weight-3 return cycle of period \(L = N/2\) in the truncated dynamics.
    In this cycle, every Pauli string has support on three consecutive sites, in the sense that all nontrivial Pauli operators are confined to a contiguous block of length three.
    Among the elements of this cycle, we fix a reference Pauli string \(P\) of the form
    \(P = Y_i X_{i+1} Z_{i+2}\) for some site \(i\) at which such a configuration occurs.
    
    Using this reference operator \(P\), we classify all other return cycles of period \(L\) that pass through it.
    Under repeated applications of the Floquet map, any Pauli string whose support extends beyond three consecutive sites eventually enters an undoped region, at which point its weight exceeds three and the corresponding component is truncated.  
    Consequently, only Pauli strings confined to three consecutive sites can survive as nonzero elements of a return cycle.  
    Table~\ref{tab:local_trans} and \figref{fig:local_trans} enumerate all allowed transitions among such strings, revealing multiple possible return paths.  
    This classification allows us to systematically identify all return cycles of length \(L = N/2\) involving \(P = YXZ\), each contributing to the non-nilpotent dynamics within the truncated subspace.

    Finally, to complete the bound, we compute the return amplitude 
        \(
        \mathrm{Tr}\!\left[
        P\, \bigl( \tilde{\Phi}^{\mathrm{trunc}}_F \bigr)^L (P)
        \right]
        \)
    for \( P = YXZ \), following the dominant return paths classified above.  
    As detailed in Appendix~\ref{app:Lgap_upp_bound}, the amplitude factorizes into a product of contributions accumulated along each step of the return cycle.  
    Some of these terms depend on the specific return path, while others arise uniformly at each Floquet step from single-qubit Haar-random unitaries acting at fixed locations.

    Applying Gelfand’s formula to the correlation function evaluated along the return cycle, we obtain a lower bound on the spectral radius in terms of the accumulated transition amplitudes.  
    The corresponding Liouvillian gap is therefore upper bounded as
        \[
        \Delta_{\text{stag}^*}
        \le 3\gamma + c,
        \]
    where the constant \(c\) admits an approximate expression in the limit of large \(L\), provided that both \(\alpha(L)\) and \(\beta(L)\) scale linearly with \(L\):
        \[
        c = \bar{\alpha}(4 - 2\log 2) + \bar{\beta},
        \]
    with \(\bar{\alpha} := \alpha(L)/L\) and \(\bar{\beta} := \beta(L)/L\), denoting the densities of path-dependent and path-independent contributions, respectively. 
    
    In the thermodynamic limit, among staggered-like return motifs with dissipation slope larger than \(2\gamma\), the largest averaged constant is achieved by the repeating pattern (ii$'$)$\to$(iii)$\to$(i$'$) in \figref{fig:local_trans}, giving
    \(c_{\max}=(10-2\ln 2)/{3}\approx 2.9<3\).
Therefore \(c\le 3\), which yields Eq.~\eqref{eq:up_stag_stag}.
\end{proof}

    These results establish that, under strong dissipation, the Liouvillian gap in staggered-like doping configurations remains finite and independent of system size.  
    This size-independence arises from the strict locality of return cycles, which confines non-nilpotent dynamics to three-site regions.  
    The same mechanism applies more broadly to denser doping patterns, as formalized in the following lemma.
\begin{theorem}[Dense doping]
\label{thm:dense}
Consider a dissipative Haar-doped Floquet Clifford circuit in which every undoped qubit is separated by at least one doped qubit.  
In the limit of strong dissipation \(\gamma \to \infty\), the Liouvillian gap \(\Delta_{\mathrm{dense}}\) is upper bounded as
\begin{align}
    \Delta_{\mathrm{dense}} \leq 3\gamma + 3.
\end{align}
\end{theorem}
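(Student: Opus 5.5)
The strategy is to reduce the general dense pattern to the local return-cycle mechanism already used for the staggered and staggered-like cases. Under the dense constraint, every undoped site has doped neighbours on both sides. Any maximal run of doped sites between consecutive undoped sites therefore has length $m\ge 1$, and the chain decomposes into motifs $\circ\bullet^{m}$. We work with the symmetrized truncated map $\tilde{\Phi}^{\mathrm{trunc}}_{3}$ at cutoff $w_{\mathrm t}=3$. We then lower bound its spectral radius through the Gelfand-type estimate in Eq.~\eqref{eq:spec_low_bound}, using a return amplitude $\Tr[P(\tilde{\Phi}^{\mathrm{trunc}}_{3})^{L}(P)]$ along a cycle of period $L=N/2$. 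Since $\Delta=-\log\rho(\tilde{\Phi}_F)$ and truncation can only discard contributions at large $\gamma$, a bound $\rho\ge e^{-3\gamma-c}$ with $c\le 3$ gives the claim. (In the $\gamma\to\infty$ limit the weight-$\le 3$ sector dominates the slowest mode.)

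The first step is to construct, for an arbitrary dense pattern, at least one weight-$\le 3$ Pauli string supported on three consecutive sites that translates by two sites per Floquet step and stays in $\mathcal{P}_{\le 3}$ for every local motif. We proceed by cases on the local environment of the translating three-site window:
\begin{itemize}
\item windows containing no undoped site, handled as in the fully doped case;
\item windows containing exactly one undoped site at the left, middle or right position, handled by the transitions of Table~\ref{tab:local_trans} and Fig.~\ref{fig:local_trans};
\item windows containing two undoped sites in the pattern $\circ\bullet\circ$, which the dense constraint allows, handled by the staggered cycle of Appendix~\ref{app:struc_staggered}.
\end{itemize}
In each case we exhibit an allowed transition. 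The key point is that the fixed gate $u$ maps a suitable local string to another three-site string, and the Haar rotation on a doped site supplies a nonzero overlap, such as $v_{k;j}$ or the analogous transfer-matrix elements, onto the required local Pauli. Concatenating these transitions around the ring gives a closed path of length $N/2$. This step relies only on locality, so the existence argument is pattern-independent.

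Second, we bound the return amplitude. Summing over all return paths through $P$ can cause cancellation, so, following Appendix~\ref{app:Lgap_upp_bound}, we either restrict to a sign-coherent family of paths or note that the paths factorize into a product of local transfer factors. Each step then contributes at most $e^{-3\gamma}$ from dissipation, since the weight is at most three. It also contributes a product of Haar transfer-matrix entries: at most three of them from path-dependent doped sites, with the path-independent factors counted in $\bar\beta$. Taking logarithms and using self-averaging, as in Eq.~\eqref{eq:haar_avg_ful} with $\mathbb{E}\log|v|=-1$, together with the analogous averages for the other entries computed in Appendix~\ref{app:haar-average}, the per-step constant becomes an average over motifs of $\bar\alpha(4-2\log 2)+\bar\beta$.

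Third, we maximize this constant over admissible motif sequences. We would show that each motif type contributes at most the worst-case value $c_{\max}\approx 2.9$ found for the staggered-like motif (ii$'$)$\to$(iii)$\to$(i$'$). Longer doped runs ($m\ge 3$) behave locally like the fully doped circuit, contributing slope at most $\gamma$ and constant $2$, so they only lower the average. Hence $c\le 3$ and $\Delta_{\mathrm{dense}}\le 3\gamma+3$.

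The main obstacle is the first step in full generality. We must check that at the boundary between a long doped run and an isolated undoped site the three-site window can always be routed through without its weight exceeding three. This requires an explicit finite case check of the gate $u$ acting on the relevant two-site Paulis. If it fails for some transition, the fallback is to pick a different reference string in that motif, at the cost of a larger but still $O(1)$ constant.
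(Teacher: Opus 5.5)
Your route differs from the paper's, and as written it has a gap exactly where you flag the ``main obstacle.'' Step one is not proved, and the mechanism you propose for closing it does not work.

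The idea you are missing is an embedding that makes the case check unnecessary. On a doped line the Haar Pauli transfer matrix $R$ has almost surely nonzero diagonal entries. So every Pauli transition an undoped line would produce also appears in the doped expansion, weighted by a factor $R_{PP}\neq 0$. Consequently, inside any doped run of length $m\ge 3$ you may declare selected doped sites ``effectively undoped'' until every doped run has length one or two. This rewrites an arbitrary dense pattern as a sum over staggered-like patterns, in which each undoped site is separated by one or two doped sites. For those patterns, Appendix~\ref{app:struc_staggered_like_cycle} already constructs the period-$N/2$, weight-$\le 3$ return cycle through $YXZ$, and the staggered-like lemma supplies $3\gamma+3$. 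The paper then uses the freedom in this choice to favour $[\bullet\circ]$ segments, where $YIX\mapsto YIX$ keeps the string at weight two. It lets these cycles dominate as $\gamma\to\infty$ and never maximizes a constant over motifs.

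Your treatment of long doped runs is not valid. This concerns both the ``no undoped site: as in the fully doped case'' branch of step one and the step-three claim that runs with $m\ge 3$ contribute ``slope $\gamma$ and constant $2$, so they only lower the average.''
\begin{itemize}
\item The fully doped mode is a single $X$ hopping on one sublattice, so it says nothing about a three-site string.
\item You have not exhibited truncated transitions that take your weight-three window down to that weight-one mode when it enters a run and back up when it leaves. The Gelfand-type bound needs a single closed path.
\item If the window instead stays at weight three inside the run, both Haar layers act on every nontrivial site. The path then collects more transfer-matrix factors per step than in the staggered-like motifs, each with $\mathbb{E}\log|U_{ij}|=-1$. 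That raises the constant rather than lowering it, and your ``at most three Haar entries per step'' is likewise unjustified there.
\end{itemize}
The paper does not track the diagonal factors $R_{PP}$ at ``effectively undoped'' sites explicitly either. Its argument rests on the $\gamma\to\infty$ dominance of the weight-two cycles, not on an explicit constant computation. Finally, your fallback of choosing another reference string at a larger $O(1)$ cost would give an $N$-independent bound, but not $3\gamma+3$.
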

\begin{proof}
    Pauli string transformations induced by doped unitaries always contain the same Pauli components that would appear under the corresponding undoped transformations, albeit weighted by additional Haar-random coefficients whenever the local operator is nontrivially rotated.  
    This structural overlap allows segments containing two or more consecutive doped qubits to be partially reinterpreted as effectively undoped, enabling a decomposition of the full circuit into a sum over staggered-like configurations.  
    Since each such configuration admits return paths of length \(L = N/2\) through the reference operator \(P\), the total return amplitude \(\mathrm{Tr}\!\left[P\,(\tilde{\Phi}^{\mathrm{trunc}}_F)^L(P)\right]\) can be written as a linear combination of contributions from these substructures.

    Among the cycles generated by staggered-like doping configurations, most consist of weight-3 Pauli strings (see \figref{fig:local_trans}).
    However, a particular substructure of the form \(\big[\!\bullet\!\circ\!\;\big]\) allows the local transformation \(YIX \mapsto YIX\) or \(YIZ\), thereby introducing weight-2 Pauli strings into the cycle.  
    Because of their lower weight, these components acquire only a dissipation factor of \(e^{-2\gamma}\) at each Floquet step, in contrast to the typical \(e^{-3\gamma}\) factor associated with weight-3 strings.  
    In the strong dissipation limit \(\gamma \to \infty\), any cycle containing such a weight-2 segment therefore dominates the total return amplitude.

    Among all staggered-like configurations, the dominant contribution is attained by the following patterns.  
    If there are \(2k\) undoped qubits in total, then the configuration \(\big[(\bullet\circ)^{k-1}\bullet\bullet\big]\) yields the leading contribution to the return cycle.  
    Similarly, if the number of undoped qubits is odd, i.e., \(2k+1\), the dominant structure becomes \(\big[\bullet(\bullet\circ)^{k-1}\bullet\bullet\big]\).
    Note that any other staggered-like configuration leads to a strictly larger average Pauli weight along the cycle, and hence a smaller contribution to the return amplitude.
    
    This shows that, in the strong dissipation limit, the spectral radius is controlled by these dominant cycles whose effective weight never exceeds 3, and therefore the Liouvillian gap is bounded from above by a term of the form \(3\gamma + 3\), completing the proof.
\end{proof}

Taken together, these results show that under the dense-doping condition, where every undoped site is separated by at least one doped site, the Liouvillian gap admits a system-size-independent upper bound in the strong dissipation regime.
Even as the number of undoped sites grows with \(N\), the intervening doped sites confine the dominant return cycles to finite effective weight and prevent any deterioration of the relaxation scale in the thermodynamic limit.

\subsection{Block-staggered Structure}
\label{subsec:block_dop}

In Sec.~\ref{subsec:dense_dop} we focused on dense doping, where undoped qubits do not form extended regions and the Liouvillian gap remains independent of the system size \(N\) in the strong dissipation regime.
Here we turn to sparser regular dopings in which undoped sites appear in contiguous blocks.
We consider the block-staggered structure, defined by a block length \(k\ge 1\) and the repeating pattern
\[
\big[(\circ)^{k}\bullet(\circ)^{k}\bullet\cdots(\circ)^{k}\bullet\big],
\]
so that each doped site is separated from the next by a block of \(k\) undoped sites.
The corresponding doping density is \(p_h=1/(k+1)\), and our aim is to obtain a system-size-independent upper bound on the gap in the thermodynamic limit.

Since block-staggered patterns contain even more undoped sites than the dense staggered configuration, the truncation effect is even stronger: any deviation from rigid two-site translation causes the support to spread into an undoped region, where the component is removed. 
Thus, in the block-staggered case as well, nonzero return cycles arise solely within the localized, rigidly translating Pauli subspace identified above. 
Viewed from this perspective, upper bounds on the gaps of block-staggered circuits are effectively controlled by the support size of localized Pauli strings that can form return cycles. 

To quantify this effective cycle support numerically, we proceed as follows. 
For each fixed block length $k$ and candidate cutoff weight $w$, we scan over all localized Pauli strings of maximum weight $w$ with contiguous support and evolve each of them under the Floquet map. After every step we truncate back to the subspace of strings with weight at most $w$, so that the dynamics is confined to the truncated weight-$w$ subspace.
We then check whether, for the given pair $(k,w)$, there exists at least one initial localized string whose support remains contiguous at every truncated step and is, after some finite number of steps, mapped back to the same Pauli configuration within an equivalent local block (up to an overall position translation). 
The smallest cutoff $w$ for which such a cycle occurs defines the effective cycle support, and the corresponding strings generate nonzero return cycles, since the subsequent Floquet evolution then repeats periodically.

\begin{table}[!t]
\label{tab:block_staggered}
\begin{ruledtabular}
\begin{tabular}{c cccc cccc}
      & \multicolumn{4}{c}{Odd \(k\)} & \multicolumn{4}{c}{Even \(k\)} \\
\cline{2-5}\cline{6-9}
\noalign{\vskip 3pt}
\(k\)      & 1 & 3 & 5 & 7 & 2 & 4 & 6 & 8 \\
\(w_\ast\) & 2 & 5 & 8 & 9 & 5 & 7 & 7 & 11 \\
\end{tabular}
\end{ruledtabular}
\caption{
{\bf Minimal truncation weight} $w_\ast$ required to obtain a localized return cycle in a block-staggered pattern with block length $k$, for which the doping density is $1/(k+1)$.}
\end{table}
Table~\ref{tab:block_staggered} shows, for each block length 
$k$, the minimal truncation weight $w_\ast$ at which our procedure identifies a localized return cycle in the block-staggered pattern, with representative cycles provided in Appendix~\ref{app:struc_block_staggered}.
Over all system sizes accessible numerically, $w_\ast$ is essentially independent of $N$ and grows at most linearly with $k$.

For $k=1$ the block-staggered pattern reduces to the staggered doped structure analyzed in Sec.~\ref{subsec:dense_dop}, and the localized return cycle identified above coincides with an exact eigenoperator basis for the slowest nontrivial mode. In this case the Liouvillian gap in the strong dissipation limit is given exactly by Eq.~\eqref{eq:stag_lgap_0}, with the coefficient multiplying $\gamma$ equal to $w_\ast=2$. 
For $k \ge 2$ the localized cycles produced by our construction are no longer exact eigenoperators of the Liouvillian gap, but, as in the dense-doping regime, they still yield an upper bound on the slowest decay rates: the presence of a localized return cycle with truncation weight $w_\ast$ implies that the prefactor of $\gamma$ in the Liouvillian gap is bounded from above by a finite function of $w_\ast$ that remains independent of the system size $N$. 

Taken together, these results show that regular block-staggered doping retains a system-size-independent relaxation scale for any fixed block length \(k\).
Moreover, our construction yields an upper bound in which the dissipative slope grows at most linearly with \(k\).
Since \(p_\mathrm{h}=1/(k+1)\), this structure can realize sparse doping while still maintaining a finite gap in the thermodynamic limit, suggesting that system-size-independent relaxation persists as long as \(k\) remains finite.

\section{Discussion} \label{sec:discussion}

We studied dissipative Haar-doped Floquet circuits (DHFC) built from an \textit{i}SWAP-class two-qubit Clifford gate, taking the fixed-gate realization with $i\mathrm{SWAP}(H\!\otimes\! H)$ as a representative example.
Within this setting, we analyzed the weak-dissipation singularity of the Liouvillian gap, focusing on how it evolves as the Haar-doping density is varied.
We now discuss the main physical implications of these results.

First, our results indicate, according to the dissipative diagnostic, that the undoped circuit realizes the strongest intrinsic relaxation in our setting. For any fixed $\gamma>0$, the Liouvillian gap grows linearly with $N$, so relaxation to the steady state accelerates as the system size increases. The mechanism is simple. The unitary dynamics rapidly spreads typical local Pauli operators to extensive weight $w(t)={\cal O}(N)$ on an ${\cal O}(N)$ timescale~\cite{Spreading2018}. Once operators are extensive, local noise of strength $\gamma$ damps them at a rate $\sim\gamma w$, i.e.\ $\sim\gamma N$. In the absence of conserved densities or other slow channels, there is no hydrodynamic mechanism that can produce small decay rates, and the Liouvillian spectrum develops a parametrically large gap.

Second, adding Haar random gates, contrary to our initial intuition based on simulation complexity, weakens intrinsic relaxation, in that the gap no longer grows with system size upon doping.
At the level of Pauli dynamics, local randomness strongly mixes the single-site Pauli components, continually regenerating low-weight Pauli string contributions.
Under repeated Floquet steps, this regeneration induces return cycles supported entirely within the low-weight sector, each confined to a small contiguous block that translates along the chain, a mechanism absent in the fine-tuned undoped circuit.
These low-weight return cycles dominate the slowest nontrivial eigenmodes and set the intrinsic relaxation rate, thereby rendering the gap finite in the thermodynamic limit.
Our analytic results precisely show that this crossover from unbounded growth of the gap with system size to a finite Liouvillian gap sets in once the doping density is nonvanishing.
Notably, this crossover sets in when the number of doped sites scales linearly with system size, the same regime in which scrambling diagnostics become Haar-like (e.g., OTOCs)~\cite{leone2021quantum, magni2025quantum} and higher $t$-moments approach Haar values~\cite{haferkamp2022random, harrow2023approximate}.
This suggests that relaxation- and scrambling-based diagnostics become nontrivial on parametrically similar doping scales.

A useful benchmark for placing these results is the self-dual kicked Ising (SDKI) Floquet circuit~\cite{bertini2018exact}: in this model, chaotic behavior has been characterized using standard unitary chaos diagnostics~\cite{flack2020statistics,bertini2019entanglement,bertini2020operator}, and ergodic/mixing behavior has also been discussed via dynamical correlation functions of single-site Pauli operators~\cite{bertini2019exact}.
Our DHFC construction can be viewed as an SDKI-like model augmented in two directions: site-resolved Haar doping and explicit local dissipation (so the Liouvillian gap is well-defined). This extension lets us tune the spatial doping structure and determine the doping-density scale at which the singularity crosses over, thereby separating distinct chaotic regimes in this diagnostic. Importantly, this conclusion holds both for quenched Haar disorder and for Haar rotations resampled each Floquet period.

Taken together, our finding of a finite-gap onset at linear doping and prior results on Haar-like scrambling suggest parametrically similar scales for relaxation- and scrambling-based diagnostics.
Interpreting this comparison, however, requires some care.
Most existing results are established in circuit ensembles with a different microscopic architecture, typically at greater depth or in different geometries than the shallow one-dimensional setting analyzed here.
Bringing these diagnostics into a single controlled framework therefore calls for extending the present model beyond one dimension and beyond the shallow-depth setting.
An important direction for future work is to explore whether the finite-gap scale correlates with approximate unitary $t$-design formation, thereby clarifying how these two diagnostics are related.
Ultimately, establishing such links between complementary diagnostics may help develop a more unified picture of quantum chaos beyond diagnostic-specific characterizations.

\acknowledgments
We thank Zi-Wen Liu, Lucas Sa, Jaewon Kim, and Sagar Vijay for fruitful discussions.
The work is supported by the faculty startup grant at the University of Illinois, Urbana-Champaign and the IBM-Illinois Discovery
Accelerator Institute. 
H.E.K. acknowledges support by the education and training program of the Quantum Information Research Support Center, funded through the National research foundation of Korea (NRF) by the Ministry of science and ICT (MSIT) of the Korean government(No. RS-2023-NR057243 and No. RS-2024-00432214).

\bibliography{ref_main}% Produces the bibliography via BibTeX.

@article{Spreading2018,
  title = {Operator Spreading in Random Unitary Circuits},
  author = {Nahum, Adam and Vijay, Sagar and Haah, Jeongwan},
  journal = {Phys. Rev. X},
  volume = {8},
  issue = {2},
  pages = {021014},
  numpages = {30},
  year = {2018},
  month = {Apr},
  publisher = {American Physical Society},
  doi = {10.1103/PhysRevX.8.021014},
  url = {https://link.aps.org/doi/10.1103/PhysRevX.8.021014}
}

@incollection{haake1991quantum,
  title={Quantum signatures of chaos},
  author={Haake, Fritz},
  booktitle={Quantum coherence in mesoscopic systems},
  pages={583--595},
  year={1991},
  publisher={Springer}
}

@article{bohigas1984characterization,
  title = {Characterization of Chaotic Quantum Spectra and Universality of Level Fluctuation Laws},
  author = {Bohigas, O. and Giannoni, M. J. and Schmit, C.},
  journal = {Phys. Rev. Lett.},
  volume = {52},
  issue = {1},
  pages = {1--4},
  numpages = {0},
  year = {1984},
  month = {Jan},
  publisher = {American Physical Society},
  doi = {10.1103/PhysRevLett.52.1},
  url = {https://link.aps.org/doi/10.1103/PhysRevLett.52.1}
}

@article{guhr1998random,
  title={Random-matrix theories in quantum physics: common concepts},
  author={Guhr, Thomas and M{\"u}ller--Groeling, Axel and Weidenm{\"u}ller, Hans A},
  journal={Physics Reports},
  volume={299},
  number={4-6},
  pages={189--425},
  year={1998},
  publisher={Elsevier},
  url={https://doi.org/10.1016/S0370-1573(97)00088-4}
}

@article{d2016quantum,
  title={From quantum chaos and eigenstate thermalization to statistical mechanics and thermodynamics},
  author={D'Alessio, Luca and Kafri, Yariv and Polkovnikov, Anatoli and Rigol, Marcos},
  journal={Advances in Physics},
  volume={65},
  number={3},
  pages={239--362},
  year={2016},
  publisher={Taylor \& Francis},
  url={https://doi.org/10.1080/00018732.2016.1198134}
}

@article{deutsch2018eigenstate,
  title={Eigenstate thermalization hypothesis},
  author={Deutsch, Joshua M},
  journal={Reports on Progress in Physics},
  volume={81},
  number={8},
  pages={082001},
  year={2018},
  publisher={IOP Publishing},
  url = {https://doi.org/10.1088/1361-6633/aac9f1}
}

@article{eisert2015quantum,
  title={Quantum many-body systems out of equilibrium},
  author={Eisert, Jens and Friesdorf, Mathis and Gogolin, Christian},
  journal={Nature Physics},
  volume={11},
  number={2},
  pages={124--130},
  year={2015},
  publisher={Nature Publishing Group UK London},
  url={https://doi.org/10.1038/nphys3215}
}

@article{leblond2021universality,
  title = {Universality in the onset of quantum chaos in many-body systems},
  author = {LeBlond, Tyler and Sels, Dries and Polkovnikov, Anatoli and Rigol, Marcos},
  journal = {Phys. Rev. B},
  volume = {104},
  issue = {20},
  pages = {L201117},
  numpages = {7},
  year = {2021},
  month = {Nov},
  publisher = {American Physical Society},
  doi = {10.1103/PhysRevB.104.L201117},
  url = {https://link.aps.org/doi/10.1103/PhysRevB.104.L201117}
}

@article{roberts2017chaos,
  title={Chaos and complexity by design},
  author={Roberts, Daniel A and Yoshida, Beni},
  journal={Journal of High Energy Physics},
  volume={2017},
  number={4},
  pages={1--64},
  year={2017},
  publisher={Springer},
  url={https://doi.org/10.1007/JHEP04(2017)121}
}

@article{harrow2021separation,
  title = {Separation of Out-Of-Time-Ordered Correlation and Entanglement},
  author = {Harrow, Aram W. and Kong, Linghang and Liu, Zi-Wen and Mehraban, Saeed and Shor, Peter W.},
  journal = {PRX Quantum},
  volume = {2},
  issue = {2},
  pages = {020339},
  numpages = {12},
  year = {2021},
  month = {Jun},
  publisher = {American Physical Society},
  doi = {10.1103/PRXQuantum.2.020339},
  url = {https://link.aps.org/doi/10.1103/PRXQuantum.2.020339}
}

@article{hosur2016chaos,
  title={Chaos in quantum channels},
  author={Hosur, Pavan and Qi, Xiao-Liang and Roberts, Daniel A and Yoshida, Beni},
  journal={Journal of High Energy Physics},
  volume={2016},
  number={2},
  pages={1--49},
  year={2016},
  publisher={Springer},
  url={https://doi.org/10.1007/JHEP02(2016)004}
}

@article{nahum2018operator,
  title = {Operator Spreading in Random Unitary Circuits},
  author = {Nahum, Adam and Vijay, Sagar and Haah, Jeongwan},
  journal = {Phys. Rev. X},
  volume = {8},
  issue = {2},
  pages = {021014},
  numpages = {30},
  year = {2018},
  month = {Apr},
  publisher = {American Physical Society},
  doi = {10.1103/PhysRevX.8.021014},
  url = {https://link.aps.org/doi/10.1103/PhysRevX.8.021014}
}

@article{nahum2017quantum,
  title = {Quantum Entanglement Growth under Random Unitary Dynamics},
  author = {Nahum, Adam and Ruhman, Jonathan and Vijay, Sagar and Haah, Jeongwan},
  journal = {Phys. Rev. X},
  volume = {7},
  issue = {3},
  pages = {031016},
  numpages = {30},
  year = {2017},
  month = {Jul},
  publisher = {American Physical Society},
  doi = {10.1103/PhysRevX.7.031016},
  url = {https://link.aps.org/doi/10.1103/PhysRevX.7.031016}
}

@article{prosen2007efficiency,
  title = {Is the efficiency of classical simulations of quantum dynamics related to integrability?},
  author = {Prosen, Toma\ifmmode \check{z}\else \v{z}\fi{} and \ifmmode \check{Z}\else \v{Z}\fi{}nidari\ifmmode \check{c}\else \v{c}\fi{}, Marko},
  journal = {Phys. Rev. E},
  volume = {75},
  issue = {1},
  pages = {015202},
  numpages = {4},
  year = {2007},
  month = {Jan},
  publisher = {American Physical Society},
  doi = {10.1103/PhysRevE.75.015202},
  url = {https://link.aps.org/doi/10.1103/PhysRevE.75.015202}
}

@article{dowling2023scrambling,
  title = {Scrambling Is Necessary but Not Sufficient for Chaos},
  author = {Dowling, Neil and Kos, Pavel and Modi, Kavan},
  journal = {Phys. Rev. Lett.},
  volume = {131},
  issue = {18},
  pages = {180403},
  numpages = {6},
  year = {2023},
  month = {Nov},
  publisher = {American Physical Society},
  doi = {10.1103/PhysRevLett.131.180403},
  url = {https://link.aps.org/doi/10.1103/PhysRevLett.131.180403}
}

@article{von2018operator,
  title = {Operator Hydrodynamics, OTOCs, and Entanglement Growth in Systems without Conservation Laws},
  author = {von Keyserlingk, C. W. and Rakovszky, Tibor and Pollmann, Frank and Sondhi, S. L.},
  journal = {Phys. Rev. X},
  volume = {8},
  issue = {2},
  pages = {021013},
  numpages = {19},
  year = {2018},
  month = {Apr},
  publisher = {American Physical Society},
  doi = {10.1103/PhysRevX.8.021013},
  url = {https://link.aps.org/doi/10.1103/PhysRevX.8.021013}
}

@article{turner2018weak,
  title={Weak ergodicity breaking from quantum many-body scars},
  author={Turner, Christopher J and Michailidis, Alexios A and Abanin, Dmitry A and Serbyn, Maksym and Papi{\'c}, Zlatko},
  journal={Nature Physics},
  volume={14},
  number={7},
  pages={745--749},
  year={2018},
  publisher={Nature Publishing Group UK London},
  url = {https://doi.org/10.1038/s41567-018-0137-5}
}

@article{pizzi2025genuine,
  title={Genuine quantum scars in many-body spin systems},
  author={Pizzi, Andrea and Kwan, Long-Hei and Evrard, Bertrand and Dag, Ceren B and Knolle, Johannes},
  journal={Nature Communications},
  volume={16},
  number={1},
  pages={6722},
  year={2025},
  publisher={Nature Publishing Group UK London},
  url={https://doi.org/10.1038/s41467-025-61765-3}
}

@article{zhou2020single,
	title={{Single T gate in a Clifford circuit drives transition to universal entanglement spectrum statistics}},
	author={Shiyu Zhou and Zhi-Cheng Yang and Alioscia Hamma and Claudio Chamon},
	journal={SciPost Phys.},
	volume={9},
	pages={087},
	year={2020},
	publisher={SciPost},
	doi={10.21468/SciPostPhys.9.6.087},
	url={https://scipost.org/10.21468/SciPostPhys.9.6.087}
}

@article{leone2021quantum,
  title={Quantum chaos is quantum},
  author={Leone, Lorenzo and Oliviero, Salvatore FE and Zhou, You and Hamma, Alioscia},
  journal={Quantum},
  volume={5},
  pages={453},
  year={2021},
  publisher={Verein zur F{\"o}rderung des Open Access Publizierens in den Quantenwissenschaften},
  url={https://doi.org/10.22331/q-2021-05-04-453}
}

@article{magni2025quantum,
  title={Quantum Complexity and Chaos in Many-Qudit Doped Clifford Circuits},
  author={Magni, Beatrice and Turkeshi, Xhek},
  journal={arXiv preprint arXiv:2506.02127},
  year={2025},
  url={https://doi.org/10.48550/arXiv.2506.02127}
}

@article{mori2024liouvillian,
  title = {Liouvillian-gap analysis of open quantum many-body systems in the weak dissipation limit},
  author = {Mori, Takashi},
  journal = {Phys. Rev. B},
  volume = {109},
  issue = {6},
  pages = {064311},
  numpages = {10},
  year = {2024},
  month = {Feb},
  publisher = {American Physical Society},
  doi = {10.1103/PhysRevB.109.064311},
  url = {https://link.aps.org/doi/10.1103/PhysRevB.109.064311}
}

@article{zhang2024thermalization,
  title={Thermalization rates and quantum Ruelle-Pollicott resonances: insights from operator hydrodynamics},
  author={Zhang, Carolyn and Nie, Laimei and von Keyserlingk, Curt},
  journal={arXiv preprint arXiv:2409.17251},
  year={2024},
  url={https://doi.org/10.48550/arXiv.2409.17251}
}

@article{jacoby2025spectral,
  title = {Spectral gaps of local quantum channels in the weak-dissipation limit},
  author = {Jacoby, J. Alexander and Huse, David A. and Gopalakrishnan, Sarang},
  journal = {Phys. Rev. B},
  volume = {111},
  issue = {10},
  pages = {104303},
  numpages = {8},
  year = {2025},
  month = {Mar},
  publisher = {American Physical Society},
  doi = {10.1103/PhysRevB.111.104303},
  url = {https://link.aps.org/doi/10.1103/PhysRevB.111.104303}
}

@article{yoshimura2024robustness,
  title={Robustness of quantum chaos and anomalous relaxation in open quantum circuits},
  author={Yoshimura, Takato and S{\'a}, Lucas},
  journal={Nature Communications},
  volume={15},
  number={1},
  pages={9808},
  year={2024},
  publisher={Nature Publishing Group UK London},
  url={https://doi.org/10.1038/s41467-024-54164-7}
}

@article{yoshimura2025theory,
  title = {Theory of irreversibility in quantum many-body systems},
  author = {Yoshimura, Takato and S\'a, Lucas},
  journal = {Phys. Rev. E},
  volume = {111},
  issue = {6},
  pages = {064135},
  numpages = {22},
  year = {2025},
  month = {Jun},
  publisher = {American Physical Society},
  doi = {10.1103/82f6-qdyd},
  url = {https://link.aps.org/doi/10.1103/82f6-qdyd}
}

@article{garcia2023keldysh,
  title = {Keldysh wormholes and anomalous relaxation in the dissipative Sachdev-Ye-Kitaev model},
  author = {Garc\'{\i}a-Garc\'{\i}a, Antonio M. and S\'a, Lucas and Verbaarschot, Jacobus J. M. and Zheng, Jie Ping},
  journal = {Phys. Rev. D},
  volume = {107},
  issue = {10},
  pages = {106006},
  numpages = {27},
  year = {2023},
  month = {May},
  publisher = {American Physical Society},
  doi = {10.1103/PhysRevD.107.106006},
  url = {https://link.aps.org/doi/10.1103/PhysRevD.107.106006}
}

@article{vznidarivc2024momentum,
  title = {Momentum-dependent quantum Ruelle-Pollicott resonances in translationally invariant many-body systems},
  author = {\ifmmode \check{Z}\else \v{Z}\fi{}nidari\ifmmode \check{c}\else \v{c}\fi{}, Marko},
  journal = {Phys. Rev. E},
  volume = {110},
  issue = {5},
  pages = {054204},
  numpages = {13},
  year = {2024},
  month = {Nov},
  publisher = {American Physical Society},
  doi = {10.1103/PhysRevE.110.054204},
  url = {https://link.aps.org/doi/10.1103/PhysRevE.110.054204}
}

@article{liao2022emergence,
  title = {Emergence of many-body quantum chaos via spontaneous breaking of unitarity},
  author = {Liao, Yunxiang and Galitski, Victor},
  journal = {Phys. Rev. B},
  volume = {105},
  issue = {14},
  pages = {L140202},
  numpages = {6},
  year = {2022},
  month = {Apr},
  publisher = {American Physical Society},
  doi = {10.1103/PhysRevB.105.L140202},
  url = {https://link.aps.org/doi/10.1103/PhysRevB.105.L140202}
}

@article{duh2025ruelle,
  title={Ruelle-Pollicott resonances of diffusive U (1)-invariant qubit circuits},
  author={Duh, Urban and {\v{Z}}nidari{\v{c}}, Marko},
  journal={arXiv preprint arXiv:2506.24097},
  year={2025},
  url={https://doi.org/10.48550/arXiv.2506.24097}
}

@article{grier2022classification,
  title={The classification of Clifford gates over qubits},
  author={Grier, Daniel and Schaeffer, Luke},
  journal={Quantum},
  volume={6},
  pages={734},
  year={2022},
  publisher={Verein zur F{\"o}rderung des Open Access Publizierens in den Quantenwissenschaften},
  url={https://doi.org/10.22331/q-2022-06-13-734}
}

@article{kong2024convergence,
  title={Convergence efficiency of quantum gates and circuits},
  author={Kong, Linghang and Li, Zimu and Liu, Zi-Wen},
  journal={arXiv preprint arXiv:2411.04898},
  year={2024},
  url={https://doi.org/10.48550/arXiv.2411.04898}
}

@article{kovacs2024operator,
  title={Operator space fragmentation in perturbed Floquet-Clifford circuits},
  author={Kov{\'a}cs, Marcell D and Turner, Christopher J and Masanes, Lluis and Pal, Arijeet},
  journal={arXiv preprint arXiv:2408.01545},
  year={2024},
  url={https://doi.org/10.48550/arXiv.2408.01545}
}

@book{nielsen2010quantum,
  title={Quantum computation and quantum information},
  author={Nielsen, Michael A and Chuang, Isaac L},
  year={2010},
  publisher={Cambridge university press}
}

@book{varga2011gervsgorin,
  title={Ger{\v{s}}gorin and his circles},
  author={Varga, Richard S},
  volume={36},
  year={2011},
  publisher={Springer Science \& Business Media}
}

@article{bertini2019exact,
  title = {Exact Correlation Functions for Dual-Unitary Lattice Models in $1+1$ Dimensions},
  author = {Bertini, Bruno and Kos, Pavel and Prosen, Toma\ifmmode \check{z}\else \v{z}\fi{}},
  journal = {Phys. Rev. Lett.},
  volume = {123},
  issue = {21},
  pages = {210601},
  numpages = {6},
  year = {2019},
  month = {Nov},
  publisher = {American Physical Society},
  doi = {10.1103/PhysRevLett.123.210601},
  url = {https://link.aps.org/doi/10.1103/PhysRevLett.123.210601}
}

@article{bertini2018exact,
  title = {Exact Spectral Form Factor in a Minimal Model of Many-Body Quantum Chaos},
  author = {Bertini, Bruno and Kos, Pavel and Prosen, Toma\ifmmode \check{z}\else \v{z}\fi{}},
  journal = {Phys. Rev. Lett.},
  volume = {121},
  issue = {26},
  pages = {264101},
  numpages = {6},
  year = {2018},
  month = {Dec},
  publisher = {American Physical Society},
  doi = {10.1103/PhysRevLett.121.264101},
  url = {https://link.aps.org/doi/10.1103/PhysRevLett.121.264101}
}

@article{flack2020statistics,
  title = {Statistics of the spectral form factor in the self-dual kicked Ising model},
  author = {Flack, Ana and Bertini, Bruno and Prosen, Toma\ifmmode \check{z}\else \v{z}\fi{}},
  journal = {Phys. Rev. Res.},
  volume = {2},
  issue = {4},
  pages = {043403},
  numpages = {14},
  year = {2020},
  month = {Dec},
  publisher = {American Physical Society},
  doi = {10.1103/PhysRevResearch.2.043403},
  url = {https://link.aps.org/doi/10.1103/PhysRevResearch.2.043403}
}

@article{bertini2019entanglement,
  title = {Entanglement Spreading in a Minimal Model of Maximal Many-Body Quantum Chaos},
  author = {Bertini, Bruno and Kos, Pavel and Prosen, Toma\ifmmode \check{z}\else \v{z}\fi{}},
  journal = {Phys. Rev. X},
  volume = {9},
  issue = {2},
  pages = {021033},
  numpages = {27},
  year = {2019},
  month = {May},
  publisher = {American Physical Society},
  doi = {10.1103/PhysRevX.9.021033},
  url = {https://link.aps.org/doi/10.1103/PhysRevX.9.021033}
}

@article{bertini2020operator,
  title={Operator entanglement in local quantum circuits I: Chaotic dual-unitary circuits},
  author={Bertini, Bruno and Kos, Pavel and Prosen, Toma{\v{z}}},
  journal={SciPost Physics},
  volume={8},
  number={4},
  pages={067},
  year={2020},
  url={https://doi.org/10.21468/SciPostPhys.8.4.067}
}

@article{Haferkamp2022random,
  doi = {10.22331/q-2022-09-08-795},
  url = {https://doi.org/10.22331/q-2022-09-08-795},
  title = {Random quantum circuits are approximate unitary {$t$}-designs in depth {$O\left(nt^{5+o(1)}\right)$}},
  author = {Haferkamp, Jonas},
  journal = {{Quantum}},
  issn = {2521-327X},
  publisher = {{Verein zur F{\"{o}}rderung des Open Access Publizierens in den Quantenwissenschaften}},
  volume = {6},
  pages = {795},
  month = sep,
  year = {2022}
}

@article{harrow2023approximate,
  title={Approximate unitary t-designs by short random quantum circuits using nearest-neighbor and long-range gates},
  author={Harrow, Aram W and Mehraban, Saeed},
  journal={Communications in Mathematical Physics},
  volume={401},
  number={2},
  pages={1531--1626},
  year={2023},
  publisher={Springer},
  url ={https://doi.org/10.1007/s00220-023-04675-z}
}

@article{deutsch1991quantum,
  title = {Quantum statistical mechanics in a closed system},
  author = {Deutsch, J. M.},
  journal = {Phys. Rev. A},
  volume = {43},
  issue = {4},
  pages = {2046--2049},
  numpages = {0},
  year = {1991},
  month = {Feb},
  publisher = {American Physical Society},
  doi = {10.1103/PhysRevA.43.2046},
  url = {https://link.aps.org/doi/10.1103/PhysRevA.43.2046}
}

@article{srednicki1994chaos,
  title = {Chaos and quantum thermalization},
  author = {Srednicki, Mark},
  journal = {Phys. Rev. E},
  volume = {50},
  issue = {2},
  pages = {888--901},
  numpages = {0},
  year = {1994},
  month = {Aug},
  publisher = {American Physical Society},
  doi = {10.1103/PhysRevE.50.888},
  url = {https://link.aps.org/doi/10.1103/PhysRevE.50.888}
}

@article{rigol2008thermalization,
  title={Thermalization and its mechanism for generic isolated quantum systems},
  author={Rigol, Marcos and Dunjko, Vanja and Olshanii, Maxim},
  journal={Nature},
  volume={452},
  number={7189},
  pages={854--858},
  year={2008},
  publisher={Nature Publishing Group UK London},
  url={https://doi.org/10.1038/nature06838}
}

@article{mori2018thermalization,
  title={Thermalization and prethermalization in isolated quantum systems: a theoretical overview},
  author={Mori, Takashi and Ikeda, Tatsuhiko N and Kaminishi, Eriko and Ueda, Masahito},
  journal={Journal of Physics B: Atomic, Molecular and Optical Physics},
  volume={51},
  number={11},
  pages={112001},
  year={2018},
  publisher={IOP Publishing},
  url = {https://doi.org/10.1088/1361-6455/aabcdf}
}

\clearpage %new page

\appendix
\onecolumngrid % make as one-column

\section{Derivation of the depolarizing channel action}
\label{app:dep}

We start from the standard Markovian depolarization master equation on a single qubit,
\begin{align}
\dot\rho(t)=-\gamma_0\Bigl(\rho(t)-\frac{I}{2}\Bigr).
\label{eq:dep_master}
\end{align}
Its solution is
\begin{align}
\rho(t)=\frac{I}{2}+e^{-\gamma_0 t}\Bigl(\rho(0)-\frac{I}{2}\Bigr).
\label{eq:dep_solution_rho}
\end{align}
We write this evolution as a quantum channel by defining $\mathcal{N}^{\mathrm{dep}}_{1}$ through
$\rho(t)=\mathcal{N}^{\mathrm{dep}}_{1}(\rho(0))$.
Using the following identity
\begin{align}
\frac{I}{2}=\frac14\bigl(\rho+X\rho X+Y\rho Y+Z\rho Z\bigr),
\label{eq:pauli_twirl}
\end{align}
and substituting it into Eq.~\eqref{eq:dep_solution_rho}, one immediately finds that $\mathcal{N}^{\mathrm{dep}}_{1}$
acts diagonally in the Pauli basis:
\begin{align}
\mathcal{N}^{\mathrm{dep}}_{1}(I)=I,
\qquad
\mathcal{N}^{\mathrm{dep}}_{1}(P)=e^{-\gamma}P,
\quad P\in\{X,Y,Z\},
\label{eq:dep_pauli_action_compact}
\end{align}
where $\gamma:=\gamma_0 t$ is dissipation strength.

For an $N$-qubit chain, the full noise layer acts independently on each site,
$\mathcal{N}=\bigotimes_{j=1}^N \mathcal{N}_1^{(j)}$.
Applying Eq.~\eqref{eq:dep_pauli_action_compact} sitewise yields
\begin{align}
\mathcal{N}(S)
=\bigotimes_{j=1}^N \mathcal{N}_1^{(j)}(\sigma_{\mu_j}^j)
=\Bigl(\prod_{j:\mu_j\neq 0} e^{-\gamma}\Bigr) S
=e^{-\gamma\,w(S)}\,S,
\label{eq:depol_diag_app}
\end{align}
which is Eq.~\eqref{eq:depol_diag} in the main text.

\section{Haar average of the logarithmic observable}
\label{app:haar-average}

For any Haar-random single-qubit unitary \(U \in \mathrm{SU}(2)\), we define the Pauli transfer matrix elements
\begin{equation}
    U_{ij} := \frac{1}{2} \mathrm{Tr}\left[ U \sigma_i U^\dagger \sigma_j \right], \qquad i,j \in \{x,y,z\},
    \label{eq:app_uij}
\end{equation}
which describe the action of the unitary channel \(\mathcal{U}(\rho) = U\rho U^\dagger\) in the Pauli basis. The \(3 \times 3\) real matrix \(U_{ij}\) corresponds to the adjoint representation of \(U\), and forms a special orthogonal matrix in \(\mathrm{SO}(3)\).

Under Haar measure, the adjoint representation of \(U\) becomes Haar-random in \(\mathrm{SO}(3)\). In this ensemble, each element \(U_{ij}\) has the same marginal distribution as a coordinate of a uniformly random unit vector on the Bloch sphere. In particular, for all \(i,j \in \{x,y,z\}\),
\begin{equation}
    \mathbb{E}_{U \sim \mathrm{Haar}} \left[ \log \left| U_{ij} \right| \right] = -1.
    \label{eq:app_log_uij}
\end{equation}
This follows from the fact that each \(U_{ij}\) is distributed uniformly in \([-1,1]\), and hence
\begin{equation}
\label{eq:app_final}
    \mathbb{E} \left[ \log |U_{ij}| \right]
    = \int_{-1}^{1} \frac{1}{2} \log|x|\, dx
    = \int_0^1 \log x\, dx = -1.
\end{equation}
Since $U_{ij} = 0$ occurs with zero probability under Haar measure, the logarithmic expectation is well-defined despite the singularity.
\\

As required in the derivation of the Liouvillian gap upper bound in Appendix~\ref{app:Lgap_upp_bound}, we further evaluate the Haar average of a logarithmic observable defined by a bilinear combination of Pauli transfer matrix elements:
\begin{equation}
    \mathbb{E}_{U,V \sim \mathrm{Haar}} \left[ \log \left| U_{33} V_{22} + U_{32} V_{12} \right| \right],
\end{equation}
where \(U, V \in \mathrm{SU}(2)\) are independently drawn Haar-random unitaries, and \(U_{ij} := \frac{1}{2} \mathrm{Tr}[U \sigma_i U^\dagger \sigma_j]\) as in Eq.~\eqref{eq:app_uij}.

We interpret the expression as a scalar product between two-dimensional projections:
\begin{equation}
    U_{33} V_{22} + U_{32} V_{12} = (U_{32},\, U_{33}) \cdot (V_{12},\, V_{22}) =: \vec{u} \cdot \vec{v}.
\end{equation}
The vector \(\vec{u} = (U_{32}, U_{33})\) consists of two entries from the third row of \(U_{ij}\), and \(\vec{v} = (V_{12}, V_{22})\) consists of two entries from the second column of \(V_{ij}\). Under Haar measure, these correspond to random two-dimensional projections of unit vectors on the Bloch sphere:
\begin{align}
    \|\vec{u}\| &= \sqrt{1 - U_{31}^2}, \\
    \|\vec{v}\| &= \sqrt{1 - V_{32}^2},
\end{align}
with independent angles between \(\vec{u}\) and \(\vec{v}\) uniformly distributed on the circle.

Hence, the logarithmic expectation separates as
\begin{align}
    \mathbb{E} \left[ \log \left| U_{33} V_{22} + U_{32} V_{12} \right| \right]
    &= \mathbb{E}[\log \|\vec{u}\|] + \mathbb{E}[\log \|\vec{v}\|] + \mathbb{E}[\log |\cos\theta|] \\
    &= (\log 2 - 1) + (\log 2 - 1) + (-\log 2) = \log 2 - 2.
\end{align}

We conclude:
\begin{equation}
    \mathbb{E}_{U,V \sim \mathrm{Haar}} \left[ \log \left| U_{33} V_{22} + U_{32} V_{12} \right| \right]
    = \log 2 - 2.
\end{equation}
This result holds for any bilinear expression of the form \(U_{ij}V_{kl} + U_{mn}V_{pq}\), provided that each pair \((U_{ij}, U_{mn})\) and \((V_{kl}, V_{pq})\) lies in a common row or column of the corresponding SO(3) transfer matrix.

\section{Single-qubit unitary channels as linear combinations of Clifford conjugations}
\label{app:clifford-decomp}

In this appendix, we show that any single-qubit unitary conjugation channel can be expressed as a linear combination of Clifford conjugation maps. This structural property justifies the use of discrete Clifford orbits in our analysis of operator spreading and eigenmode suppression, even when the circuit contains non-Clifford Haar-doped layers.
\\

A key observation is that for any fixed unitary $U \in \mathrm{SU}(2)$, the channel $\mathcal{U}(\cdot) := U (\cdot) U^\dagger$ defines a linear map on the space of single-qubit operators. When expressed in the Pauli basis $\{I, X, Y, Z\}$, this map takes a particularly simple form: the identity operator is mapped to itself, and there is no mixing between the identity and any other Pauli operators. That is, the identity component remains invariant, and the conjugation acts nontrivially only within the subspace spanned by $\{X, Y, Z\}$.
\\

The action of $\mathcal{U}$ on this traceless subspace can be viewed as an orthogonal transformation in $\mathbb{R}^3$, preserving both the norm and the linear structure of the Pauli vector components. In particular, the resulting $3 \times 3$ submatrix corresponds to a proper rotation and can be decomposed into a linear combination of permutation-like maps on the Pauli operators. A convenient partition of the nine possible transition elements within this subspace is given by the following three canonical permutation patterns:
\begin{itemize}
    \item[(i)] $X \mapsto X$, $Y \mapsto Y$, $Z \mapsto Z$,
    \item[(ii)] $X \mapsto Y$, $Y \mapsto Z$, $Z \mapsto X$,
    \item[(iii)] $X \mapsto Z$, $Z \mapsto Y$, $Y \mapsto X$.
\end{itemize}
Each of these patterns is realized, up to a sign, by conjugation under a single-qubit Clifford unitary: 
pattern (i) by the identity $I$, pattern (ii) by $SH$, and pattern (iii) by $HS$.
\\

For each of the three permutation patterns described above, we construct a family of maps by multiplying the entire transformation with one of the Pauli operators $X$, $Y$, or $Z$. These multiplications act as independent sign flips on the image of each Pauli operator under the given permutation, while preserving the permutation structure itself. The resulting three maps are linearly independent and span a three-dimensional real subspace of conjugation maps. Together, the spans associated with the three permutation patterns generate the full nine-dimensional space of single-qubit conjugation maps on traceless operators.
\\

Therefore, any single-qubit unitary channel can be expressed as a linear combination of Clifford conjugation maps as
\begin{align}
\mathcal{U}(\cdot) = U (\cdot) U^\dagger = \sum_{i=1}^{3} \left[ a_{1,i}\, \sigma_i (\cdot) \sigma_i + a_{2,i}\, HS\, \sigma_i (\cdot) \sigma_i (HS)^\dagger + a_{3,i}\, SH\, \sigma_i (\cdot) \sigma_i (SH)^\dagger \right],
\end{align}
for some real coefficients \( \{ a_{j,i} \in \mathbb{R} \} \).
This decomposition allows us to interpret arbitrary unitary conjugation as a finite linear combination over Clifford channels, and enables a fully discrete analysis of operator dynamics in Haar-doped circuits.

\section{Feingold Varga Theorem}
\label{app:Feingold_Varga}
\begin{lemma}
    Let
\begin{align}
K=\begin{pmatrix}A & B\\ C & D\end{pmatrix},\qquad
A\in\mathbb{C}^{n_A\times n_A},\;
B\in\mathbb{C}^{n_A\times n_D},\;
C\in\mathbb{C}^{n_D\times n_A},\;
D\in\mathbb{C}^{n_D\times n_D}.
\end{align}
Assume that \(A\) is nilpotent (i.e., every eigenvalue of \(A\) equals \(0\)).
Let \(\|\cdot\|\) denote any induced (operator) matrix norm (so that \(\|XY\|\le\|X\|\,\|Y\|\) for conformable \(X,Y\)).
Define the spectral radius of a square matrix \(X\) by
\begin{align}
\rho(X):=\max\{\,|\lambda|:\ \lambda\ \text{is an eigenvalue of }X\,\}.
\end{align}
Then the following upper bound holds:
\begin{align}
\rho(K)\ 
%\le\ \frac{\ \rho(D)+\sqrt{\rho(D)^2+4\,\|B\|\,\|C\|}\ }{2}\ 
\le\ \rho(D)+\sqrt{\|B\|\,\|C\|}.
\end{align}
\end{lemma}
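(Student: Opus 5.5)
The plan is to use the Schur-complement (block Gershgorin, Feingold--Varga) argument. As I explain below, the inequality can only be established after strengthening the hypothesis on $A$. Let $\lambda$ be an eigenvalue of $K$ with $|\lambda|>\rho(D)$; if there is none, the claim is trivial. Since $\lambda\neq 0$ and $A$ is nilpotent, $\lambda-A$ is invertible, with
\begin{align}
(\lambda-A)^{-1}=\sum_{k=0}^{n_A-1}\frac{A^k}{\lambda^{k+1}}.
\end{align}
Write an eigenvector as $(x,y)$. From the first block row, $x=(\lambda-A)^{-1}By$. Substituting into the second row gives $(\lambda-D)y=C(\lambda-A)^{-1}By$, and $y\neq0$, since otherwise $x=0$ as well. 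Taking norms yields the key inequality
\begin{align}
\|(\lambda-D)^{-1}\|^{-1}\,\|(\lambda-A)^{-1}\|^{-1}\le \|B\|\,\|C\|.
\end{align}

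To turn this into the stated bound I need two estimates: $\|(\lambda-A)^{-1}\|\le 1/|\lambda|$ and $\|(\lambda-D)^{-1}\|^{-1}\ge |\lambda|-\rho(D)$. Both hold exactly when the blocks are normal, in the spectral norm. A normal nilpotent $A$ is $A=0$, and for normal $D$ we have $\|(\lambda-D)^{-1}\|=1/\operatorname{dist}(\lambda,\operatorname{spec}D)$. The key inequality then becomes $|\lambda|\,(|\lambda|-\rho(D))\le\|B\|\|C\|$. Solving this quadratic gives
\begin{align}
|\lambda|\le \tfrac12\bigl(\rho(D)+\sqrt{\rho(D)^2+4\|B\|\|C\|}\bigr)\le\rho(D)+\sqrt{\|B\|\|C\|},
\end{align}
which is the claim. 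For general $D$ I would replace $\rho(D)$ by $\|D\|$, using $\|(\lambda-D)^{-1}\|\le(|\lambda|-\|D\|)^{-1}$ via a Neumann series. This is also what the application in Lemma~\ref{lem:gap-bound} actually controls, since there only a norm bound on $D$ is derived.

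The main obstacle is that nilpotency of $A$ alone does not give $\|(\lambda-A)^{-1}\|\le1/|\lambda|$, and in fact the lemma as literally stated fails without further control of $A$. Take $n_A=2$, $n_D=1$, $A=\begin{pmatrix}0&M\\0&0\end{pmatrix}$, $B=(0,b)^{T}$, $C=(c,0)$ and $D=0$. Then $K$ is a weighted 3-cycle with $\lambda^3=Mbc$, so $\rho(K)=(Mbc)^{1/3}$, which exceeds $\sqrt{bc}$ once $M$ is large. I would therefore first try to prove the version with $A=0$, or more generally with $\|(\lambda-A)^{-1}\|\le 1/|\lambda|$ assumed. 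Failing that, I would prove the honest general bound obtained from the finite Neumann sum, which reads
\begin{align}
(|\lambda|-\|D\|)\le \|B\|\|C\|\sum_{k=0}^{n_A-1}\|A\|^k/|\lambda|^{k+1}.
\end{align}
I would then check in the application that $\|\tilde\Phi^{\mathrm{trunc}}_w\|\le e^{-\gamma}$ is small enough in the strong-dissipation regime that the extra terms are subleading.
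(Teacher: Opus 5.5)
Your counterexample is correct, and the lemma is false as stated. With $b=c=1$ you get $K^3=M\,I$, so $\rho(K)=M^{1/3}$. Meanwhile $\rho(D)=0$ and $\|B\|=\|C\|=1$ in every $p$-induced norm, so the claimed bound $\rho(K)\le 1$ fails for $M>1$.

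The paper's proof uses the diagonal scaling $S_\tau=\mathrm{diag}(\tau I,I)$, then block Gershgorin, then optimizes over $\tau$. It breaks exactly where you locate the problem. It quotes the Feingold--Varga inclusion as ``$\mathrm{dist}(\lambda,\mathrm{spec}\,A)\le\|\tau^{-1}B\|$ or $\mathrm{dist}(\lambda,\mathrm{spec}\,D)\le\|\tau C\|$''. The actual theorem has $\|(A-\lambda)^{-1}\|^{-1}$ and $\|(D-\lambda)^{-1}\|^{-1}$ in place of these distances. Since $\|(X-\lambda)^{-1}\|^{-1}\le\mathrm{dist}(\lambda,\mathrm{spec}\,X)$, with equality in general only for normal $X$ in the spectral norm, the quoted version is strictly stronger than the theorem. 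It can fail as soon as $A$ is a nonzero nilpotent, which is what your example exploits. The step $|\lambda|\le\rho(D)+\tau\|C\|$ fails in the same way for non-normal $D$: a nilpotent $D$ gives the mirror-image weighted $3$-cycle.

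Once Feingold--Varga is stated correctly, the paper's route and yours coincide. Suppose $\|(\lambda-A)^{-1}\|^{-1}\,\|(\lambda-D)^{-1}\|^{-1}>\|B\|\,\|C\|$. Then any $\tau$ strictly between $\|B\|\,\|(\lambda-A)^{-1}\|$ and $\|(\lambda-D)^{-1}\|^{-1}/\|C\|$ excludes $\lambda$. So optimizing over $\tau$ gives precisely your key inequality, which your elimination of $x$ reaches in one line. Your repaired statements are also correct. With $A=0$ and $D$ normal in the spectral norm you get the quadratic bound and hence the claim. For general $D$ the same holds with $\|D\|$ in place of $\rho(D)$.

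One caution about the last step of your plan: in the application the extra Neumann terms are not subleading. With $\|A\|\le e^{-\gamma}$ and target $|\lambda|\approx e^{-(w/2+1)\gamma}$, the ratio $\|A\|/|\lambda|\approx e^{w\gamma/2}\gg 1$, so the top terms of the sum dominate. Feed in $\|B\|\,\|C\|\le e^{-(w+2)\gamma}$, $\|D\|\le e^{-(w+1)\gamma}$ and nilpotency index $k$. Your inequality then gives only $\Delta\gtrsim\gamma\,(w+k+1)/(k+1)$, which equals $(w/2+1)\gamma$ only when $k=1$.

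This cannot be improved from the block data alone. Take a weighted $L$-cycle that spends $L-1$ steps at weight one and one step at weight $w+1$. It satisfies all these bounds with $k=L-1$, yet has spectral radius $e^{-\gamma(1+w/L)}$. So the argument of Lemma~\ref{lem:gap-bound} needs more than nilpotency plus these norm bounds, for example control of the nilpotency index.
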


\begin{proof}
Let $K=\begin{pmatrix}A & B\\ C & D\end{pmatrix}$ with $A$ nilpotent (hence every eigenvalue of $A$ equals $0$), and let $\|\cdot\|$ be any induced matrix norm. For $\tau>0$ define
\begin{align}
S_\tau=\operatorname{diag}(\tau I_{n_A},\,I_{n_D}),\qquad
K_\tau:=S_\tau^{-1}KS_\tau
=
\begin{pmatrix}
A & \tau^{-1}B\\[2pt]
\tau C & D
\end{pmatrix}.
\end{align}
Similarity preserves eigenvalues, so $\rho(K_\tau)=\rho(K)$.

By the block Gershgorin inclusion (Feingold-Varga \cite{varga2011gervsgorin}), if $\lambda$ is an eigenvalue of $K_\tau$, then either
\begin{align}
\mathrm{dist}\big(\lambda,\{\text{eigenvalues of }A\}\big)\le \|\tau^{-1}B\|
\quad\text{or}\quad
\mathrm{dist}\big(\lambda,\{\text{eigenvalues of }D\}\big)\le \|\tau C\|.
\end{align}
Since every eigenvalue of $A$ equals $0$, the first alternative gives $|\lambda|\le \tau^{-1}\|B\|$, while the second implies $|\lambda|\le \rho(D)+\tau\|C\|$. Hence, for all $\tau>0$,
\begin{align}
\rho(K)=\rho(K_\tau)\ \le\ f(\tau):=\max\!\left\{\frac{\|B\|}{\tau},\ \rho(D)+\tau\|C\|\right\}.
\end{align}
Let $a:=\|B\|$, $b:=\rho(D)$, $c:=\|C\|$. The two branches $a/\tau$ and $b+c\tau$ intersect at the positive solution of $a/\tau=b+c\tau$, namely
\begin{align}
\tau_\star=\frac{-b+\sqrt{\,b^2+4ac\,}}{2c},
\end{align}
and at this point
\begin{align}
\min_{\tau>0} f(\tau)=f(\tau_\star)=\frac{\,b+\sqrt{\,b^2+4ac\,}}{2}.
\end{align}
Therefore
\begin{align}
\rho(K)\ \le\ \frac{\ \rho(D)+\sqrt{\rho(D)^2+4\,\|B\|\,\|C\|}\ }{2}.
\end{align}
Finally, using $\sqrt{\rho(D)^2+4ac}\le \rho(D)+2\sqrt{ac}$ for $a,c\ge0$, we obtain the simpler bound
\begin{align}
\rho(K)\ \le\ \rho(D)+\sqrt{\|B\|\,\|C\|}.
\end{align}
\end{proof}

\section{Liouvillian Gap of Staggered Doping Structure}
\label{app:struc_staggered}

We provide an analytic derivation of the Liouvillian gap in the staggered doping configuration, where every even site is doped (\(n_h = N/2\)) and every odd site is undoped.
We compute the asymptotic value of the gap in the thermodynamic limit \(N \to \infty\).

Numerical calculations show that for \(N \geq 6\), the only nonzero eigenoperators within the weight-2 truncated subspace are supported on the set \(\{Y_{2i}X_{2i+2} \mid i = 1,2,\ldots,N/2\}\).
This is because one Floquet circuit step maps each operator \(Y_{2i}X_{2i+2}\) to a linear combination of \(Y_{2i-2}X_{2i}\) and higher-weight terms, which are projected out by the weight-2 truncation.
\begin{figure}
    \centering
    \includegraphics[width=0.5\linewidth]{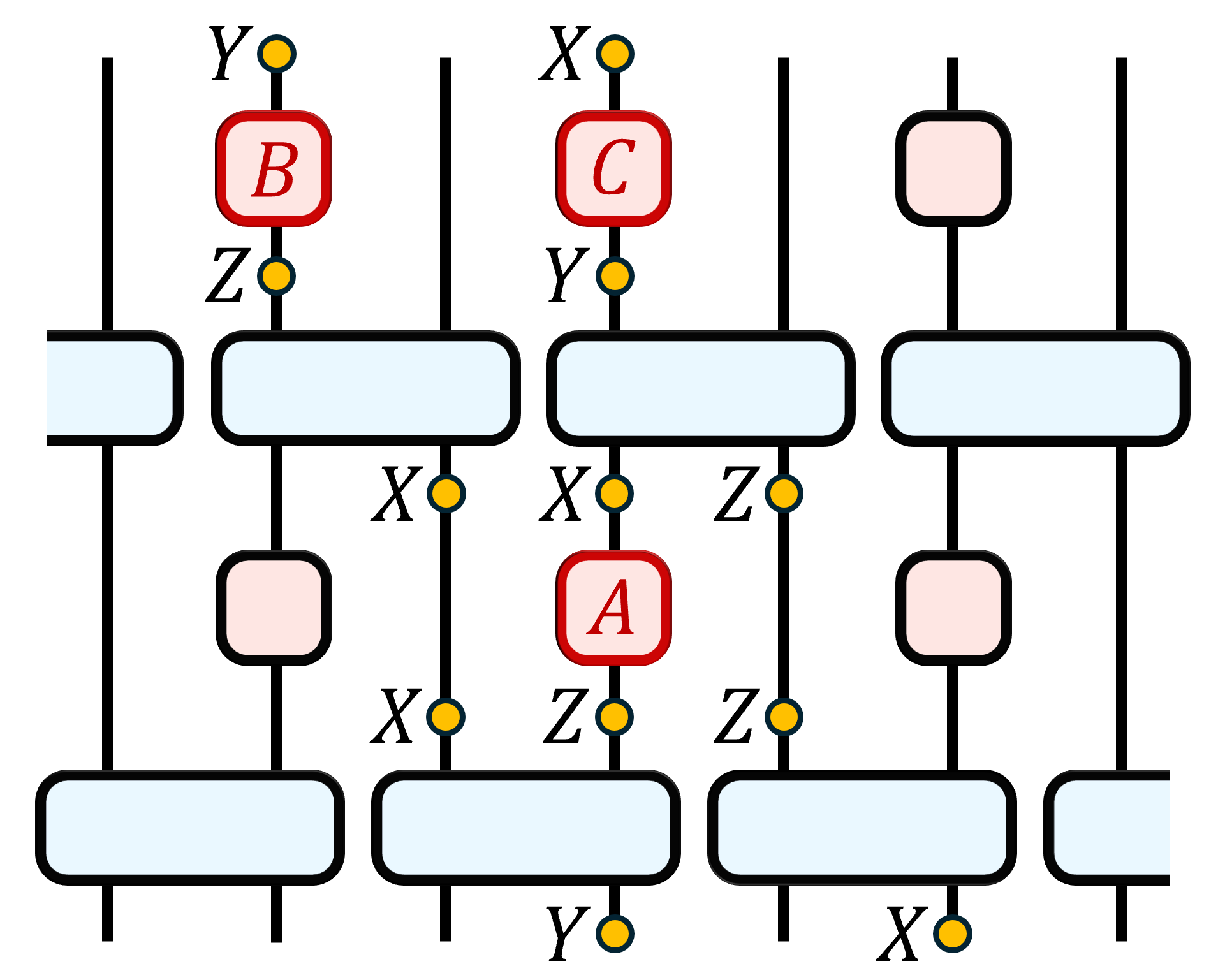}
    \caption{
    Weight-$2$ return cycle for staggered doping.
    Yellow dots mark nontrivial Pauli support (identities elsewhere).
    Under one Floquet period, $Y_{2i}X_{2i+2}$ maps to $Y_{2i-2}X_{2i}$ up to higher-weight terms discarded by the weight-$2$ truncation, yielding a two-site translation and a cycle.
    The three single-qubit Haar rotations encountered are labeled $A,B,C$, implementing $Z\!\to\!X$, $Z\!\to\!Y$, and $Y\!\to\!X$, and contributing $A_{ZX}B_{ZY}C_{YX}$ in Eq.~\eqref{eq:lgap_stag}.}
    \label{fig:stag_1f}
\end{figure}
The transformation of the surviving operators under one Floquet step is illustrated schematically in \figref{fig:stag_1f}.  
Yellow dots indicate the locations of nontrivial Pauli operators, while the absence of a dot denotes identity.  
As shown in \figref{fig:stag_1f},  the evolution involves three distinct single-qubit Haar unitaries, denoted \(A\), \(B\), and \(C\), which implement the Pauli rotations \(Z \to X\), \(Z \to Y\), and \(Y \to X\), respectively.

The Liouvillian gap in the staggered configuration can be computed via the same method used in the fully doped case.
It is given by
\begin{equation}
\label{eq:lgap_stag}
\begin{aligned}
    \Delta = 2\gamma
    - \frac{2}{N} \log \biggl(
        \left|
        \prod_{j=1}^{N/2}
        A^{j}_{ZX}B^{j}_{ZY}C^{j}_{YX}
        \right|
    \biggr),
\end{aligned}
\end{equation}
%C^{j-1} is equivalent to B^{j}
where \(A^j_{ZX}\), \(B^j_{ZY}\), and \(C^j_{YX}\) denote the matrix elements of the three Haar unitaries acting on \(j\) step.
Because our quantity is a logarithmic average, the dependence on \(\{A^j_{ZX}\}_{j=1}^{N/2}\), \(\{B^j_{ZY}\}_{j=1}^{N/2}\), and \(\{C^j_{YX}\}_{j=1}^{N/2}\) is immaterial and will be dropped.
In the large-\(N\) limit, Eq.~\eqref{eq:lgap_stag} simplifies to
\begin{align}
\label{eq:lgap_stag_1}
    \Delta = 
    2\gamma
    - \log\left| \exp\left(3\, \mathbb{E}_{U \sim \text{Haar}} \left[ \log |u| \right] \right) \right|,
\end{align}
where \(u\) denotes a representative non-diagonal Pauli-basis matrix element of a single-qubit Haar unitary. 
The form of Equation~\eqref{eq:lgap_stag_1} arises because all \(A^j\), \(B^j\), and \(C^j\) (for \(j = 1, \ldots, N/2\)) are independent Haar unitaries, yielding a total of \(3N/2\) independent random variables whose logarithms are averaged.
Substituting the Haar average of the logarithmic term, derived in Eq.~\eqref{eq:app_final}, into Eq.~\eqref{eq:lgap_stag_1}, we obtain
\begin{equation}
    \Delta = 2\gamma + 3.
\end{equation}

\section{Liouvillian Gap of Staggered-like Doping Structures at Weight-3 Truncation}
\label{app:struc_staggered_like}

\subsection{Existence of Cycle Formation}
\label{app:struc_staggered_like_cycle}

We show that cycles always appear in the support graph of the evolution operator, when projected onto the Pauli weight-3 subspace, for circuits with staggered-like doping configurations.
In these configurations, undoped qubits are separated by one or two doped sites.

As shown in Appendix~\ref{app:struc_staggered}, the staggered doping structure first allows non-nilpotency to emerge at Pauli weight \( w = 2 \), by enabling the transformation of operators such as \( Y_{2i}X_{2i+2} \to Y_{2i-2}X_{2i} \). As illustrated in \figref{fig:stag_1f}, this transformation is enabled when the local doping pattern takes the form
$\big[\!\bullet\!\circ\!\bullet\!\circ\!\;\big]$, where $\circ$ and $\bullet$ denote undoped and doped qubit lines, respectively.
Each dot in the pattern corresponds to the second through fifth qubit lines from the left in \figref{fig:stag_1f}.
The site labeled $X_{2i+2}$ before the transformation is not shown in this notation, as we are considering left-moving Pauli strings.

\begin{figure}
    \centering
    \includegraphics[width=0.98\linewidth]{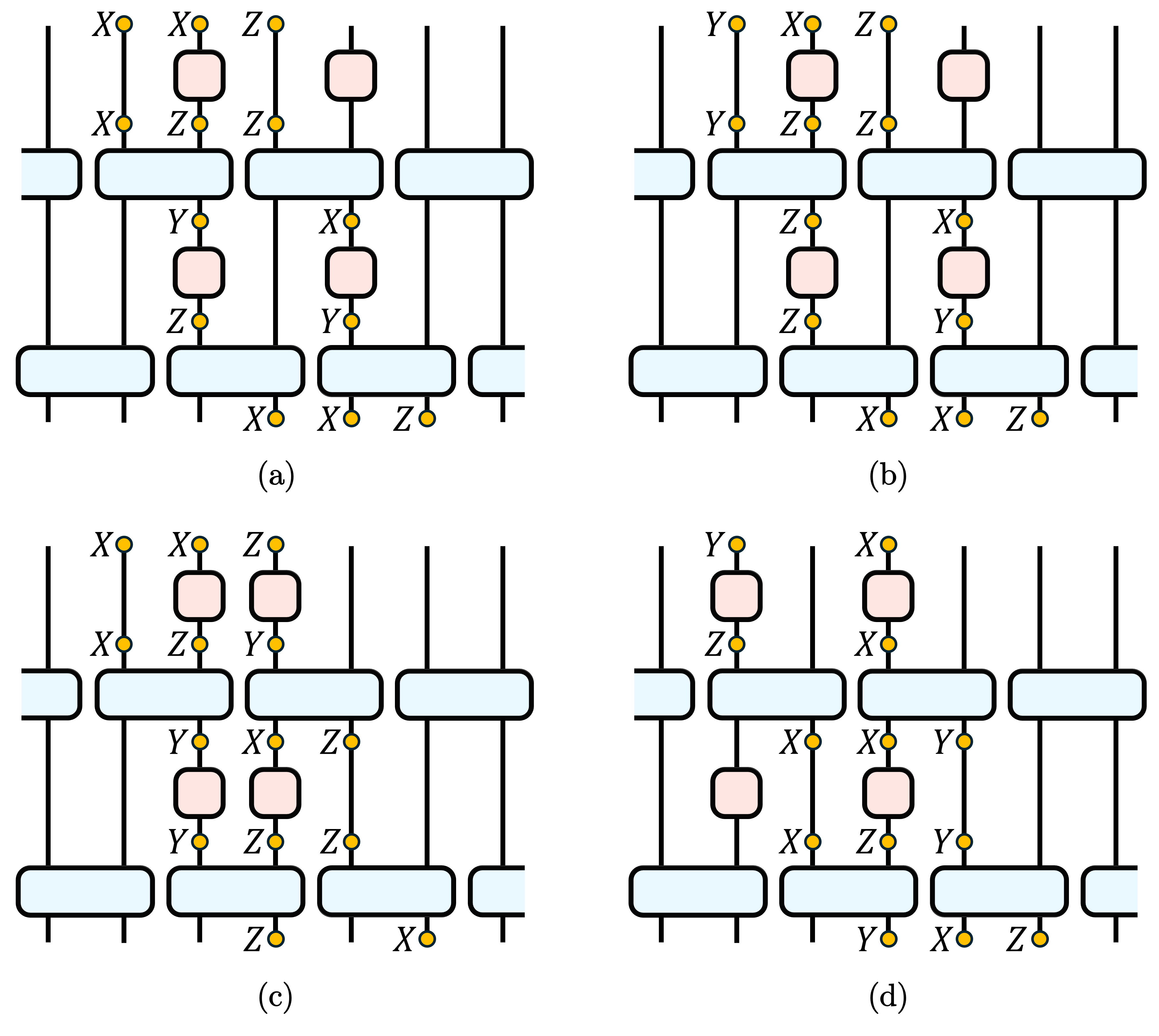}
    \caption{
    Additional one-step transformations used to build a weight-$3$ return cycle that occur in the staggered-like doping structure (cf.~\figref{fig:stag_1f}).
    Panels (a)–(d) show the required length-$3$ Pauli-string updates within the local patterns
$\big[\!\circ\!\bullet\!\circ\!\bullet\!\;\big]$, $\big[\!\circ\!\bullet\!\circ\!\bullet\!\;\big]$, $\big[\!\circ\!\bullet\!\bullet\!\circ\!\;\big]$, and $\big[\!\bullet\!\circ\!\bullet\!\circ\!\;\big]$, respectively, which realize the pathways in Eqs.~\eqref{eq:stag_like_path_1}--\eqref{eq:stag_like_path_3}.}

    \label{figs:staglike_1f}
\end{figure}

To fully establish the presence of a cycle at Pauli weight \( w = 3 \) for the staggered-like doping structure, four additional operator transformations are required. These transformations are illustrated in \figref{figs:staglike_1f}.
Figs.~\ref{fig:stag_1f} and~\ref{figs:staglike_1f} (d) both depict different operator transformations within the same circuit structure, $\big[\!\bullet\!\circ\!\bullet\!\circ\!\;\big]$. 
\figref{fig:stag_1f} shows the transformations \( Y_{2i}X_{2i+2} \to Y_{2i-2}X_{2i} \) and \( Y_{2i}X_{2i+2} \to Z_{2i-2}X_{2i} \); the latter corresponds to a variation in which the final Haar unitary applies a \( Z \to X \) transformation instead of \( Z \to Y \).
Similarly, \figref{figs:staglike_1f} (d) illustrates the transformation \( Y_{2i}X_{2i+1}Z_{2i+2} \to Y_{2i-2}X_{2i} \) and \( Y_{2i}X_{2i+1}Z_{2i+2} \to Z_{2i-2}X_{2i} \).
Figs.~\ref{figs:staglike_1f}(a) and (b) illustrate transformations within the circuit structure $\big[\!\circ\!\bullet\!\circ\!\bullet\!\;\big]$, corresponding to 
\( X_{2i}X_{2i+1}Z_{2i+2} \to X_{2i-2}X_{2i-1}Z_{2i} \) and 
\( X_{2i}X_{2i+1}Z_{2i+2} \to Y_{2i-2}X_{2i-1}Z_{2i} \), respectively.
Finally, \figref{figs:staglike_1f}(c) shows the transformation 
\( Z_{2i}X_{2i+2} \to X_{2i-2}X_{2i-1}Z_{2i} \), which arises from the circuit structure 
$\big[\!\bullet\!\circ\!\circ\!\bullet\!\;\big]$.

All of the transformations described above exhibit two key features. 
First, each Pauli string has length three. 
Second, the action consistently shifts both endpoints of the string two sites to the left.
These features imply that the Pauli string \( X_{2i} X_{2i+1} Z_{2i+2} \), which we denote as $XXZ$, can propagate leftward and return to its original form (up to translation) through three distinct pathways:
\begin{align}
&
XXZ
\xrightarrow{\circ\bullet\circ\bullet
,~\text{\figref{figs:staglike_1f} (b)}}
YXZ
\xrightarrow{\bullet\circ\bullet\circ
,~\text{\figref{figs:staglike_1f} (d)}}
ZIX
\xrightarrow{\circ\bullet\bullet\circ
,~\text{\figref{figs:staglike_1f} (c)}}
XXZ
,
\label{eq:stag_like_path_1}
\\
&
XXZ
\xrightarrow{\circ\bullet\circ\bullet
,~\text{\figref{figs:staglike_1f} (b)}}
YXZ
\xrightarrow{\bullet\circ\bullet\circ
,~\text{\figref{figs:staglike_1f} (d)}}
YIX
\rightarrow
\cdots
\rightarrow
YIX
\xrightarrow{\bullet\circ\bullet\circ
,~\text{\figref{fig:stag_1f}}}
ZIX
\xrightarrow{\circ\bullet\bullet\circ
,~\text{\figref{figs:staglike_1f} (c)}}
XXZ
,
\label{eq:stag_like_path_2}
\\
&
XXZ
\xrightarrow{\circ\bullet\circ\bullet
,~\text{\figref{figs:staglike_1f} (a)}}
XXZ
,
\label{eq:stag_like_path_3}
\end{align}
where in Eq.~\eqref{eq:stag_like_path_2}, the segment involving $YIX$ can repeat multiple times via the pattern 
$\big[\!\bullet\!\circ\!\bullet\!\circ\!\;\big]$.

When the sequence of transformations in Eq.~\eqref{eq:stag_like_path_1} is realized in the circuit, the corresponding doping structure and intermediate Pauli strings align vertically as
{
\setlength{\jot}{-2pt} 
\begin{align*}
&
\:\texttt{X}\,\texttt{X}\,\texttt{Z}
\\
&
\circ\bullet\bullet\circ 
\\
&
\quad\;\,
\:\texttt{Z}\,\texttt{I}\,\texttt{X}
\\
&
\quad\;\,
\bullet\circ\bullet\circ 
\\
&
\quad\;\,\quad\;\,
\:\texttt{Y}\,\texttt{X}\,\texttt{Z}
\\
&
\quad\;\,\quad\;\,
\circ\bullet\circ\bullet
\\
&
\quad\;\,\quad\;\,\quad\;\,
\:\texttt{X}\,\texttt{X}\,\texttt{Z}
\end{align*}
}
where the transformation sequence is displayed from bottom to top, matching the direction of operator propagation through the floquet steps.
Combining all segments, we obtain the following substructure of the circuit:
\begin{align}
\label{struc:stag_like_path_1}
\overline{\circ\bullet}
\bullet\circ\bullet\bullet
\underline{\circ\bullet}
\end{align}
Here, the underlined and overlined segments can be seamlessly connected, since both correspond to the leading \( XX \) portion of the $XXZ$ Pauli string.
Similarly, paths in Eqs.~\eqref{eq:stag_like_path_2} and~\eqref{eq:stag_like_path_3} correspond to the following substructures:
\begin{align}
\label{struc:stag_like_path_2}
\overline{\circ\bullet}
\bullet\circ
(\bullet\circ)^k
\bullet\bullet
\underline{\circ\bullet}
\end{align}
and
\begin{align}
\label{struc:stag_like_path_3}
\overline{\circ\bullet}
\underline{\circ\bullet}
,
\end{align}
respectively. Here, \( k \) denotes the number of repetitions of the pattern \( (\bullet\circ) \).

We conclude by noting that all staggered-like configurations can be generated by concatenating the substructures in Eqs.~\eqref{struc:stag_like_path_1}-\eqref{struc:stag_like_path_3}.
To make use of the structural features identified above, we represent all staggered-like configurations in the following form:
\begin{align}
\label{struc:stag_like_gen}
\overline{\circ\bullet}
(\circ\bullet)^{k_0-1}
(\circ\bullet\bullet)^{k_1}
(\circ\bullet)^{k_2}
(\circ\bullet\bullet)^{k_3}
\cdots
(\circ\bullet)^{k_{l-1}}
(\circ\bullet\bullet)^{k_l}
\underline{\circ\bullet},
\end{align}
where the overlined segment indicates the first \(\circ\bullet\) pair when \( k_0 \neq 0 \).
All other \( k_i \) are assumed to be nonzero; otherwise, adjacent identical patterns can be merged and the expression simplified.

If the pattern \((\circ\bullet)\) appears at the end of the sequence, it can be cyclically shifted and absorbed into a nonzero \(k_0\), due to the periodic boundary condition. In contrast, if the configuration contains no \((\circ\bullet)\) segments at all, it corresponds to the special case \(l = 1\), and is represented as
\begin{align}
\label{struc:stag_like_spec}
\overline{\circ\bullet}\bullet
(\circ\bullet\bullet)^{k_1-1} 
\underline{\circ\bullet}.
\end{align}
Although this form appears exceptional, the structure in Eq.~\eqref{struc:stag_like_spec} still corresponds to \( k_1/2 \) repetitions of the substructure in Eq.~\eqref{struc:stag_like_path_1}.

We now return to the general form in Eq.~\eqref{struc:stag_like_gen} and describe how it can be constructed from left to right using the elementary substructures.
The segment \((\circ\bullet)^{k_0}\) is realized by concatenating the substructure in Eq.~\eqref{struc:stag_like_path_3} exactly \(k_0\) times. The segment \((\circ\bullet\bullet)^{k_1}\) depends on the parity of \(k_1\): if \(k_1\) is even, it can be built from \(k_1/2\) repetitions of the substructure in Eq.~\eqref{struc:stag_like_path_1}; if \(k_1\) is odd, one uses \(\lfloor k_1/2 \rfloor\) copies of Eq.~\eqref{struc:stag_like_path_1} followed by a single copy of Eq.~\eqref{struc:stag_like_path_2}.
In the latter case, the $YIX$ segment introduced in Eq.~\eqref{struc:stag_like_path_2} must be repeated \(k_2\) times to account for the \((\circ\bullet)^{k_2}\) portion. If \(k_1\) was even, this \((\circ\bullet)^{k_2}\) segment is instead constructed by repeating the substructure in Eq.~\eqref{struc:stag_like_path_3} \(k_2\) times.

Continuing in this manner---alternating between \((\circ\bullet)\) and \((\circ\bullet\bullet)\) segments---and choosing the appropriate combinations based on the parity of each \(k_i\), one can construct any general configuration of the form given in Eq.~\eqref{struc:stag_like_gen}.
The fact that all staggered-like configurations can be constructed from these substructures implies that there always exists a cycle in the operator propagation: the Pauli string starting as $XXZ$ eventually returns to the same operator at the same sites. Moreover, since each substructure involves only transformations between Pauli strings of weight two or three, we conclude that cycles always appear in the support graph of the evolution operator, when projected onto the Pauli weight-3 subspace.

\subsection{Upper Bound of Liouvillian Gap}
\label{app:Lgap_upp_bound}

To estimate the spectral radius $\rho\!\left(\tilde{\Phi}^{\mathrm{trunc}}_F\right)$ of the truncated Floquet operator projected onto the Pauli weight-$\leq 3$ subspace, we recall Gelfand’s formula,
\begin{equation}
\rho\!\left(\tilde{\Phi}^{\mathrm{trunc}}_F\right)
= \lim_{n\to\infty} \bigl\| \bigl( \tilde{\Phi}^{\mathrm{trunc}}_F \bigr)^n \bigr\|^{1/n}.
\end{equation}
In particular, the spectral radius can be lower bounded by the growth rate of any nonvanishing matrix element:
\begin{equation}
\rho\!\left(\tilde{\Phi}^{\mathrm{trunc}}_F\right)
\ge
\limsup_{n\to\infty}
\left| \mathrm{Tr}\!\left[ P\, (\tilde{\Phi}^{\mathrm{trunc}}_F)^n(P) \right] \right|^{1/n},
\end{equation}
for any Pauli string $P$ with $\|P\|_2 = 1$. If the overlap accumulates coherently under iteration, that is, if
\begin{equation}
\mathrm{Tr}\!\left[P\, \bigl(\tilde{\Phi}^{\mathrm{trunc}}_F\bigr)^{mL}(P)\right] = c^m
\quad \text{for some } c \neq 0,
\end{equation}
then the bound simplifies to
\begin{equation}
\label{eq:specR_low_bound}
\rho\!\left(\tilde{\Phi}^{\mathrm{trunc}}_F\right)
\ge
\left| \mathrm{Tr}\!\left[
P\, \bigl( \tilde{\Phi}^{\mathrm{trunc}}_F \bigr)^L (P)
\right] \right|^{1/L}.
\end{equation}

In Appendix~\ref{app:struc_staggered_like_cycle}, we constructed such a Pauli operator $P$ explicitly within a class of staggered-like doping patterns, and proved that for any system size~$N$, the truncated evolution admits a length-$L$ cycle with $L = N/2$ such that this multiplicative structure holds exactly. While this establishes the existence of at least one such return path, it remains to determine whether the length-$N/2$ cycle is unique or whether multiple disjoint cycles contribute. We address this in the remainder of this subsection.

Let us fix as a reference the return cycle identified in Appendix~\ref{app:struc_staggered_like_cycle}, generated by the Pauli string $P = YXZ$, supported on a local configuration of the form $\big[\!\circ\!\bullet\!\bullet\!\circ\!\;\big]$.
We take this operator as a representative reference in our analysis. Even if the circuit deviates from the ideal brickwork arrangement, the same structure can be restored by conjugating the unitary block with Hadamard layers on both ends and taking its Hermitian adjoint. Since this modification involves only single-qubit transformations, it preserves the eigenvalue under the noise model. If no such configuration appears in the circuit, the system falls into the staggered regime, for which return cycles can be constructed using the methods of Appendix~\ref{app:struc_staggered}.

Under repeated Floquet evolution, the leftmost $Y$ propagates two sites to the left at each step. If the rightmost $Z$ does not shift by two sites to the left as well---for instance, if it moves one site to the right---then it proceeds to propagate two sites to the right at each subsequent Floquet step. This causes the left and right endpoints to diverge in opposite directions. Once the support crosses into undoped regions, the weight increases and the component becomes truncated. Therefore, any Pauli string that remains invariant under projection must be confined within three consecutive sites.

\begin{table}
    \centering
    \renewcommand{\arraystretch}{1.1}
    \begin{tabular}{c@{\quad}c@{\quad}c@{\quad}c}
         \hline
         \hline
         No. & Structure & Input strings & Output strings\\
         (i) &
         $\big[\!\circ\!\bullet\!\circ\!\bullet\!\;\big]$&
         $XXZ,XXY,XIX$& 
         $XXZ,YXZ$\\
         (i$'$) &
         $\big[\!\bullet\!\bullet\!\circ\!\bullet\!\;\big]$&
         $XXZ,XXY,XIX$&
         $XXZ,YXZ,ZXZ$\\
         (ii) &
         $\big[\!\bullet\!\circ\!\bullet\!\circ\!\;\big]$&
         $(YXZ,YIX)^*,(XXY,ZXZ,ZIX)^{**}$&
         $(PIP')^*,(PZP')^{**}$\\
         (ii$'$) &
         $\big[\!\bullet\!\circ\!\bullet\!\bullet\!\;\big]$&
         $(YXZ,YXY,YIX)^*,(XXZ,XXY,XIX,ZXZ,ZXY,ZIX)^{**}$&
         $(PIP')^*,(PZP')^{**}$\\
         (iii) &
         $\big[\!\circ\!\bullet\!\bullet\!\circ\!\;\big]$&
         $XXY,YXZ,YIX,ZXZ,ZIX$&
         $ZIP,XPP',YPP'$\\
         \hline
         \hline
    \end{tabular}
    \caption{Local Pauli transformations under one Floquet step for each four-site structure, assuming the three-site support constraint. 
    Here, $P, P'$ denote arbitrary nontrivial single-qubit Pauli operators ($X$, $Y$, or $Z$).
    Grouped expressions marked by $(\cdots)^*$ or $(\cdots)^{**}$ indicate that only transformations within the same group are allowed.}
    \label{tab:local_trans}
\end{table}

Under the three-site support constraint, all possible Pauli string transitions induced by a single Floquet step have been enumerated for every four-site local configuration. These are summarized in Table~\ref{tab:local_trans}.
However, not all of the input strings listed in Table~\ref{tab:local_trans} remain within the weight-$\leq 3$ subspace under repeated Floquet evolution. Whether a given Pauli string survives without truncation depends on the global circuit structure. This behavior is illustrated in \figref{fig:local_trans}.
As shown in \figref{fig:local_trans}, the range of transformations extends beyond those captured in Appendix~\ref{app:struc_staggered_like_cycle}, namely Eqs.~\eqref{eq:stag_like_path_1}--\eqref{eq:stag_like_path_3}. This broader connectivity arises from the inclusion of additional strings in the transitions: for example, the output $ZIX$ in case~(ii$'$), the input $YIX$ in case~(iii), and multiple appearances of $XXY$ as both input and output in cases~(i), (i$'$), and~(iii). Taking these into account, the list of possible return cycles can be refined as follows:
\begin{align}
&YXZ
\to
YIX
\xrightarrow{k\text{\;times}}
YIX
\to
\left\lgroup
\begin{array}{l}
YIX\\
ZIX\\
\end{array}
\right\rgroup
\to
\left\lgroup
\begin{array}{l}
XXY\\
XXZ\\
\end{array}
\right\rgroup
\to
YXZ,
\label{eq:stag_like_path_1_new}
\\
&YXZ
\to
YIX
\xrightarrow{k\text{\;times}}
YIX
\to
\left\lgroup
\begin{array}{l}
YIX\\
ZIX\\
\end{array}
\right\rgroup
\to
\left\lgroup
\begin{array}{l}
XXY\\
XXZ\\
\end{array}
\right\rgroup
\to
XXZ
\xrightarrow{k'\text{\;times}}
XXZ
\to
YXZ,
\label{eq:stag_like_path_2_new}
\end{align}
where $k,k' \in \mathbb{Z}_{\ge 0}$ denote the number of identity-preserving steps. If either $k$ or $k'$ equals zero, the corresponding transformation step is omitted.

\begin{figure}
    \centering
    \includegraphics[width=0.75\linewidth]{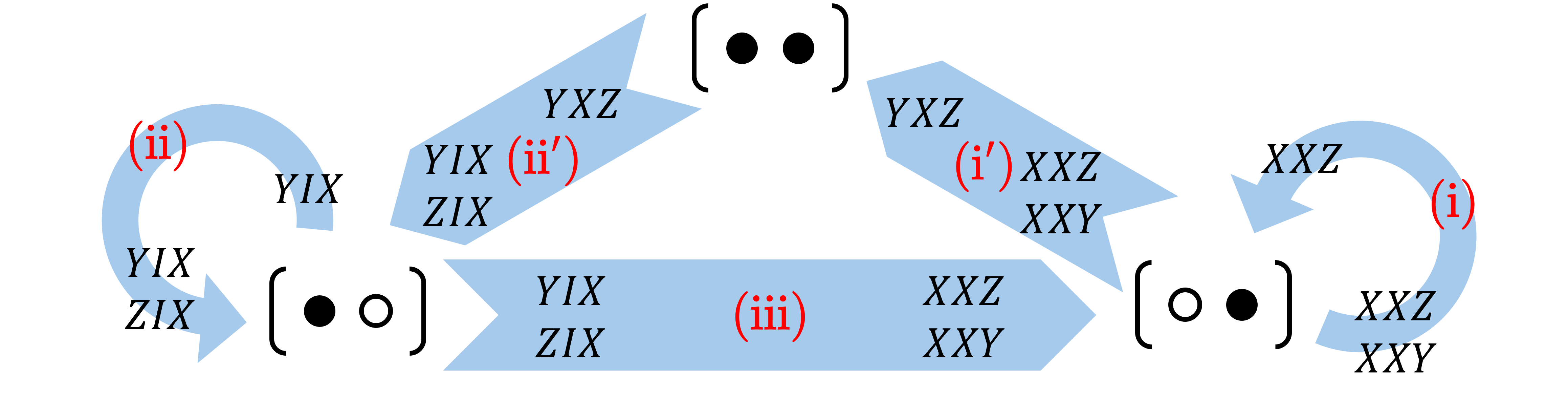}
    \caption{
    Diagram of transformation pathways for contiguous three-site Pauli strings that survive under repeated Floquet evolution. 
    Each arrow corresponds to a four-site substructure, constructed by joining a two-site core substructure on the left (from the set $\big[\!\bullet\!\bullet\!\;\big]$, $\big[\!\circ\!\bullet\!\;\big]$, or $\big[\!\bullet\!\circ\!\;\big]$) with one on the right, as indicated by the arrow direction (left to right). 
    These two-site patterns refer to the mid-two sites (positions 2 and 3) of the four-site configurations listed in Table~\ref{tab:local_trans}. 
    The Pauli operator at the tail of the arrow is mapped to the operator at the head through a single Floquet step. The red label on each arrow indicates the rule number in Table~\ref{tab:local_trans} used for that transformation.
}
    \label{fig:local_trans}
\end{figure}

We now compute the correlation function 
\[
\mathrm{Tr}
    \left[
    (YXZ_i)
    \,
    \tilde{\Phi}^{\mathrm{trunc}}_F\left(YXZ_{i+2n}\right)
    \right],
\]
where \( i \) denotes the starting site of the contiguous three-site Pauli string, and \( n \) is the total number of Floquet steps required for the string to return to a shifted version of itself, following the transformation paths described in Eqs.~\eqref{eq:stag_like_path_1_new} or~\eqref{eq:stag_like_path_2_new}, depending on the specific local substructure.
This correlation function can be represented as
\begin{align}
\label{eq:sub_lower_bound}
    \mathrm{Tr}
    \left[
    (YXZ_i)
    \,
    \tilde{\Phi}^{\mathrm{trunc}}_F\left(YXZ_{i+2n}\right)
    \right]
    =
    (A_{ZZ}B_{YY}+A_{ZY}B_{XY})
    (A_{YZ}B_{YX}+A_{YY}B_{XX})
    \prod_{i=1}^{m(n)}
    C_{i}
    .
\end{align}
We define a set \(\{C_i\}_{i=1}^{m(n)}\) of single-qubit Haar-random unitaries that appear at fixed path-independent positions within the cycle. Each \(C_i\) denotes a Pauli-basis coefficient, and the total number \(m(n)\) counts such positions once per Floquet step, even if they reappear along distinct paths. Although correlations may exist among the \(C_i\), they become negligible when taking the logarithmic Haar average, which is the focus of our analysis.  
In contrast, the coefficients \(A_{ij}\) and \(B_{ij}\) represent Haar-random unitaries associated with path-dependent segments that differentiate distinct return paths to the operator \(YXZ\), as illustrated in Eqs.~\eqref{eq:stag_like_path_1_new} and \eqref{eq:stag_like_path_2_new}.

We are now ready to apply the general spectral radius lower bound given in Eq.~\eqref{eq:specR_low_bound}, with \( P = YXZ \) and \( L = N/2 \). The trace term
\(
\mathrm{Tr}\!\left[
P\, \bigl( \tilde{\Phi}^{\mathrm{trunc}}_F \bigr)^L (P)
\right]
\)
can be expressed as a product of correlation functions of the form derived in Eq.~\eqref{eq:sub_lower_bound}, stitched together according to the return paths defined by the substructure.
As a result, the spectral radius admits the following lower bound:
\begin{align}
    \rho\!\left(\tilde{\Phi}^{\mathrm{trunc}}_F\right)
    \ge
    e^{-3\gamma}
    \left|
    \prod_{i=1}^{\alpha(L)}
    \left[
    (A_{ZZ;i}B_{YY;i}+A_{ZY;i}B_{XY;i})
    (A_{YZ;i}B_{YX;i}+A_{YY;i}B_{XX;i})
    \right]
    \prod_{j=1}^{\beta(L)}
    C_{j}
    \right|^{1/L},
\end{align}
where we have uniformly lower bounded each dissipation factor per Floquet step by \(e^{-3\gamma}\). The exponents \(\alpha(L)\) and \(\beta(L)\) count the number of contributing transformations and are at most proportional to \(L\), depending on the global structure of the Floquet operator.
Accordingly, the Liouvillian gap is upper bounded as
\begin{align}
    \Delta_{\text{stag}^*}
    \le 
    3\gamma
    + c,
\end{align}
where
\begin{align}
    c &=
    -\frac{1}{L} \sum_{i=1}^{\alpha(L)}
    \left[
    \log|A_{ZZ;i}B_{YY;i}+A_{ZY;i}B_{XY;i}|
    +
    \log|A_{YZ;i}B_{YX;i}+A_{YY;i}B_{XX;i}|
    \right]
    -
    \frac{1}{L} \sum_{i=1}^{\beta(L)}
    \log|C_i|.
\end{align}
Assuming $\alpha(L)$ and $\beta(L)$ are sufficiently large to allow averaging, we approximate
\begin{align}
    c \approx \bar{\alpha}(4 - 2\log 2) + \bar{\beta},
\end{align}
with $\bar{\alpha} := \alpha(L)/L$ and $\bar{\beta} := \beta(L)/L$. A detailed derivation of the Haar average is provided in Appendix~\ref{app:haar-average}.

Finally, we can bound the constant $c$ uniformly by analyzing which return paths maximize the averaged logarithmic factors.
The largest value is attained when the cycle spends the maximal fraction of steps on the repeating $\big[\;\!\bullet\;\!\circ\!\;\big]$ pattern where the string $YIX$ loops back to itself.
In this case, one finds $c=3$.
Note, however, that this configuration belongs to the staggered regime, for which the surviving string has weight~2 throughout the cycle; using the corresponding dissipation factor gives the tighter bound $\Delta=2\gamma+3$. For circuits in the staggered-like class $\mathrm{stag}^*$ where weight-3 segments necessarily appear, the largest slope-compatible contribution arises from repeating the three-step motif
(ii$'$)$\to$(iii)$\to$(i$'$) in \figref{fig:local_trans}.
This yields 
\begin{align}
c=\frac{10-2\ln 2}{3}\approx 2.9 \leq 3.
\end{align}
Therefore, in all cases we have the uniform bound $c\leq 3$, and hence
\begin{equation}
\Delta_{\mathrm{stag}^*}\le 3\gamma+3.
\end{equation}

\section{Cycles in Block-Staggered Doping Structures}
\label{app:struc_block_staggered}

\begin{table}[t]
    \centering
    \begin{ruledtabular}
    \begin{tabular}{c c l}
        $k$ & $w_\ast$ & Representative return cycle 
    \\[1pt]
    \hline
    \\[0pt]
    2
    &
    5
    & 
    $
    XXYXY
    \to 
    YZYXZ
    \to 
    ZIZIX
    $
    \\[1pt]
    3 
    &
    5
    &  
    $YZYXZ\to ZIZIX$
    \\[1pt]
    4    
    &
    7
    & 
    $
    XZIYXIZ
    \to
    YZXXZXY
    \to
    ZIZYIXZ
    \to
    XZZIXXY
    \to
    YZZIZXZ
    $
    \\[1pt]
    5   
    &
    8
    & 
    $
    XZYYZYXIX
    \to
    YZXZIXZXY
    \to
    ZIZYXYYXZ
    $
    \\[1pt]
    6     
    &
    7
    &  
    $
    YXXXIIXXY
    \to
    ZIIXZYXXZ
    \to
    XZZYXYIIX
    \to
    ZZZIIZZXY
    $
    \\
    &&
    $
    \to
    XZXZIZZIX
    \to
    YZIIYIIXY
    \to
    ZIXXIXXXZ
    $
    \\[1pt]
    7    
    &
    9
    &  
    $
    YIZYYYXIX
    \to
    ZIZIYXZXY
    \to
    XZYXXZYXZ
    \to
    YZXZYIZIX
    $
    \\[1pt]
    8     
    &
    11
    &  
    $
    ZIIXXIXXIYIIX
    \to
    XZZYXXXXIIXXY
    \to
    YZZIIZZXZYXXZ
    $
    \\
    &&
    $
    \to
    ZIIYIZZIZYIIX
    \to
    XZXXYZZZXIIXY
    \to
    YZIIYZZIIZZXZ
    $
    \\
    &&
    $
    \to
    ZIXXIXXZIZZIX
    \to
    XZXXIIXXXIIXY
    \to
    YYIIZXXXYZXXZ
    $
    \\
    \end{tabular}
    \end{ruledtabular}
    \caption{
        Representative return cycles of period \(L = N/2\) for minimal substructures containing \(k\) undoped qubits.
        Each cycle begins and ends at the same Pauli string after propagating through a repeating block-staggered configuration.
        For odd (even) \(k\), the cycle propagates \(k+1\) (\(2k+2\)) sites before returning to the original Pauli string.
        Here, each representative cycle is displayed in a convention where the initial Pauli string is right-aligned with a doped site, such that its rightmost nontrivial Pauli operator acts on a doped qubit.
    }
    \label{tab:block_staggered_cycles}
\end{table}

Table~\ref{tab:block_staggered_cycles} lists, for each block length \(k\), the minimal truncation weight \(w_\ast\) at which our construction yields a localized return cycle in the block-staggered pattern.
For each value of \(k\), only one representative example is shown, although multiple distinct return cycles may exist.
Using the procedure described in the main text, all such cycles can be generated efficiently by enumerating the allowed local transitions within the truncated subspace.
Each example is defined on a minimal repeating substructure; when this substructure is tiled to form the full circuit, the corresponding Pauli string returns to the same operator at the original position after \(L=N/2\) Floquet steps.
As in the staggered-like case, the existence of these localized return cycles is sufficient to fix the \(N\)-independent upper bound on the Liouvillian gap in the strong dissipation limit.

\end{document}